
 \documentclass[prx, superscriptaddress,longbibliography,margin=1mm 11pt]{revtex4-2} 

\usepackage{graphicx}
\usepackage{bm}
\usepackage{amsmath}
\usepackage{environ}
\usepackage{appendix}
\usepackage{physics}

\usepackage[nopar]{lipsum}

\makeatletter
\NewEnviron{widerequation}{%
  \begin{equation*}
  \sbox\z@{\let\label\@gobble$\displaystyle\BODY$}
  \makebox[\textwidth]{%
    \begin{minipage}{\dimexpr\wd\z@+3em}
    \vspace{-\baselineskip}
    \begin{equation}
    \BODY
    \end{equation}
    \end{minipage}%
  }
  \end{equation*}
}
\usepackage{amsthm}
\usepackage{amssymb}
\usepackage{amsfonts}
\usepackage{fancyhdr}
\usepackage{slashed}
\usepackage{bbm}
\usepackage{latexsym,epsfig,bbm}
\usepackage{todonotes}
\def \d {\mathrm{d}}
\newcommand{\littleo}{{\scriptstyle \mathcal{O}}}
\usepackage{color}
\definecolor{blue}{rgb}{0,0.2,1}

\definecolor{red}{rgb}{0.9,0,0}



\newtheorem{theorem}{Theorem}
\newtheorem{lemma}[theorem]{Lemma}
\newtheorem{definition}[theorem]{Definition}
\newtheorem{problem}{Problem}
\newtheorem{corollary}[theorem]{Corollary}

\newcommand{\vect}[1]{\boldsymbol{#1}}



\begin{document}

\title{Quantum algorithms for computing observables of nonlinear partial differential equations}

\date{\today}

\author{Shi Jin}
\affiliation{Institute of Natural Sciences, Shanghai Jiao Tong University, Shanghai 200240, China}
\affiliation{School of Mathematical Sciences, Shanghai Jiao Tong University, Shanghai 200240, China}
\affiliation{Ministry of Education, Key Laboratory in Scientific and Engineering Computing, Shanghai Jiao Tong University, Shanghai 200240, China}

\author{Nana Liu}
\email{Corresponding author:  nana.liu@quantumlah.org}
\affiliation{Institute of Natural Sciences, Shanghai Jiao Tong University, Shanghai 200240, China}
\affiliation{Ministry of Education, Key Laboratory in Scientific and Engineering Computing, Shanghai Jiao Tong University, Shanghai 200240, China}
\affiliation{University of Michigan-Shanghai Jiao Tong University Joint Institute, Shanghai 200240, China}

\begin{abstract}
We construct quantum algorithms to compute physical observables of nonlinear PDEs with $M$ initial data. Based on an exact mapping between nonlinear and linear PDEs using the level set method, these new quantum algorithms for nonlinear Hamilton-Jacobi and scalar hyperbolic PDEs can be performed with a computational cost that is independent of  $M$, for arbitrary nonlinearity. Depending on the details of the initial data, it can also display up to exponential advantage in both the dimension of the PDE and the error in computing its observables. For general nonlinear PDEs, quantum advantage with respect to $M$ is possible in the large $M$ limit. 
\end{abstract}
\maketitle 


\section{Introduction}

Nonlinear ordinary and partial differential equations (ODEs and PDEs) have been central to modelling of some of the most significant problems in physics, chemistry, engineering, biology and finance, including climate modelling, aircraft design, molecular dynamics and drug design, deep learning neural networks and financial markets. In physics, the most important mathematical equations--from quantum mechanics, classical mechanics to kinetic theory and hydrodynamics--are all modelled by linear or nonlinear (integro)-differential equations.  Although quantum algorithms can be potentially advantageous for certain linear problems like linear PDEs (e.g. \cite{clader2013preconditioned, childs2021high, costa2019quantum, linden2020quantum, engel2019quantum, cao2013quantum}) it is still unclear to what extent quantum algorithms can be leveraged for nonlinear problems. Although quantum mechanics itself is fundamentally linear (as far as we know), most natural phenomena--and their associated mathematical equations or models--are {\it nonlinear}, hence the ability to simulate nonlinear problems--including the nonlinear PDEs--will
significantly extend the horizon of quantum computing. \\

The most natural way to approach a nonlinear problem using quantum algorithms is to find a way to represent the nonlinear problem in a linear way, where quantum computational advantage in the former problem can still be maintained.  Here we distinguish between two types of approaches that converts a nonlinear PDE into a linear PDE. One approach involves approximations (e.g. either through linearisation of the nonlinearity or through discretisation). They include the \textit{linear approximation} and the \textit{linear representation of nonlinear ODEs} methods. The second approach is the \textit{linear representation for nonlinear PDEs}, where no approximations are required to map between the nonlinear and linear PDEs. \\

In the linear approximation approach, in which the nonlinear term is {\it linearized}, errors are introduced, so the approach may only be valid for a short time, for weak nonlinearities and consequently may lose significant nonlinear features of the problem after a long time.   On the other hand, approaches like Carlemann linearisation \cite{liu2021efficient} or in \cite{lloyd2020quantum, leyton2008quantum}, require that the size of the corresponding linear problem to increase in a way that is dependent on the degree of nonlinearity and is also restricted to polynomial nonlinearities. This requirement of expanding into ever higher dimensions to deal with strong nonlinearities comes at a sacrifice of the (sometimes significant) resource cost. For general nonlinear functions that do not have low-order polynomial expansions, this can be infeasible and important  nonlinear features can be lost in the long time limit due to the truncation in the Carlemann linearization. This is similarly true for methods in \cite{lloyd2020quantum, leyton2008quantum}.\\

The Koopman-von Neumann approach \cite{joseph2020koopman} (a similar approach taken in \cite{dodin2021applications}), is a type of  {\it linear representation method for nonlinear ODEs}, though not for nonlinear PDEs. This means it allows one to convert nonlinear ODEs into a linear transport equation, without any loss of information, thus is valid globally in time, for any nonlinearity. However, to apply this method to solve nonlinear PDEs, one needs to first discretise its spatial derivatives, giving rise to   a system of nonlinear ODEs which can then be implemented via the Koopman-von Neumann approach. This, unfortunately, would give rise to ODE systems with exceedingly large dimensions, where the dimension depends on the discretisation error, hence the quantum advantages can be lost. Since one needs to discretize the nonlinear PDE first, in this regard this method is not a linear representation method for PDEs, as we defined above. \\


In this paper, for several important  classes of nonlinear PDEs, including  the Hamilton-Jacobi equations and scalar nonlinear hyperbolic  equations,  we  map the underlying nonlinear PDEs of $(d+1)$-dimension to a (not more than) $(2d+1)$-dimensional linear PDEs, by utilizing the level set formalism \cite{JL03}. One can then solve the {\it linear} PDEs -- whose dimension is at most twice that of the original nonlinear PDE -- on a quantum computer, without losing any physical information.  This procedure is {\it exact}, so no approximations are made. This is an example of \textit{linear representation method for nonlinear PDEs}.\\

We also find another pleasing bonus of this approach, thanks to the linearity being valid globally in time, in that one can run problems with many ($M>>1$) different initial data, and obtain ensemble averages of the physical observables at a later time, by solving the PDE {\it just once!}\\

\begin{figure}[t] \label{fig:Nonlinearchart}
\includegraphics[width=15cm]{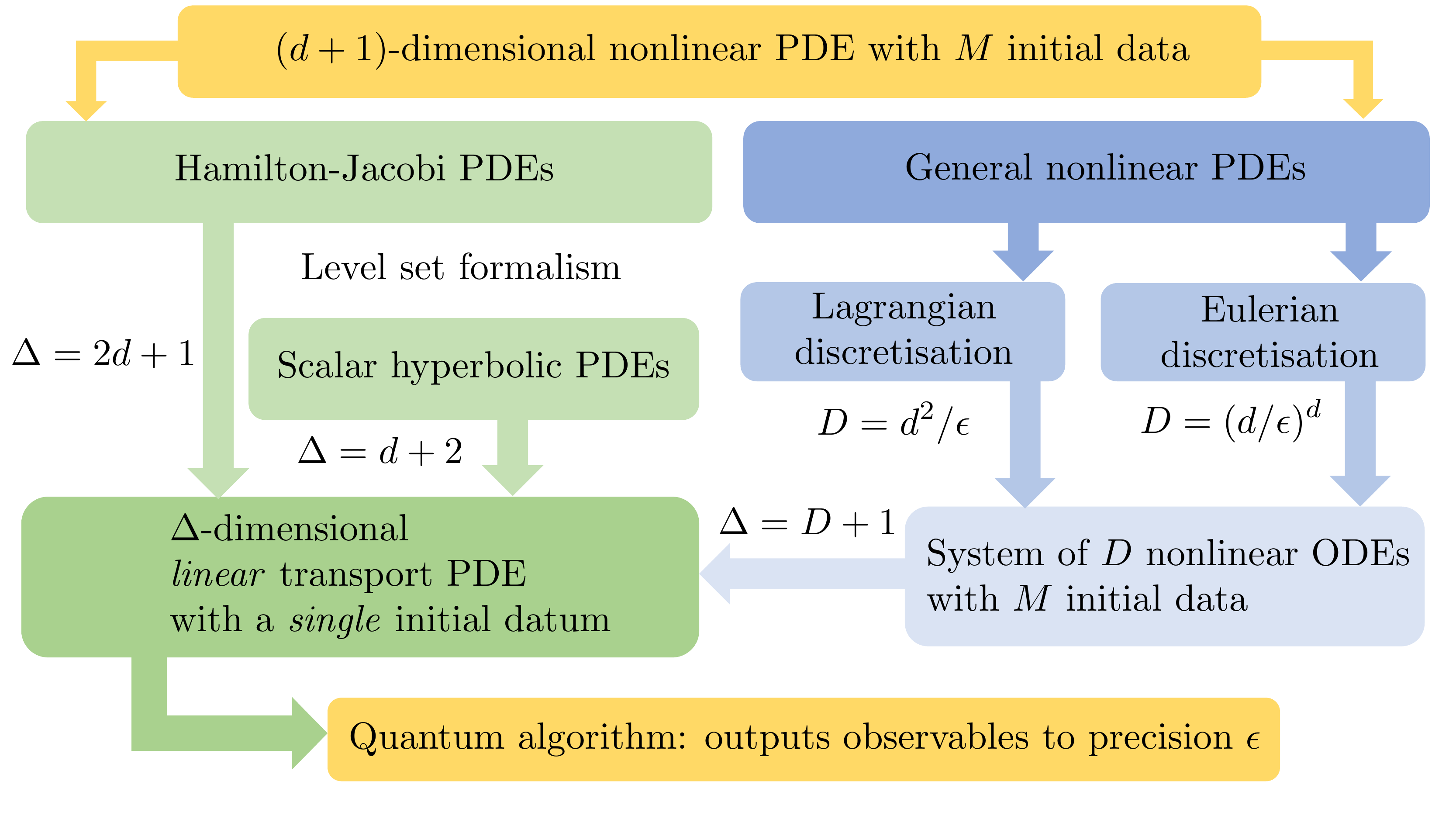}
\centering
\caption{Computing observables from $(d+1)$-dimensional nonlinear PDE with $M$ initial data (a) \textit{Linear representation for nonlinear PDE approach (Green)}: Using the level set formalism, $(d+1)$-dimensional Hamilton-Jacobi PDEs (Section~\ref{sec:HJ}) with $M$ initial data are mapped onto a $(2d+1)$-dimensional linear transport PDE with a single initial datum. Similarly, $(d+1)$-dimensional scalar hyperbolic PDEs (Section~\ref{sec:hyperbolic}) with $M$ initial data are mapped onto a $(d+2)$-dimensional linear transport PDE with a single initial datum; (b) \textit{Linear representation for nonlinear ODE approach (Blue)}: $(d+1)$-dimensional general nonlinear PDEs can be mapped onto $D$ nonlinear ODEs with $M$ initial data using either the Lagrangian or Eulerian discretisation methods (Section~\ref{sec:generalnonlinear}). The system of $D$ nonlinear ODEs with $M$ initial data can be mapped onto a $(D+1)$-dimensional linear transport PDE with a single initial datum and ensemble averages can be computed with a quantum algorithm (Section~\ref{sec:ODEs}).}
\end{figure}

Using this linear representation for nonlinear PDEs, we develop quantum algorithms to compute physical observables of general $(d+1)$-dimensional nonlinear Hamilton-Jacobi and scalar hyperbolic PDEs with multiple ($M$) initial data. In addition to being of practical interest, computing observables also allows one to compare quantum computational costs with classical computational costs on an equal footing, since they are solving the \textit{same} problem. This is in contrast to many previous quantum algorithms for ODEs and PDEs (e.g. \cite{berry2014high, berry2017quantum, childs2021high}) that focus only on the cost of quantum state preparation of the PDE solutions, {\it without} considering (potentially large) costs in extracting those solutions and interpreting them. Furthermore, this level set linear representation allows us to introduce a new embedding of classical data into a quantum state which we call the level set embedding. This embedding is shown to be advantageous in computing physical observables compared to the amplitude embedding commonly used. \\

When $M=1$, we show that, for certain classes of initial data, there can be up to exponential quantum advantage in computing observables to precision $\epsilon$, with respect to both $d$ and $\epsilon$, with no large overheads in any other parameter, including time $T$. The quantum cost is also \textit{independent} of the nature of the nonlinearity. For $M>1$, our algorithm computes ensemble averages over $M$ different initial data {\it in one computation}, instead of computing the problem $M$-times. Hence the quantum resource cost is \textit{independent} of $M$, while classical costs are linear in $M$. The same advantages in $d$ and $\epsilon$ still apply. \\

There are important applications where computing a PDE or ODE with multiple initial data ($M>1$) is of interest. For example, one needs to run numerical simulations with many different initial data to obtain ensemble averaged solutions in uncertainty quantification with random initial data, using Monte-Carlo sampling techniques or stochastic collocation \cite{Mishira-UQ}. In Rayleigh-Taylor instability one needs to run many experiments with different initial data to obtain an ensemble averaged numerical solution that can converge (to the Young-measure solution \cite{MishTad}). In geometric optics one needs to solve multiple rays that follows a Hamiltonian system  (eikonal equation) with different initial data \cite{Can-Ying}.  In quantum wave packet methods for quantum dynamics simulations,  one needs to solve for  multiple Gaussian wave packets, each evolving by a system of nonlinear ODEs \cite{Heller}, and later seek their linear superposition. We show that our quantum algorithms {\it always} have
quantum advantage in $M$, under various conditions. \\

For more general nonlinear PDEs, the idea of first discretising the spatial derivatives to convert them into system of nonlinear ODEs is also discussed. Two approximation techniques,
the Lagrangian methods, such as particle, vortex or mesh-free methods that do not use grids, and Eulerian methods which use grid-based discretisations of spatial derivatives, are studied. We show that using this approach, there can only be quantum advantage in the large $M$ limit, while there are no quantum advantages in $d$ and $\epsilon$. Thus computing ensemble averages for general nonlinear PDEs using quantum algorithms still requires further work to demonstrate useful quantum advantages. \\

The outline of the paper is as follows. We begin in section \ref{section-2}  with the background to nonlinear Hamilton-Jacobi and scalar hyperbolic PDEs and a system of ODEs with multiple initial conditions. We finish with a basic summary of quantum protocols for solving linear systems of equations. Readers familiar with these areas can skip this section. In Section~\ref{sec:HJ} we present the linear representation method using level sets and the quantum algorithm used to compute the physical observables. We compare the classical and quantum computational costs and derive the conditions under which there are quantum advantages.  In Section~\ref{sec:hyperbolic} we repeat the analysis for scalar nonlinear hyperbolic equations. The quantum query and gate complexities for computing ensemble averages of a system of nonlinear ODEs is presented in Section~\ref{sec:ODEs}, and the application of this method to more general nonlinear PDEs is in Section~\ref{sec:generalnonlinear}. \\

In Figure~\ref{fig:Nonlinearchart} we give a basic  outline of the methods considered in this paper. Table~\ref{table:nonlin} gives a summary of the comparison between classical and quantum resource costs in computing ensemble averages for nonlinear PDEs and ODEs studied in this paper. Throughout the paper, we always use first order approximations to space and time derivatives for all systems under consideration. Extensions to higher order approximations can be done in a straightforward way and won't be pursued here.

    \begin{table}[ht] 
\caption{Quantum ($\mathcal{Q}$) and classical ($\mathcal{C}$) cost comparison in computing observables at time $T$ to precision $\epsilon$} 
\centering 
\begin{align}
    & \text{PDE: }\mathcal{O}\left(\frac{\mathcal{C}}{\mathcal{Q}}\right)=\tilde{O}\left(\frac{M}{T^2}d^{r_1}\left(\frac{1}{\epsilon}\right)^{r_{2}}\right) \nonumber \\
     & \text{ODE: }\mathcal{O}\left(\frac{\mathcal{C}}{\mathcal{Q}}\right)=\tilde{O}\left(\frac{M}{T^2}D^{r_1}\left(\frac{1}{\epsilon}\right)^{r_{2}}\right) \nonumber 
    \end{align} 
    
\begin{tabular}{c c c c c c c} 
\hline\hline 
Nonlinear equations  & $r_1$ & & $r_2$ & $b$ range  & & Quantum \\
($M$ initial data) & & & & (initial data-& & advantage  \\ 
 & & &  & dependent)& & (possible)  \\[2ex]
\hline 
\\
$(d+1)$-dimensional & $d-4-b$ & & $d-9-3b$ & $b \in[0, \frac{d}{3}-3)$ &  & $M$, $d$, $\epsilon$\\ 
Hamilton-Jacobi PDE & & & & & & \\[3ex]
$(d+1)$-dimensional   & $d-5-b$ & & $d-9-3b$ & $b \in[0, \frac{d}{3}-3)$ &  & $M$, $d$, $\epsilon$ \\
hyperbolic PDE & & & & & & \\[3ex]
System of $D$ ODEs & $-5$ & & $-9$ & $b=0$ &  & $M$\\ [3ex]
$(d+1)$-dimensional general PDE  & $-7$ & & $-13$ & $b=0$ &  & $M$ \\ 
 (Lagrangian discretisation) & & & & & & \\[3ex]
$(d+1)$-dimensional general PDE & $-4d $&  & $-9-4d$ & $b=0$ &  & Large $M$\\
(Eulerian discretisation) & & & & & & \\[3ex] 
\hline 
\end{tabular}

\label{table:nonlin} 
\end{table}

\section{Background}\label{section-2}
We introduce nonlinear PDEs, in particular Hamilton-Jacobi and hyperbolic PDEs, and a system of nonlinear ODEs. We also provide the background of the relevant quantum algorithms.

\subsection{Nonlinear PDEs}

    Nonlinear partial difference equations can be written in the following general form

\begin{equation}
    \frac{\partial u}{\partial t} + F(u, \nabla u, \nabla^2 u, \cdots)=0, \quad t\in \mathbb{R}^+, x\in \mathbb{R}^d, u\in \mathbb{R}^d
\end{equation}
Here $t\ge 0$ is time, $x$ is the spatial variable, while $F$ is a nonlinear function or functional. \\

For most differential equations, linear or nonlinear, numerical computations have been the most important tools to solve them, since analytical solutions seldom exist. In fact, classical algorithms--referring  to numerical algorithms using classical computers--for differential equations have been among the most important achievements in scientific computing in more than half a decade. \\

Different partial differential equations have drastically different behavior in their solutions, from different regularities (smoothness) 
to different physical behaviors (e.g. conservation, invariances, entropy conditions), thus they call for drastically different numerical strategies. In this paper, we will mainly focus on first order quasi-linear PDEs (such as nonlinear Hamilton-Jacobi and
hyperbolic equations). \\

For more general PDEs, one can write it as a system of ODEs by first discretising in space. There are two classes of discretisation methods. One is the so-called Eulerian framework, in which one discretises the spatial variables by a finite difference, finite element, finite volume,
or spectral method, on a {\it fixed} mesh. The advantage of Eulerian methods is their high order accuracy but they suffer from the curse-of-dimensionality in high spatial dimensions. The other is the Lagrangian method, including the so-called particle or mesh-free method. This is popular for high dimensional problems, for example particle or Monte-Carlo methods for kinetic equations (the Boltzmann equation and Vlasov type equations), and vortex methods for incompressible
Euler and Navier-Stokes equations in fluid dynamics. The Lagrangian methods do not suffer from the curse-of-dimensionality but they are of lower order (typically only half to first order) methods. We briefly discuss the two methods in Section~\ref{sec:generalnonlinear}.\\

In our notation throughout the paper, we use $O(\cdot)$ to denote the case where all constant factors are suppressed unless otherwise stated. $\mathcal{O}(\cdot)$ is the standard big-O notation denoting a tight asymptotic upper bound and $ {\scriptstyle \mathcal{O}}(\cdot)$ is the standard little-O notation, which is a looser asymptotic upper bound. $\tilde{O}(\cdot)$ indicates that in addition, all logarithmic factors are suppressed. The notation $``(\cdot) \sim (\cdot)"$ denotes equivalence with constant factors ignored. 

\subsubsection{Hamilton-Jacobi PDEs}\label{sec:H-J}

A distinct feature of first-order quasi-linear PDEs, including Hamilton-Jacobi or nonlinear hyperbolic equations, is that solutions can become {\it singular} even if the initial data is smooth. For nonlinear hyperbolic equations, shocks (solution becomes discontinuous) may develop, while for Hamilton-Jacobi equations the solution may form cusps (at which points the derivatives of the solution becomes discontinuous) corresponding to caustics in geometric optics. \\

When solutions become singular, one needs to make sense of the equations since the derivative terms are no longer well-defined.
In this regard one either uses the notion of {\it viscosity} solutions, which are physically relevant in applications such in gas dynamics for compressible Euler equations \cite{Lax-notes}, or optimal control
using the Hamilton-Jacobi-Bellman equation \cite{CL83, OS88, lasry2007mean}. Another notion is the
{\it multi-valued} solutions, which are relevant to applications such as semi-classical quantum dynamics, and geometric optics, in which the dynamics is time-irreversible and the solution satisfies the linear superposition principle \cite{Whitham, SpMaMa, JO03}.
When the solutions are smooth, both notions define the same solution, but not while singularities emerge. 
It is the latter case that will be considered in this paper.\\

\noindent Hamilton-Jacobi equations arise for instance in geometric optics, the semiclassical limit of the Schr\"odinger equation, the level set formulation of front propagation, optimal control, mean-field games, sticky particles or pressureless gases and KPZ equations.  It has the following general form 
\begin{eqnarray}\label{H-J}
 && \partial_t S^{[k]}+ H(\nabla S^{[k]}, x) = 0, \quad t\in \mathbb{R}^+,  x \in \mathbb{R}^d, S^{[k]}(t,x)\in \mathbb{R},\\
 &&S^{[k]}(0,x)=S_0^{[k]}(x), \quad k=1,...,M
\end{eqnarray}
subject to $M$ different initial data, where the same Hamilton-Jacobi PDE is satisfied for each $k$. \\

A direct numerical approximation to Eq.~\eqref{H-J} usually gives rise to the so-called viscosity solution \cite{CL83, OS88}. This notion of the solution is not valid in geometric optics (multiple arrivals in seismic waves \cite{FS02}, for example), in the semiclassical limit of quantum dynamics, or in the high frequency limit of linear wave equations
(elastic waves, electromagnetic waves, etc.) \cite{ER03, SpMaMa, JMS11}, since it violates the linear superposition principle. In these applications, one is interested in computing the {\it multi-valued} solution,  \cite{SpMaMa, JL03, JO03, JLOT1}. There are several classes of  algorithms that were developed to capture such solutions:
\begin{itemize}
    \item Ray tracing \cite{Gla89, Ben96}. This is based on solving the characteristics
    of the system, which is a Hamiltonian system
    \begin{equation}\label{H-S}
    \partial_t x(t) = \nabla_p H(x,p)\,, \quad
    \partial_t p(t) = -\nabla_x H(x,p)\,.
    \end{equation}

The advantage of this method is its simplicity since it just solves  a system of ODEs. The disadvantage of such Lagrangian type methods is that particles are not uniformly distributed, hence at later time there may be regions where there are not enough particles to guarantee numerical accuracy and one needs to add more particles and then use interpolations to define these particle, which are quite delicate.
  \item Moment methods \cite{ER96, JL03}. For multivalued solutions one can use moment systems, which are superposition of Hamilton-Jacobi equations. The advantage of this method is that one stays in the physical space, however, the moment systems are difficult to derive in higher-dimensions and, in particular,  one needs to know, {\it a priori}, the number of branches in order to have the right number of moments. This is pretty much an impossible task.
  \item Level set formulation. Here one builds the gradient of $S$  of Eq.~\eqref{H-J} into the zero level sets (defined later) of functions which solves a system of Liouville equations \cite{JO03, CLO, JLOT1}. This method is globally valid and one solves a {\it linear} system of PDEs. Its disadvantage, for classical computers, is the curse-of-dimensionality since the equations are defined in the phase space, hence the dimension is doubled. Since the curse-of-dimensionality can in cases be resolved with quantum computers, and thanks to the linearity of the system, this is the approach we advocate in this paper.
\end{itemize}

\noindent Define $u^{[k]}=\nabla S^{[k]} \in \mathbb{R}^d$. Then $u^{[k]}$ solves a hyperbolic system of conservation laws in gradient form:
\begin{eqnarray}\label{forced-Burgers}
 && \partial_t u^{[k]} + \nabla H(u^{[k]}, x) = 0, \\
 &&u^{[k]}(0,x)=\nabla S^{[k]}_0(x).
\end{eqnarray}

\noindent Two classical examples of Hamiltonians are
\begin{equation}\label{H-newton}
H(x, p) = \frac{p^2}{2} + V(x),
\end{equation}
corresponding to classical Newtonian particles, and
\begin{equation}\label{H-GO}
H(x, p) = c(x) |p| 
\end{equation}
that arises in geometric optics and the level set formulation of front propagation \cite{ER03, OS88}, in which $c(x)$ is the reciprocal of the index of reflection or the speed of a propagating front in the normal direction. \\

When we encounter multi-valued solutions of Eq.~\eqref{forced-Burgers}, we define an ensemble average in Section~\ref{sec:HJ}, which can be identified as physical observables of the system. Then we show how it can be computed with a quantum algorithm. The classical cost for computing these observables is in the following lemma. 

\begin{lemma} \label{lem:HJ}
If a finite difference or finite volume method is used based on  a regular mesh using  $N_{\text {HJ}}$ spatial points  in each dimension, with mesh size $h_{\text {HJ}}=1/N_{\text {HJ}}$, and time step 
$\Delta t_{\text {HJ}}=O(h_{\text {HJ}}/d)$ (due to the CFL stability condition), then the computational cost of solving the Hamilton-Jacobi equation
Eq.~\eqref{H-J} is $O(d N_{t,\text{HJ}}N_{\text{HJ}}^d)=O(d^2 T N_{\text {HJ}}^{d+1})$.  To reach an error tolerance of $\epsilon_{\text{HJ}}$ one needs $N_{\text {HJ}}=O(d/\epsilon_{\text{HJ}})$, hence the total cost is $O(Td^{d+3}(1/\epsilon_{\text{HJ}})^{d+1})$. If one is interested in computing the ensemble average of $S^{[k]}, (k=1, \cdots, M)$ in the presence of $M$ different initial data, then the cost will be  $O(MTd^{d+3}(1/\epsilon_{\text{HJ}})^{d+1})$. The cost of computing the ensemble average of $u^{[k]}$ is $O(MTd^{d+4}(1/\epsilon_{\text{HJ}})^{d+1})$. 
\end{lemma}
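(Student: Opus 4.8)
The plan is to treat this as a resource-counting argument, since the lemma asserts a complexity bound for a standard first-order grid-based scheme. The total cost factorises as (number of spatial grid points) $\times$ (number of time steps) $\times$ (arithmetic cost per grid-point update), so I would estimate each factor in turn and then convert the discretisation parameters $h_{\text{HJ}}$ and $\Delta t_{\text{HJ}}$ into error-controlled quantities.

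First I would count the grid: a regular mesh with $N_{\text{HJ}}$ points per coordinate direction in $\mathbb{R}^d$ has $N_{\text{HJ}}^d$ points, and the number of time steps is $N_{t,\text{HJ}} = T/\Delta t_{\text{HJ}}$. Using the stated CFL restriction $\Delta t_{\text{HJ}} = O(h_{\text{HJ}}/d) = O(1/(d N_{\text{HJ}}))$ --- the usual requirement that $\Delta t_{\text{HJ}}$ times the sum of the $d$ directional wave speeds divided by $h_{\text{HJ}}$ be $O(1)$ --- gives $N_{t,\text{HJ}} = O(d T N_{\text{HJ}})$. For the per-point cost, one update of a monotone first-order scheme for $\partial_t S + H(\nabla S,x)=0$ requires forming the numerical gradient $\nabla S$, i.e.\ one-sided differences in all $d$ directions and an $O(d)$ evaluation of $H$, hence $O(d)$ operations. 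Multiplying the three factors gives $O(d)\cdot N_{\text{HJ}}^d \cdot O(d T N_{\text{HJ}}) = O(d^2 T N_{\text{HJ}}^{d+1})$, the first claimed bound.

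Next I would translate the accuracy requirement into a resolution. For a first-order scheme the local truncation error in each coordinate direction is $O(h_{\text{HJ}})$; summing the $d$ directional contributions and propagating in time gives a spatial error $O(d\, h_{\text{HJ}})$, while the temporal contribution $O(\Delta t_{\text{HJ}}) = O(h_{\text{HJ}}/d)$ is subdominant. Imposing total error $\leq \epsilon_{\text{HJ}}$ forces $h_{\text{HJ}} = O(\epsilon_{\text{HJ}}/d)$, i.e.\ $N_{\text{HJ}} = O(d/\epsilon_{\text{HJ}})$. Substituting into $O(d^2 T N_{\text{HJ}}^{d+1})$ yields $O\!\left(T d^2 (d/\epsilon_{\text{HJ}})^{d+1}\right) = O\!\left(T d^{\,d+3}(1/\epsilon_{\text{HJ}})^{d+1}\right)$. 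The $M$ initial data $S_0^{[k]}$ enter only through the initial condition, so computing the ensemble average requires rerunning the scheme once per datum, contributing a factor $M$ and the bound $O\!\left(M T d^{\,d+3}(1/\epsilon_{\text{HJ}})^{d+1}\right)$. For the ensemble average of $u^{[k]} = \nabla S^{[k]}$ one instead advances the $d$-component system Eq.~\eqref{forced-Burgers}, whose per-point update costs $O(d)$ times the scalar update, i.e.\ $O(d^2)$; this introduces one extra factor of $d$, giving $O\!\left(M T d^{\,d+4}(1/\epsilon_{\text{HJ}})^{d+1}\right)$.

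The only genuinely delicate point --- rather than pure bookkeeping --- is making the dimension dependence in the error estimate honest: one must argue that the spatial error really accumulates linearly in $d$ (so that $N_{\text{HJ}}\sim d/\epsilon_{\text{HJ}}$ rather than $N_{\text{HJ}}\sim 1/\epsilon_{\text{HJ}}$), that the temporal error is indeed dominated once $\Delta t_{\text{HJ}} = O(h_{\text{HJ}}/d)$, and that the first-order monotone scheme converges with this rate to the physically relevant (viscosity or multi-valued) solution of Eq.~\eqref{H-J} rather than to a spurious limit. Everything else is arithmetic.
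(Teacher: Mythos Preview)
Your proposal is correct and follows essentially the same approach as the paper: both arguments are pure resource-counting, factoring the cost as $O(d)$ work per grid point times $N_{\text{HJ}}^d$ points times $N_{t,\text{HJ}}=O(dTN_{\text{HJ}})$ time steps from the CFL condition, then using the first-order truncation error $O(d/N_{\text{HJ}})$ to fix $N_{\text{HJ}}=O(d/\epsilon_{\text{HJ}})$, and finally appending the factor $M$ for multiple data and one extra factor $d$ for the vector-valued $u^{[k]}$. Your write-up is in fact slightly more explicit than the paper's (which simply asserts the $O(dN_{t,\text{HJ}}N_{\text{HJ}}^d)$ count ``by direct counting''), and your closing caveat about the linear-in-$d$ error accumulation is a fair remark on where the heuristic lies.
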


\begin{proof}
The cost of $O(dN_{t, \text{HJ}}N_{\text {HJ}}^d)$ is obtained by directly counting. If one computes to time 
$T=N_{t,{\text {HJ}}} \Delta t_{\text {HJ}}$, since $\Delta t=O(h_{\text {HJ}}/d)$ (due to the CFL condition) and $h_{\text {HJ}}=1/N_{\text {HJ}}$ one also obtains 
$O(d^2 T N_{\text {HJ}}^{d+1})$. Since the truncation error of a first order method to approximate the derivative in each dimension is of $O(1/N_{\text {HJ}})$, and there are $d$ differential operators to be discretised, the total truncation error is of  $O(d/N_{\text {HJ}})$.
Hence to reach the error $\epsilon_{\text {HJ}}$ one needs $N_{\text {HJ}}=O(d/\epsilon_{\text {HJ}})$. This means the cost is $O(d^2 T N_{\text {HJ}}^{d+1})=O(Td^{d+3}(1/\epsilon_{\text {HJ}})^{d+1})$. If one is interested in computing the ensemble average of $S^{[k]}. (k=1, \cdots, M)$ in the presence of $M$ different initial data, the total cost will then be $O(MTd^{d+3}(1/\epsilon_{\text {HJ}})^{d+1})$. The cost of computing the  ansemble average of $u^{[k]}$ is then  $O(MTd^{d+4}(1/\epsilon_{\text {HJ}})^{d+1})$ since $u^{[k]}$  is $d$-dimensional. 
\end{proof}

\bigskip


\noindent{\bf Remark:} Spectral methods are usually not used for Hamilton-Jacobi solutions since the solutions develop  singularities (caustics) and spectral methods introduce numerical oscillations.\\

\subsubsection{Nonlinear scalar hyperbolic PDEs}

 Nonlinear hyperbolic PDEs arise for instance in gas dynamics, combustion, magnetohydrodynamics, shallow water and traffic flows. Here we focus on the scalar equation where $u^{[k]}(t,x)\in \mathbb{R}$ is a scalar solving an initial
value problem of an $(d+1)$-dimensional first-order hyperbolic
PDE with a non-zero source term
\begin{eqnarray}\label{hyp-PDE}
  &\partial_t u^{[k]} + F(u^{[k]}) \cdot \nabla_x u^{[k]} + Q(x,u^{[k]})=0, \quad t\in \mathbb{R}^+, \quad x\in \mathbb{R}^d,\\
  \label{hyp-IC}
   & u^{[k]}(0,x)=u^{[k]}_0(x), \quad k=1,...,M.
  \end{eqnarray}
  Here $F(u^{[k]}): \mathbb{R} \to \mathbb{R}^d$ is a vector and
  $Q: \mathbb{R}^{d+1} \to \mathbb{R}$ is the source
  term. This equation includes any such hyperbolic
  PDE in conservative or non-conservative form. The cost in solving this equation is essentially the same as 
  those in Lemma \ref{lem:HJ} for  $S^{[k]}$.
  
  \begin{lemma} \label{lem:classicalhyperbolic}
  If a finite difference or finite volume method is used with a regular mesh using $N_{hyp}$ points per space dimension, the computational cost of solving 
Eq.~\eqref{hyp-PDE} with $M$ initial data in Eq.~\eqref{hyp-IC} and error tolerance of $\epsilon_{\text {hyp}}$ is  $O(MTd^{d+3}(1/\epsilon_{\text {hyp}})^{d+1})$.
  \end{lemma}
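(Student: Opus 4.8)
The plan is to mirror the counting argument of Lemma~\ref{lem:HJ}, since Eq.~\eqref{hyp-PDE} has the same first-order quasi-linear structure and is discretised on the same type of regular Cartesian mesh. First I would fix a single initial datum ($M=1$) and count the work of one explicit time step: on the $N_{hyp}^d$ grid points one must, for each of the $d$ coordinate directions, form a finite-difference (or finite-volume flux-difference) approximation of the corresponding component of $F(u^{[k]})\cdot\nabla_x u^{[k]}$ and add the source $Q(x,u^{[k]})$, an $O(1)$ operation per point; this gives $O(d N_{hyp}^d)$ arithmetic operations per step. The number of time steps needed to reach time $T$ is $N_{t,\text{hyp}} = T/\Delta t_{\text{hyp}}$, and the CFL stability restriction for an explicit scheme on a mesh of size $h_{\text{hyp}} = 1/N_{hyp}$ in $d$ dimensions forces $\Delta t_{\text{hyp}} = O(h_{\text{hyp}}/d)$ (the sum of the $d$ directional Courant numbers must stay bounded), so $N_{t,\text{hyp}} = O(d T N_{hyp})$ and the cost for one initial datum is $O(d N_{hyp}^d)\cdot O(d T N_{hyp}) = O(d^2 T N_{hyp}^{d+1})$.

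Next I would convert the mesh resolution into an accuracy requirement. A first-order scheme has truncation error $O(1/N_{hyp})$ from discretising the derivative in each direction; summing the $d$ directional contributions (the source term being evaluated at grid points contributes no leading error) gives a total error $O(d/N_{hyp})$, so to guarantee accuracy $\epsilon_{\text{hyp}}$ one needs $N_{hyp} = O(d/\epsilon_{\text{hyp}})$. Substituting into $O(d^2 T N_{hyp}^{d+1})$ yields $O\!\left(d^2 T (d/\epsilon_{\text{hyp}})^{d+1}\right) = O\!\left(T d^{d+3}(1/\epsilon_{\text{hyp}})^{d+1}\right)$. Finally, computing the ensemble average over $M$ different initial data $u^{[k]}_0$ requires running the scheme $M$ times, since each datum is evolved independently by the same PDE, so the total classical cost is $O\!\left(M T d^{d+3}(1/\epsilon_{\text{hyp}})^{d+1}\right)$, as claimed. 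Note there is no extra factor of $d$ relative to the $S^{[k]}$ case in Lemma~\ref{lem:HJ}, because $u^{[k]}$ here is a scalar rather than a $d$-vector.

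The step I expect to require the most care is the error analysis: unlike the linear truncation-error bookkeeping, for a genuinely nonlinear hyperbolic conservation law the solution can develop shocks, so the pointwise Taylor-expansion argument only holds away from discontinuities, and near shocks a first-order monotone scheme converges at a reduced (roughly half-order in $L^1$) rate, with constants depending on the total variation of the data. For this cost estimate I would adopt the standard convention (used implicitly in Lemma~\ref{lem:HJ}) of quoting the formal first-order truncation error $O(d/N_{hyp})$, which is the operative scaling throughout the paper; a fully rigorous treatment would measure $\epsilon_{\text{hyp}}$ in $L^1$ and carry the TV-dependent constant, but this does not alter the stated dependence on $M$, $T$, $d$ and $\epsilon_{\text{hyp}}$. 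The CFL bound $\Delta t_{\text{hyp}} = O(h_{\text{hyp}}/d)$ similarly presumes the wave speeds $|F(u)|$ are bounded uniformly along the solution; this is the standard assumption, which I would state explicitly rather than establish.
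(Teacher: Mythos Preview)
Your proposal is correct and takes essentially the same approach as the paper: the paper does not give a separate proof for this lemma but simply remarks that the cost is the same as in Lemma~\ref{lem:HJ} for the scalar quantity $S^{[k]}$, and your argument faithfully reproduces that counting (per-step work $O(dN_{hyp}^d)$, CFL giving $N_{t,\text{hyp}}=O(dTN_{hyp})$, accuracy forcing $N_{hyp}=O(d/\epsilon_{\text{hyp}})$, and the factor $M$ from independent runs). Your closing remarks about shocks and TV-dependent constants are thoughtful caveats that go beyond what the paper makes explicit, but they do not alter the argument.
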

  

  Finally we point out that for Hamilton-Jacobi equations and general nonlinear
  hyperbolic PDEs, not all initial data lead to caustics or shocks, hence the solutions
  may remain smooth for all time, thus multivalued-solutions will not appear. In such cases,
  the solutions computed are exactly the smooth solutions to the original
  nonlinear PDEs without needing to use the notion of viscosity or multi-valued solutions.
  
  
  \subsection{System of nonlinear ODEs}

A system of $D$ nonlinear ODEs subject to $M$ different initial data can be written as 

 \begin{align} \label{eq:upde}
 &   \frac{d X^{[k]}(t)}{d t}=F( X^{[k]}(t)), \qquad X^{[k]}\in \mathbb{R}^D\,,\\
 & X^{[k]}(0)=X_0^{[k]}, \qquad  k=1,\cdots, M
\end{align}
where the same ODE is satisfied for each $k$. This can be  interpreted as a system of $M$ non-interacting particles in $D$ dimensions, each with a trajectory described by $X^{[k]}(t)$. For nonlinear ODEs, $F(X)\in \mathbb{R}^D$ are  nonlinear functions of its argument $X$. To compute ensemble averages 
 \begin{align} \label{Ens-ODE}
     \langle A(t)\rangle =\frac{1}{M} \sum_{k=1}^M A(X^{[k]}(t))
\end{align}
one can directly solve for the system of ODEs  starting from $M$ different initial data $X_0^{[k]}$, and then carry out the summation in Eq.~\eqref{Ens-ODE}.  \\

\begin{lemma} \label{lem:odeclassical}
The classical computational cost in computing the ensemble observable $\langle A(T)\rangle$ for $M$ initial data, to precision $\epsilon_{\text {ODE}}$ at time  $T= N_{t, \text {ODE}} \Delta t_{\text {ODE}}$ for time-step size $\Delta t_{ODE}$ is $O({D^3}MT/\epsilon_{\text {ODE}})$.
\end{lemma}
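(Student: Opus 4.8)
The plan is to argue that, classically, the only way to obtain $\langle A(T)\rangle$ in \eqref{Ens-ODE} is to integrate each of the $M$ trajectories of \eqref{eq:upde} separately and then average, so that the total cost is $M$ times the cost of one ODE solve plus a negligible summation cost, exactly as in the derivations of the analogous PDE results (Lemma~\ref{lem:HJ}).

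First I would fix an index $k$ and count the cost of marching $dX^{[k]}/dt = F(X^{[k]})$ from $0$ to $T$ with a first-order scheme (forward or backward Euler, matching the paper's convention) and step $\Delta t_{\text{ODE}}$, so that $N_{t,\text{ODE}} = T/\Delta t_{\text{ODE}}$. Under a Lipschitz assumption on $F$, the global error is $O(\Delta t_{\text{ODE}})$, so precision $\epsilon_{\text{ODE}}$ is reached once $\Delta t_{\text{ODE}} = O(\epsilon_{\text{ODE}})$, i.e. $N_{t,\text{ODE}} = O(T/\epsilon_{\text{ODE}})$ steps. The stability constant (the $e^{LT}$-type factor) is absorbed into $O(\cdot)$ here, consistent with how $T$ is treated elsewhere in the paper; this is the one place where the hypotheses on $F$ need to be stated carefully.

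Next I would bound one time step by $O(D^3)$: for a general, possibly stiff $F$ one uses an implicit update, whose cost is dominated by the $D\times D$ linear solves of the Newton iteration (dense linear algebra on the $D$-dimensional unknown in the worst case), while an explicit step with dense coupling costs only $O(D^2)$, so $O(D^3)$ is a safe upper bound throughout. Hence one trajectory costs $O(D^3 T/\epsilon_{\text{ODE}})$; evaluating $A(X^{[k]}(T))$ at the final time and forming $\frac{1}{M}\sum_{k=1}^M A(X^{[k]}(T))$ adds only $O(MD)$, which is subdominant. Summing over the $M$ independent initial data then yields the total classical cost $O(D^3 M T/\epsilon_{\text{ODE}})$, as claimed.

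The main obstacle is not an inequality but fixing honest assumptions: the linear-in-$M$ factor presumes there is no classical shortcut that handles all $M$ trajectories at once (precisely what the later quantum algorithm aims to beat), the $1/\epsilon_{\text{ODE}}$ factor presumes the first-order accuracy statement with $T$ and Lipschitz constants absorbed into $O(\cdot)$, and the $D^3$ presumes we allow an implicit or otherwise dense per-step solve. I expect the proof simply to record these choices rather than optimise over solvers.
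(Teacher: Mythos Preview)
Your argument reaches the stated bound $O(D^3MT/\epsilon_{\text{ODE}})$, but the way you distribute the three powers of $D$ differs from the paper. The paper uses an \emph{explicit} forward Euler step and counts the per-step cost as $O(D^2)$ (each of the $D$ components of $F(X)$ assumed to cost $O(D)$ to evaluate), not $O(D^3)$; the third factor of $D$ enters instead through the error estimate, which the paper takes to be $O(D\,\Delta t_{\text{ODE}})$ rather than $O(\Delta t_{\text{ODE}})$, forcing $\Delta t_{\text{ODE}}=O(\epsilon_{\text{ODE}}/D)$ and hence $N_{t,\text{ODE}}=O(DT/\epsilon_{\text{ODE}})$. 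In contrast, you load all three factors of $D$ into the per-step cost via an implicit dense linear solve and keep the standard $\Delta t_{\text{ODE}}=O(\epsilon_{\text{ODE}})$. Both bookkeepings are defensible upper bounds and yield the same final complexity; the paper's version avoids invoking implicit solvers (which would be overkill for a non-stiff system) at the price of a slightly nonstandard $D$-dependent global error, while yours uses textbook convergence but a pessimistic per-step model.
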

\begin{proof}
Assume the evaluation of each component of $F(X)$  costs, at most, $O(D)$ operations. Then for the first order method (say the forward Euler method), the error is of  $O(D\Delta t_{\text{ODE}})$. To reach an error
$\epsilon_{\text {ODE}}$ one needs $\Delta t_{\text{ODE}}=O(\epsilon_{\text {ODE}}/D)$. The ensemble average step in Eq.~\eqref{Ens-ODE} costs $O(M)$ since it is just a sum of $M$ terms. The computational
cost of solving Eq.~\eqref{eq:upde} and computing the ensemble average in Eq.~\eqref{Ens-ODE} is  $O(D^2MN_{t,{\text{ODE}}})=O({D^3}MT/\epsilon_{\text {ODE}})$. 
\end{proof}

\noindent \textbf{Remark:} The ODE solver, by the spectral deferred correction method \cite{SpecDef}, can achieve the complexity
of \\
$O(DMm\log(m)/\epsilon_{\text {ODE}}^{1/m})$ for any pre-chosen positive integer $m$. So if one wants to
choose a large $m$, the cost is about $O(DMm\log(m))$. For simplicity, we won't consider spectral methods in this paper, but they can be the investigation of future work.


\subsection{Quantum subroutines}
We now briefly review the quantum linear systems problem (QLSP), which are useful for solving ODEs and PDEs. It should be emphasized that the output of QLSP for ODEs and PDEs are {\it quantum states} and {\it not} the classical solutions of the ODEs and PDEs, thus making these {\it quantum subroutines} rather than full quantum algorithms. We then describe the system of linear equations problem (SLEP) which allows one to compute observables at the output of QLSP, thus allowing the quantum algorithm to solve the same problem as the classical algorithm.

\subsubsection{Quantum linear systems problem: QLSP}
The quantum linear systems problem (QLSP)  \cite{childs2017quantum, barrynew} can be stated informally in the following way.

\begin{problem} \label{prob:one}
\textbf{(QLSP)} Let $\mathcal{M}$ be a $2^m\times 2^m$ Hermitian matrix such that $\|\mathcal{M}\|\le 1$. Assume vectors $x$ and $y$ with elements $\{x_i\}$, $\{y_i\}$ that satisfy $\mathcal{M}x= y$. One can then define the following $m$-qubit quantum states $|x\rangle \equiv \sum_i x_i/N_{x} |i\rangle$,  $|y\rangle \equiv \sum_i y_i/N_{y} |i\rangle$ where $N_{x}=\sqrt{\sum_i |x_i|^2}$, $N_{y}=\sqrt{\sum_i |y_i|^2}$ are normalisation constants. The aim of any QLSP algorithm is, when given access to $\mathcal{M}$ and unitary $U_{initial}$ (where $U_{initial}|0\rangle=|y\rangle$), to prepare the quantum state $|x'\rangle$ that is $\eta$-close to $|x\rangle$, i.e., $\| |x'\rangle-|x\rangle \|  \leq \eta$.
\end{problem}
The most notable algorithms to solve QLSP are the HHL algorithm \cite{harrow2009quantum} built on quantum phase estimation and the alternative CKS algorithm \cite{childs2017quantum} that bypasses quantum phase estimation, where the latter can provide an exponential improvement in precision. Their output is an approximation to the quantum state $|x\rangle$ instead of an approximation to the solution of the original vector problem $x=\mathcal{M}^{-1}y$. Thus these are often termed quantum subroutines instead of full quantum algorithms, since they provide a stepping stone but do not solve the same problem as the corresponding classical algorithm. \\

In Problem~\ref{prob:one}, the informal phrase `when given access to $\mathcal{M}$' in order to solve QLSP, refers to a description of how the entries of $\mathcal{M}$ can be accessed during the protocol. The total cost of the protocol would also be computed with respect to the type of access one has. For both algorithms \cite{harrow2009quantum, childs2017quantum} the assumption is of sparse access to $\mathcal{M}$, defined in the following way \cite{berry2015hamiltonian, barrynew}.

\begin{definition}
Sparse access to a Hermitian matrix $\mathcal{M}$ is a $4$-tuple $(s, \|\mathcal{M}\|_{max}, O_{\mathcal{M}}, O_{F})$ and the $(i,j)^{\text{th}}$ entry of $\mathcal{M}$ is denoted $\mathcal{M}_{ij}$. Here $s$ is the sparsity of $\mathcal{M}$ and $\|\mathcal{M}\|_{max}=\max_{i,j}(|\mathcal{M}_{ij}|)$ is the max-norm of $\mathcal{M}$. $O_M$ and $O_F$ are unitary black boxes which can access the matrix elements $\mathcal{M}_{ij}$ such that 
\begin{align}
    &O_M|j\rangle|k\rangle|z\rangle=|j\rangle|k\rangle|z\oplus\mathcal{M}_{jk}\rangle \nonumber \\
    &O_{F} |j\rangle|l\rangle=|j\rangle|F(j,l)\rangle
\end{align}
where the function $F$ takes the row index $j$ and a number $l=1,2,...,s$ and outputs the column index of the $l^{\text{th}}$ non-zero elements in row $j$. 
\end{definition}

Then the HHL algorithm \cite{harrow2009quantum} has the following query complexity, which denotes the number of times oracles $O_M$, $O_F$ and $U_{initial}$ are used throughout the protocol. The gate complexity refers to the number of $2$-qubit gates required in the algorithm.

\begin{lemma} \label{lem:HHL}
\cite{harrow2009quantum,childs2017quantum} Let $\mathcal{M}$ be Hermitian and each copy of $|y\rangle$ is provided by the oracle $U_{initial}$. To create a state that is $\eta$-close to $|x\rangle$, it is sufficient that the oracles $O_M, O_F$ are queried $O((s \kappa^2 \|\mathcal{M}\|_{max}/\eta) \text{poly}\log(s \kappa \|\mathcal{M}\|_{max}/\eta))$ times, where $\kappa$ is the condition number of $\mathcal{M}$. The number of 2-qubit gates required in this algorithm is at most logarithmically larger than the query complexities of $O_M, O_F$. The query complexity for the oracle $U_{initial}$ is $\mathcal{O}(s \kappa \|\mathcal{M}\|_{max} \text{poly}\log(s \kappa \|\mathcal{M}\|_{max}/\eta))$. \\
\end{lemma}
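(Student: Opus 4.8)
The plan is to reconstruct the HHL algorithm of \cite{harrow2009quantum} (indicating where the CKS variant \cite{childs2017quantum} improves the dependence on $\eta$) and then count the oracle calls. The three building blocks are: (i) sparse Hamiltonian simulation of $e^{i\mathcal{M}\tau}$, which converts the sparse access $(s,\|\mathcal{M}\|_{max},O_M,O_F)$ into an approximate implementation of the evolution unitary at query cost $\tilde{O}(s\|\mathcal{M}\|_{max}\tau)$ in $O_M,O_F$ \cite{berry2015hamiltonian}; (ii) quantum phase estimation; and (iii) amplitude amplification.

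Concretely, I would proceed as follows. First, prepare $|y\rangle=U_{initial}|0\rangle$ and expand it in the eigenbasis of $\mathcal{M}$, writing $|y\rangle=\sum_j\beta_j|u_j\rangle$ with $\mathcal{M}|u_j\rangle=\lambda_j|u_j\rangle$ and $|\lambda_j|\in[1/\kappa,1]$. Second, run phase estimation with the unitary $e^{i\mathcal{M}\tau}$ (implemented via block (i)) to obtain $\sum_j\beta_j|u_j\rangle|\tilde{\lambda}_j\rangle$; requiring $|\tilde{\lambda}_j-\lambda_j|=O(\eta/\kappa)$ forces the evolution time to be $\tau=O(\kappa/\eta)$. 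Third, adjoin an ancilla qubit and apply the controlled rotation $|\tilde{\lambda}_j\rangle|0\rangle\mapsto|\tilde{\lambda}_j\rangle\big(\sqrt{1-C^2/\tilde{\lambda}_j^2}\,|0\rangle+(C/\tilde{\lambda}_j)|1\rangle\big)$ with $C=\Theta(1/\kappa)$. Fourth, uncompute the phase estimation. Fifth, amplify the branch with the ancilla in state $|1\rangle$: its squared amplitude is $\sum_j|\beta_j|^2C^2/\lambda_j^2=\Omega(1/\kappa^2)$, so amplitude amplification succeeds after $O(\kappa)$ rounds and leaves, up to the accumulated error, the normalised state proportional to $\sum_j(\beta_j/\lambda_j)|u_j\rangle=|x\rangle$.

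For the complexity count: each round of amplitude amplification invokes the phase estimation (and its inverse) once, hence $e^{i\mathcal{M}\tau}$ with $\tau=O(\kappa/\eta)$, hence block (i) at cost $\tilde{O}(s\|\mathcal{M}\|_{max}\kappa/\eta)$; multiplying by the $O(\kappa)$ amplification rounds yields $\tilde{O}(s\kappa^2\|\mathcal{M}\|_{max}/\eta)$, i.e. $O((s\kappa^2\|\mathcal{M}\|_{max}/\eta)\,\text{poly}\log(s\kappa\|\mathcal{M}\|_{max}/\eta))$ calls to $O_M,O_F$, with the polylogarithmic factor absorbing the Hamiltonian-simulation and phase-estimation accuracy overheads. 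The oracle $U_{initial}$ is used once per amplification round (and inside the filtered state preparation), which gives the stated $\mathcal{O}(s\kappa\|\mathcal{M}\|_{max}\,\text{poly}\log(s\kappa\|\mathcal{M}\|_{max}/\eta))$ bound. The gate count is dominated by the arithmetic inside phase estimation and the rotation, together with the gates realising block (i), each only polylog larger than the corresponding query count.

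I expect the error analysis, rather than the counting, to be the main obstacle. One must track how the phase-estimation error $\delta\lambda=O(\eta/\kappa)$ propagates through the nonlinear map $\tilde{\lambda}_j\mapsto 1/\tilde{\lambda}_j$, whose sensitivity is worst precisely on the ill-conditioned end $\lambda_j\approx 1/\kappa$; this is what pins down $\tau\sim\kappa/\eta$. One must also check that restricting to the well-conditioned subspace $|\lambda_j|\ge 1/\kappa$, together with the sub-allocated Hamiltonian-simulation error, perturbs $|x\rangle$ by at most $O(\eta)$, and that the amplitude amplification step does not inflate this error. (The CKS route replaces phase estimation by an LCU/Chebyshev implementation of $\mathcal{M}^{-1}$, trading the factor $1/\eta$ in the evolution time for $\text{poly}\log(1/\eta)$; the $\kappa$-dependence quoted in the lemma is unaffected.)
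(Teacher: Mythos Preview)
The paper does not supply its own proof of this lemma: it is stated as a citation of the HHL and CKS results \cite{harrow2009quantum,childs2017quantum}, so there is nothing to compare against at the level of argument. Your sketch is the standard HHL pipeline (sparse Hamiltonian simulation $\to$ phase estimation $\to$ eigenvalue inversion by controlled rotation $\to$ amplitude amplification) and the $O_M,O_F$ query count you obtain, $\tilde{O}(s\kappa^2\|\mathcal{M}\|_{max}/\eta)$, is derived correctly.

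One small gap: your accounting for $U_{initial}$ does not actually reproduce the stated bound. ``Once per amplification round'' gives $O(\kappa)$ calls, not $\mathcal{O}(s\kappa\|\mathcal{M}\|_{max}\,\text{poly}\log(\cdot))$; the latter is a (looser) upper bound only under the side assumption $s\|\mathcal{M}\|_{max}\gtrsim 1$, which you have not made explicit. Either state that assumption, or simply report the tighter $O(\kappa)$ and note that it is dominated by the displayed expression.
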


The $1/\eta$ dependence in the query complexities of $O_M, O_F$ comes from using quantum phase estimation. To improve upon this factor, the CKS algorithm in \cite{childs2017quantum} bypasses phase estimation and instead uses a sequence of unitaries whose sum approximates $\mathcal{M}^{-1}$. 

\begin{lemma} \label{lem:series}
\cite{childs2017quantum} A state $\eta$-close to $| x\rangle$ can be created by querying the $O_M, O_F$ and $U_{initial}$ oracles \\
$\mathcal{O}(s \kappa \|\mathcal{M}\|_{max} \text{poly}\log(s \kappa \|\mathcal{M}\|_{max}/\eta))$ times, where the number of 2-qubit gates are at most logarithmally larger than the query complexity. \end{lemma}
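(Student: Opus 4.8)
The plan is to reconstruct the argument of Childs, Kothari and Somma \cite{childs2017quantum}, whose core idea is to realize a bounded polynomial approximation of $\mathcal{M}^{-1}$ as a \emph{linear combination of unitaries} (LCU), the unitaries being powers of a quantum walk operator built out of the sparse oracles $O_M, O_F$. Since $\mathcal{M}$ is Hermitian with $\|\mathcal{M}\|\le 1$ and condition number $\kappa$, its spectrum lies in $D_\kappa := [-1,-1/\kappa]\cup[1/\kappa,1]$, so it is enough to apply to $|y\rangle$ an operator that agrees with the map $x\mapsto 1/x$ to accuracy $O(\eta)$ on $D_\kappa$; post-selecting and normalizing then produces a state $\eta$-close to $|x\rangle = \mathcal{M}^{-1}|y\rangle/\|\mathcal{M}^{-1}|y\rangle\|$.

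The first ingredient is a function-approximation lemma: starting from the polynomial $h(x) = (1-(1-x^2)^b)/x$ of degree $2b-1$, which satisfies $|h(x) - 1/x| \le \kappa(1-\kappa^{-2})^b$ on $D_\kappa$, and then truncating its Chebyshev expansion, one obtains a polynomial $p(x) = \sum_{j} c_j T_j(x)$ of degree $O\big(\kappa\log(\kappa/\eta)\big)$ with $|p(x) - 1/x| \le \eta/\kappa$ on $D_\kappa$ and coefficient $1$-norm $\|c\|_1 = O\big(\kappa\,\text{poly}\log(\kappa/\eta)\big)$. The second ingredient is that the Chebyshev polynomials $T_j(\mathcal{M})$ are exactly implemented (on an appropriate invariant subspace) by the $j$-th power of the Szegedy quantum walk operator $W$ associated to $\mathcal{M}/(s\|\mathcal{M}\|_{max})$, and $W$ costs $O(1)$ queries to $O_M, O_F$; the rescaling by $s\|\mathcal{M}\|_{max}$ is absorbed into the effective condition number and is the source of that factor in the final count. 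Combining these via the LCU lemma --- prepare an ancilla register in the state $\propto \sum_j \sqrt{|c_j|}\,|j\rangle$, apply $\sum_j |j\rangle\langle j| \otimes (\text{phase})\, W^j$ controlled on the ancilla, un-prepare, and post-select the ancilla on $|0\rangle$ --- produces the subnormalized state proportional to $p(\mathcal{M})|y\rangle \approx \mathcal{M}^{-1}|y\rangle$.

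The remaining and most delicate step is boosting the post-selection success probability to a constant without paying a quadratic price in $\kappa$. Naive amplitude amplification costs $O(\kappa\,\text{poly}\log)$ extra rounds, since the good branch has amplitude $\Omega(1/\|c\|_1) = \Omega\big(1/(\kappa\,\text{poly}\log)\big)$, giving a total of $O(\kappa^2\,\text{poly}\log)$. To reach the $\kappa$-linear bound asserted in the statement, one instead splits the LCU circuit into segments that act on the eigenspaces of $\mathcal{M}$ whose eigenvalues have magnitude in the dyadic band $[2^{-m-1}, 2^{-m}]$ --- each such segment needs only $O(2^m)$ queries, and each is amplified independently --- and applies variable-time amplitude amplification; summing the segment costs yields $\mathcal{O}\big(s\kappa\|\mathcal{M}\|_{max}\,\text{poly}\log(s\kappa\|\mathcal{M}\|_{max}/\eta)\big)$ queries to $O_M, O_F$ in total, with $U_{initial}$ used once per coherent run. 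I expect this variable-time analysis, together with checking that all auxiliary circuitry (coefficient-state preparation, controlled phases, and the interval comparators used for the segmentation) contributes only a $\text{poly}\log$ overhead in both query and gate count, to be the main obstacle; the Chebyshev approximation and the quantum-walk implementation are by comparison routine given \cite{berry2015hamiltonian} and \cite{childs2017quantum}.
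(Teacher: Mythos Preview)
The paper does not supply a proof of this lemma: it is stated as background, attributed to \cite{childs2017quantum}, and used as a black box. There is therefore nothing in the paper to compare your argument against beyond the citation itself.

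That said, your sketch is a faithful reconstruction of the Childs--Kothari--Somma argument: the Chebyshev approximation of $1/x$ on $D_\kappa$ with degree $O(\kappa\log(\kappa/\eta))$ and coefficient $1$-norm $O(\kappa\,\text{poly}\log)$, its implementation as an LCU of walk-operator powers (each costing $O(1)$ calls to $O_M,O_F$), and the crucial use of variable-time amplitude amplification to avoid the naive $\kappa^2$ scaling. One small correction: in the final complexity, $U_{initial}$ is not used only ``once per coherent run'' --- variable-time amplitude amplification interleaves state preparation with the reflections, so the number of calls to $U_{initial}$ is also $\mathcal{O}(s\kappa\|\mathcal{M}\|_{max}\,\text{poly}\log(\cdot))$, matching the statement. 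Otherwise the outline is correct and the places you flag as delicate (the variable-time analysis and the polylog accounting for ancillary circuitry) are indeed where the technical work lies in \cite{childs2017quantum}.
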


An important application of QLSP is preparing quantum states whose amplitudes are proportional to the solutions for linear ODEs and PDEs, known as amplitude-encoding of the solutions. The first step is to discretise the ODEs and PDEs and transform the equations into a linear algebra problem of the form $x=\mathcal{M}^{-1}y$, where the size of $\mathcal{M}$ can be very large. The quantum subroutines for QLSP is then in the matrix inversion process to prepare the corresponding $|x\rangle$ states. The exact query and gate complexities with respect to error $\eta$, time $T$ and the dimension $d$ of the problem would also depend on details of the discretisation procedure.\\

We give full details of our discretisation procedures and the corresponding matrix inversion problem in the main body of the paper. \\


\subsubsection{System of linear equations problem: SLEP}
When solving ODEs and PDEs, the actual desired outcomes of the problem are the observables associated with the solutions of the ODEs and PDEs. Solving QLSP only prepares $|x\rangle$, whereas the system of linear equations problem (SLEP) \cite{barrynew} aims to compute observables from $x$. Then given the same definitions as Problem~\ref{prob:one}, one can state SLEP in the following way. 
\begin{problem} \label{prob:two}
\textbf{(SLEP)} Given a Hermitian matrix $\mathcal{G}$, which is of the same size as $\mathcal{M}$, access to $\mathcal{M}$ and $U_{initial}$, the aim of SLEP is to compute the expectation value $(\mathcal{M}^{-1}y)^T \mathcal{G} (\mathcal{M}^{-1}y)=x^T \mathcal{G} x$ to precision $\epsilon'$.
\end{problem}
There are various different methods of measuring the outcome $x^T \mathcal{G} x$ directly after obtaining $|x\rangle \propto \mathcal{M}^{-1}|y\rangle$ from the output of QLSP. Many of these methods have a query complexity with error scaling as $1/(\epsilon')^{2}$ \cite{harrow2009quantum}, for instance the quantum swap test \cite{buhrman2001quantum} when $\mathcal{G}$ is a density matrix or applying the Hadamard test when $\mathcal{G}$ is decomposed as a sum of two unitary operators \cite{aharonov2009polynomial}. An elegant formalism that generalises the quantum methods for matrix inversion (that uses sparse access to $\mathcal{M}$), as well as neatly achieving the improved optimal $1/\epsilon'$ scaling via amplitude estimation \cite{knill2007optimal}, without too many extra assumptions, is the formalism of block access \cite{low2019hamiltonian,qsvd, barrynew}, defined below.   It is possible to create block access to $\mathcal{M}$ from sparse access to $\mathcal{M}$ (see Lemma~\ref{lem:sparsetoblock}), a fact which we later exploit to approximate $x^T \mathcal{G} x$ while still beginning from sparse access to $\mathcal{M}$.

\begin{definition}
Let $\mathcal{M}$ be a $m$-qubit Hermitian matrix, $\delta_{\mathcal{M}}>0$ and $n_{\mathcal{M}}$ is a positive integer. A $(m+n_{\mathcal{M}})$-qubit unitary matrix $U_{\mathcal{M}}$ is a $(\alpha_{\mathcal{M}}, n_{\mathcal{M}}, \delta_{\mathcal{M}})$-block encoding of $\mathcal{M}$ if 
\begin{align} \label{G-def}
    \|\mathcal{M}-\alpha_{\mathcal{M}}\langle 0^{n_{\mathcal{M}}}|U_{\mathcal{M}} |0^{n_{\mathcal{M}}} \rangle \|\leq \delta_{\mathcal{M}}.
\end{align}
Block access to $\mathcal{M}$ is then the 4-tuple $(\alpha_{\mathcal{M}}, n_{\mathcal{M}}, \delta_{\mathcal{M}}, U_{\mathcal{M}})$ where $U_{\mathcal{M}}$ is the unitary black-box block-encoding of $\mathcal{M}$. 
\end{definition}
In the rest of the paper, we assume that if the block access $(\alpha_{\mathcal{M}}, n_{\mathcal{M}}, \delta_{\mathcal{M}}, U_{\mathcal{M}})$ to $\mathcal{M}$ is given, then $U^{\dagger}_{\mathcal{M}}$, controlled-$U_{\mathcal{M}}$ and controlled-$U^{\dagger}_{\mathcal{M}}$ are also given. \\

Then in Problem~\ref{prob:two}, if one assumes that access to $\mathcal{M}$ refers to block access to $\mathcal{M}$ instead of sparse access, there exists an algorithm \cite{barrynew} that solves SLEP with the following query and gate complexities.

\begin{lemma} \label{lem:bslep}
\cite{barrynew} A quantum algorithm can be constructed that takes $m$ qubits, block access $(\alpha_{\mathcal{M}}, n_{\mathcal{M}}, 0, U_{\mathcal{M}})$ to a $2^m \times 2^m$ invertible Hermitian matrix $\mathcal{M}$ with condition number $\kappa$, block access $(\alpha_{\mathcal{G}}, n_{\mathcal{G}}, 0, U_{\mathcal{G}})$ to a $2^m \times 2^m$ Hermitian matrix $\mathcal{G}$, an accuracy $\epsilon \in [\alpha_{\mathcal{G}}/2^m, \alpha_{\mathcal{G}}]$, and a $m$-qubit unitary black box $U_{initial}$, and returns with probability at least $2/3$ an $\epsilon'$-additive approximation to $x^T \mathcal{G} x$, where $x=\mathcal{M}^{-1}y$ and $y$ is the $2^m$-dimensional vector with entries $y_i=\langle i|U_{initial}|0\rangle=\langle i|y\rangle$, by making $\mathcal{O}(\alpha_{\mathcal{G}}\kappa^2/\epsilon')$ queries to $U_\mathcal{G}$, $\mathcal{O}(\alpha_{\mathcal{G}}\kappa^3\log(\alpha_{\mathcal{G}}\kappa^2/\epsilon)/\epsilon')$ queries to $U_{\mathcal{M}}$, $\mathcal{O}(\alpha_{\mathcal{G}}\kappa^2/\epsilon')$ queries to $U_{initial}$ and $\mathcal{O}(\alpha_{\mathcal{G}}\kappa^2(m+n_{\mathcal{G}}+n_{\mathcal{M}}+n_{\mathcal{M}}\kappa \log(\alpha_{\mathcal{G}}\kappa^2/\epsilon'))/\epsilon')$ additional $2$-qubit gates.  
\end{lemma}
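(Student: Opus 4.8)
The plan is to express the target quantity as a single expectation value and then reduce it to standard primitives. Since $\mathcal{M}$ is Hermitian (hence so is $\mathcal{M}^{-1}$) and $y=U_{initial}|0\rangle$, one has $x^{T}\mathcal{G}x=\langle 0|U_{initial}^{\dagger}\,\mathcal{M}^{-1}\mathcal{G}\mathcal{M}^{-1}\,U_{initial}|0\rangle$, so it suffices to (i) realise the Hermitian operator $B:=\mathcal{M}^{-1}\mathcal{G}\mathcal{M}^{-1}$ as a block-encoded operator built from the given block encodings of $\mathcal{M}$ and $\mathcal{G}$, and (ii) estimate $\langle 0|U_{initial}^{\dagger}B\,U_{initial}|0\rangle$ to additive precision $\epsilon'$ using amplitude estimation, which delivers the optimal $1/\epsilon'$ rather than $1/(\epsilon')^{2}$ scaling.

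For (i) I would first produce a block encoding of $\mathcal{M}^{-1}$. Rescaling so that the spectrum of $\mathcal{M}$ lies in $[-1,-1/\kappa]\cup[1/\kappa,1]$, apply a CKS-type matrix-inversion subroutine (cf.\ Lemma~\ref{lem:series}), recast through the quantum singular value transformation: transforming the singular values of $U_{\mathcal{M}}$ by an odd polynomial of degree $O(\kappa\log(1/\delta_{1}))$ that is $\delta_{1}$-close to $1/(2\kappa t)$ on $[1/\kappa,1]$ yields a $(2\kappa,\,n_{\mathcal{M}}+O(1),\,\delta_{1})$-block encoding $U_{\mathrm{inv}}$ of $\mathcal{M}^{-1}$, using $O(\kappa\log(1/\delta_{1}))$ (controlled) calls to $U_{\mathcal{M}}$ and $U_{\mathcal{M}}^{\dagger}$, the same number of single-qubit rotations, and $O(n_{\mathcal{M}}\kappa\log(1/\delta_{1}))$ additional two-qubit gates for the projector-controlled phase rotations. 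Composing $U_{\mathrm{inv}}$, $U_{\mathcal{G}}$, $U_{\mathrm{inv}}$ by the product rule for block encodings then gives an $(\alpha_{B},\,n_{B},\,\delta_{B})$-block encoding $U_{B}$ of $B$ with $\alpha_{B}=O(\alpha_{\mathcal{G}}\kappa^{2})$, $n_{B}=O(n_{\mathcal{M}}+n_{\mathcal{G}})$, and $\delta_{B}=O(\alpha_{\mathcal{G}}\kappa\,\delta_{1})$ (block-encoding errors add, each weighted by the subnormalisations of the factors it multiplies).

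For (ii), consider the $(m+n_{B})$-qubit unitary $W:=(U_{initial}^{\dagger}\otimes I)\,U_{B}\,(U_{initial}\otimes I)$, for which $\langle 0^{m+n_{B}}|W|0^{m+n_{B}}\rangle$ equals $\langle y|(B/\alpha_{B})|y\rangle=x^{T}\mathcal{G}x/\alpha_{B}$ up to an error $O(\delta_{B}/\alpha_{B})$, and which is real. I would run the Hadamard test on $W$ --- so that $\mathrm{Re}\langle 0|W|0\rangle$ becomes the acceptance probability $(1+\mathrm{Re}\langle 0|W|0\rangle)/2$ of a control qubit --- combined with amplitude estimation, which estimates that probability, and hence $x^{T}\mathcal{G}x/\alpha_{B}$, to additive precision $\eta$ with $O(1/\eta)$ coherent uses of $W$ and $W^{\dagger}$; each such use invokes $U_{\mathcal{G}}$ once, $U_{\mathrm{inv}}$ twice, and $U_{initial},U_{initial}^{\dagger}$ once. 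Taking $\eta=\Theta(\epsilon'/\alpha_{B})$ gives an $\epsilon'$-additive estimate of $x^{T}\mathcal{G}x$ with success probability $8/\pi^{2}>2/3$ using $O(\alpha_{B}/\epsilon')=O(\alpha_{\mathcal{G}}\kappa^{2}/\epsilon')$ amplitude-estimation rounds. This produces $O(\alpha_{\mathcal{G}}\kappa^{2}/\epsilon')$ queries to $U_{\mathcal{G}}$ and to $U_{initial}$, a factor $O(\kappa\log(1/\delta_{1}))$ more to $U_{\mathcal{M}}$, and a number of additional two-qubit gates equal to the per-round overhead $O(m+n_{\mathcal{G}}+n_{\mathcal{M}}+n_{\mathcal{M}}\kappa\log(1/\delta_{1}))$ times $O(\alpha_{\mathcal{G}}\kappa^{2}/\epsilon')$.

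It then remains to fix $\delta_{1}$ and to check the accuracy window. The block-encoding error perturbs the final estimate by $O(\delta_{B})$, so $\delta_{1}=\mathrm{poly}(\epsilon'/(\alpha_{\mathcal{G}}\kappa))$ is enough, which enters the query and gate bounds only inside a logarithm and reproduces the $\log(\alpha_{\mathcal{G}}\kappa^{2}/\epsilon)$ factor; the constraint $\epsilon\in[\alpha_{\mathcal{G}}/2^{m},\alpha_{\mathcal{G}}]$ reflects that the rescaled amplitude cannot be resolved below $2^{-m}$ and that above $\alpha_{\mathcal{G}}$ the statement is vacuous. The part I expect to be most delicate is exactly this bookkeeping across the two matrix inversions and the amplitude-estimation stage: tracking the subnormalisations $2\kappa$, $\alpha_{\mathcal{G}}$, $\alpha_{B}$ so that the propagated polynomial-approximation and block-encoding errors combine into a clean $\le\epsilon'$ bound while $\delta_{1}$ costs only logarithmically, i.e.\ making the powers of $\kappa$ and $\alpha_{\mathcal{G}}$ line up exactly as claimed. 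Everything else is a routine assembly of QSVT-based matrix inversion, block-encoding products, the Hadamard test, and amplitude estimation.
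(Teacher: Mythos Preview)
Your proposal is correct and follows essentially the same approach as the paper (which cites \cite{barrynew} for this lemma; the related Lemma~\ref{lem:bslep2} in Appendix~\ref{app:A} spells out the argument). Both routes block-encode $\mathcal{M}^{-1}$ via a degree-$O(\kappa\log(1/\delta))$ QSVT/CKS polynomial, sandwich with the block encoding of $\mathcal{G}$, and then run amplitude estimation to achieve the $1/\epsilon'$ scaling.

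The differences are cosmetic. First, rather than explicitly forming the product block encoding $U_B$ of $\mathcal{M}^{-1}\mathcal{G}\mathcal{M}^{-1}$, the paper tensors $\mathcal{G}$ with the ancilla projector $|0^{n_{\mathcal{M}^{-1}}}\rangle\langle 0^{n_{\mathcal{M}^{-1}}}|$ (Lemma~\ref{lem:g0}) and writes the target as $\langle 0|U^\dagger V U|0\rangle$ with $U=(U_{\mathcal{M}^{-1}}\otimes I)(U_{initial}\otimes I)$ and $V=U_{\mathcal{G}'}$ --- this is the same product-of-block-encodings identity you invoke, just unpacked. Second, the paper estimates $|\langle 0|U^\dagger V U|0\rangle|$ directly via Grover-style reflections and phase estimation (Lemma~\ref{lem:qpe}), and handles the sign issue separately (in the paper's application $\mathcal{G}$ is a density matrix so the expectation is automatically nonnegative), whereas you wrap the unitary in a Hadamard test before amplitude estimation to capture $\mathrm{Re}\langle 0|W|0\rangle$ with sign. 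Both yield the same query counts; your version is arguably cleaner for general Hermitian $\mathcal{G}$ since it handles negative expectation values without extra bookkeeping.
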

We later use a modified version of this result, when given sparse access to $\mathcal{M}$, for computing ensemble averages from nonlinear ODEs and PDEs. 


\section{Solving Hamilton-Jacobi equations} \label{sec:HJ}

In this section, we develop a quantum algorithm for computing observables from Hamilton-Jacobi equations. The first step is to transform the nonlinear equation into a linear equation \textit{without} making any approximations. One convenient method is to use the level set formulation \cite{JO03}. We demonstrate below (Section~\ref{sec:levelset}) how to use this technique to translate a $(d+1)$-dimensional nonlinear Hamilton-Jacobi equations into $(2d+1)$-dimensional (linear) Liouville equations. By discretising this new PDE, we can convert this into a linear algebra problem. We also present physical observables that can be computed in our formalism in Section~\ref{sec:levelsetobservable} and present explicit error bounds. Finally in Section~\ref{sec:quantumlevelset} we show how to compute the observable with a quantum algorithm and present its query and gate complexities.

\subsection{Linear represenation of the  nonlinear Hamilton-Jacobi equation} \label{sec:levelset}

Here we show how to transform a nonlinear Hamilton-Jacobi equation into a linear algebra problem by a two-step procedure: using a level set function to convert the nonlinear equation into a linear equation and then to discretise that linear equation.\\

The level set function $\phi^{[k]}_i(t,x,p)$ can be defined by
\begin{equation}\label{LS-def}
\phi^{[k]}_i(t, x,p=u^{[k]}(t,x))=0
\end{equation}
where $i=1, \cdots, d$ and $\, x, p\in \mathbb{R}^d$, $k=1,...,M$ and $\{u^{[k]}(t,x)\}$ are the solutions of the nonlinear Hamilton-Jacobi equation in gradient form with $M$ initial conditions in Eq.~\eqref{forced-Burgers}. The \textit{zero level set} of $\phi^{[k]}_i$ is the set  $\{(t,x,p)|\phi^{[k]}_i(t,x,p)=0\}$.\\

Since $u^{[k]}(t,x)$ solves Eq.~\eqref{forced-Burgers}, then one can show that  $\phi^{[k]}=(\phi^{[k]}_1, \cdots, \phi^{[k]}_d)\in \mathbb{R}^d$ solves a (linear!) Liouville equation \cite{JO03}
\begin{equation} \label{LS-Liouville}
\partial_t \phi^{[k]} + \nabla_p H \cdot \nabla_x \phi^{[k]} - \nabla_x H \cdot \nabla_p \phi^{[k]}=0.
\end{equation}
Note that the (bi)-characteristics of the Liouville equation
in Eq.~\eqref{LS-Liouville} is the Hamiltonian system in Eq.~\eqref{H-S}. The initial data can be chosen as 
\begin{equation}\label{LS-ic}
\phi^{[k]}_i(0, x,p)=p_i - u^{[k]}_i(0,x), \quad i=1, \cdots, d.
\end{equation}
Then $u^{[k]}$ can be recovered from the intersection of the zero level
sets of $\phi_i \, (i=1, \cdots, d)$, as in \eqref{intersect}, namely 
\begin{equation}
    u^{[k]}(t,x)=\{p(t,x)| \,\phi_i^{[k]}(t,x,p)=0, \, i=1,\cdots, d \}. 
\end{equation}
Note that  $\phi_i^{[k]}(t,x,p)=0$ may have multiple (say $J_k$) roots, denoted by $p_{\gamma(t,x)} (\gamma=1, \cdots, J_k)$, hence the so-called  multi-valued solutions will arise, which is denoted by
\begin{equation}\label{intersect}
    u^{[k]}_{\gamma}(t,x)= \cap_{i=1}^d \{p_{\gamma}(t,x) | \phi^{[k]}_i(t,x,p_{\gamma})=0\}, \quad \gamma=1, \cdots, J_k.
\end{equation}
 We include this possibility when defining our observable in Section~\ref{sec:levelsetobservable}. \\

It is crucial to observe here that we have now transformed a $(d+1)$-dimensional nonlinear PDE-- the Hamilton-Jacobi equation-- to a $(2d+1)$-dimensional \textit{linear} PDE --the Liouville equation--without \textit{any} assumptions on either the form or extent of the original nonlinearity. {\it No linear approximation is made}. The mapping is {\it exact}. The cost in going from a nonlinear to a linear system is only at the expense of doubling the dimension. Doubling the dimension of the problem may seem too costly for a classical device because the cost increases exponentially with dimension. However, we will see that for quantum algorithms, the relative overhead in doubling the dimension can be up to exponentially smaller. \\

While in principle one could apply quantum subroutines for QLSP to Eq.~\eqref{LS-Liouville} and generate the quantum states $|\phi\rangle$ whose amplitudes are proportional to $\phi_i$, it would be too costly to recover our desired solutions $u$ or observables from $|\phi\rangle$ due to the extra measurement costs in finding the zero level set of $\phi_i^{[k]}\, (i=1, \cdots d)$. An alternative method is to solve for $\psi$, defined by the following problem 
\begin{eqnarray} \label{Liouville-delta}
&& \partial_t \psi + \nabla_p H \cdot \nabla_x \psi - \nabla_x H \cdot \nabla_p \psi=0 
\end{eqnarray}
with the initial condition
\begin{eqnarray}\label{Liou-I}
&& \psi(0, x,p)=\frac{1}{M}\sum_{k=1}^M\prod_{i=1}^d\delta(p_i - u^{[k]}_i(0,x)). 
\end{eqnarray}
Then we have the following result. 

\begin{lemma}\label{thm-psi} The analytical solution of (\ref{Liouville-delta}) with initial data (\ref{Liou-I}) is 
\begin{align} \label{eq:psisolution}
\psi(t,x,p) =\frac{1}{M} \sum_{k=1}^M \delta(\phi^{[k]}(t,x,p))\,.
\end{align}
\end{lemma}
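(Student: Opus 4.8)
The plan is to verify directly that the right-hand side of \eqref{eq:psisolution} satisfies the transport equation \eqref{Liouville-delta} with the initial data \eqref{Liou-I}, using the fact that each component $\phi^{[k]}=(\phi^{[k]}_1,\dots,\phi^{[k]}_d)$ already solves the Liouville equation \eqref{LS-Liouville}, together with the chain rule for the composition of a distribution with a smooth function. Since the equation is linear and we average over $k$ with fixed weights $1/M$, it suffices to treat a single $k$ and show that $\delta(\phi^{[k]}(t,x,p)) \equiv \prod_{i=1}^d \delta(\phi^{[k]}_i(t,x,p))$ is a solution with initial value $\prod_{i=1}^d \delta(p_i - u^{[k]}_i(0,x))$; summing and dividing by $M$ then gives the claim.

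For the initial condition, I would substitute $t=0$ into $\phi^{[k]}_i$ using \eqref{LS-ic}, which gives $\phi^{[k]}_i(0,x,p)=p_i-u^{[k]}_i(0,x)$, so $\prod_i\delta(\phi^{[k]}_i(0,x,p))=\prod_i\delta(p_i-u^{[k]}_i(0,x))$, matching \eqref{Liou-I} after averaging. For the evolution, write $L := \partial_t + \nabla_p H\cdot\nabla_x - \nabla_x H\cdot\nabla_p$ for the (linear, first-order) Liouville operator. The key observation is that $L$ is a derivation: for any smooth scalar $g$ and any function $\Phi=(\Phi_1,\dots,\Phi_d)$, one has $L\big(g(\Phi)\big)=\sum_{i=1}^d \partial_i g(\Phi)\, (L\Phi_i)$, because $L$ is a first-order differential operator along the bicharacteristic flow \eqref{H-S} — i.e. $L$ is just the total derivative $\tfrac{d}{dt}$ along trajectories. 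Applying this with $g=\delta$ (interpreted distributionally, i.e. testing against smooth test functions and integrating by parts, or equivalently transporting along characteristics) and $\Phi=\phi^{[k]}$, and using $L\phi^{[k]}_i=0$ from \eqref{LS-Liouville}, we get $L\big(\delta(\phi^{[k]})\big)=\sum_i \delta'_i(\phi^{[k]})\,(L\phi^{[k]}_i)=0$. Hence $\psi$ as defined solves \eqref{Liouville-delta}, and since the solution of the transport equation with given initial data is unique, this is \emph{the} solution.

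The main obstacle is making the chain-rule step rigorous at the level of distributions, since $\delta(\phi^{[k]})$ is a distribution in $(x,p)$ (a surface measure on the zero level set, weighted by a Jacobian factor) rather than a classical function, and the composition $\delta\circ\phi^{[k]}$ must be defined via the implicit function theorem, requiring the zero level sets of the $\phi^{[k]}_i$ to intersect transversally so that the roots $p_{\gamma}$ in \eqref{intersect} are nondegenerate. The cleanest way around this is to push everything onto test functions: pair $\psi(t,\cdot,\cdot)$ with a test function $\varphi(x,p)$, use the method of characteristics to write $\langle \delta(\phi^{[k]}(t)),\varphi\rangle$ as an integral over the zero level set transported by the Hamiltonian flow \eqref{H-S}, and then differentiate in $t$; because the flow is volume-preserving (it is Hamiltonian) and carries the $t=0$ level set to the $t$ level set, the time derivative reproduces exactly the transport terms $-\nabla_p H\cdot\nabla_x\varphi+\nabla_x H\cdot\nabla_p\varphi$ after integration by parts. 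Alternatively, one can invoke the general principle that if $w$ solves $Lw=0$ then so does any superposition/composition that is constant along characteristics, and note that $\delta(\phi^{[k]})$ is constant along the bicharacteristics precisely because each $\phi^{[k]}_i$ is. I would present the characteristic-based argument as the main line and remark that the formal chain-rule computation $L\big(\delta(\phi^{[k]})\big)=\sum_i\delta'_i(\phi^{[k]})\,L\phi^{[k]}_i=0$ gives the same answer.
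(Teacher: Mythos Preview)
Your proposal is correct, but the paper takes a somewhat different and more direct route via the method of characteristics. Since both $\psi$ and each $\phi_i^{[k]}$ solve the same linear transport equation \eqref{LS-Liouville}--\eqref{Liouville-delta}, both are constant along the bicharacteristic flow \eqref{H-S}. Writing $(x^{(0)},p^{(0)})=(x^{(0)}(t;x,p),p^{(0)}(t;x,p))$ for the backward characteristic foot (well defined because the Hamiltonian flow has unit Jacobian and is therefore invertible), one gets simultaneously
\[
\phi_i^{[k]}(t,x,p)=p_i^{(0)}-u_i^{[k]}(0,x^{(0)}),\qquad
\psi(t,x,p)=\frac{1}{M}\sum_k\prod_i\delta\bigl(p_i^{(0)}-u_i^{[k]}(0,x^{(0)})\bigr),
\]
and the identity \eqref{eq:psisolution} follows by substitution. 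This sidesteps the distributional chain rule entirely and needs no transversality or nondegeneracy hypotheses on the zero level sets; what you describe as your ``alternative'' characteristic-based argument is in fact the paper's only line. Your chain-rule computation $L\bigl(\delta(\phi^{[k]})\bigr)=\sum_i\delta'_i(\phi^{[k]})\,L\phi^{[k]}_i=0$ is morally equivalent and a useful heuristic, but the paper's version is cleaner precisely because it never differentiates $\delta$ and so avoids the rigor issues you correctly flag.
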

\begin{proof}
See Appendix~\ref{app:hjjustification}
\end{proof}

An important observation here is that {\it all} $M$ distinct initial conditions of $u^{[k]}$ in the original problem have now been converted into a \textit{single} initial condition in $\psi$. We will exploit this property later to  show that the resource cost for the quantum algorithm is {\it independent} of $M$. \\

\noindent{\bf Remark:} Instead of \eqref{Liou-I}, one can define a more general average
\begin{eqnarray}\label{Liou-I-g}
&& \psi(0, x,p)=\frac{1}{M}\sum_{k=1}^M c_k \prod_{i=1}^d\delta(p_i - u^{[k]}_i(0,x)). 
\end{eqnarray}
where $0<c_k<1$ and $\sum_{k} c_k =1$. These weights will persist to all later times, and  one has

\begin{align} \label{eq:generalpsi}
    \psi(t,x,p) =\frac{1}{M} \sum_{k=1}^M c_k \delta(\phi^{[k]}(t,x,p))
\end{align}
and all physical observables, which are moments of $\psi$, are weighted average --with the same weights-- of individual corresponding observables.\\

One can convert the linear PDE in Eq.~\eqref{Liouville-delta} with initial conditions in Eq.~\eqref{Liou-I} into a linear algebra problem by discretising the function
\begin{align}
    & \psi (t, x, p) \rightarrow \psi^{\omega}_{n, \vect{j}, \vect{l}}
\end{align}
where $n$ denotes the time step from $t \rightarrow t_n \equiv n \Delta t$ and $n=0,...,N_t$. The vectors $\vect{j}=(j_1,...,j_d)$, $\vectorbold{l}=(l_1,...,l_d)$ denote spatial grid indices with grid size $h=1/N$ where $x \rightarrow h \vect{j}$, $p \rightarrow h \vect{l}$ are grid points with  $j_i,l_i=1,..,N$ for $i=1,..,d$. \\

Throughout the paper we always assume, without loss of generality, the computational domain to be in a box of $[0,1]^D$, where $D$ is the spatial dimension of the problem. Since all the linear PDEs considered in this paper are transport equations, due to their finite propagation speeds,
by time $T<\infty$, as long as the initial data have compact support in $x$--which we assume here for $u_0(x)$-- solution will still have compact support for $t\le T$ in  $x$ domain. As far as the domain in $p$ is concerned, notice that we always start with delta functions in $p$, which has a compact support as long as $u(t,x)$ is bounded, so the computational domain in $p$ can also be restricted to a finite domain as long as 
$T<\infty$. Upon a suitable scaling we confine our computational domain, in both $x$ and $p$, within the box $[0,1]^D$.\\

Since the initial condition in Eq.~\eqref{Liou-I} involves a delta function, one also needs a discretised delta function $\delta_\omega$ where $\omega$ is a smoothing parameter of the delta function. Hence for the initial state at $n=0$ we have 
\begin{align} \label{eq:psiinitialtext}
    \psi^{\omega}_{0, \vect{j}, \vect{l}}= \frac{1}{M}\sum_{k=1}^M  \prod_{i=1}^d\delta_\omega(l_ih-u^{[k]}_i(n=0, h\vect{j})).
\end{align}
As conventionally done, we  choose $\delta_\omega$ to be smooth and to satisfy, for $x\in\mathbb{R}^1$,
\begin{equation}
\delta_\omega (x) = 0 \quad {\text {if} } \quad |x|> \omega; \quad
\int_{|x|\le \omega} \delta_\omega(x) \, dx =1.
\end{equation}
One usually  approximates $\delta_\omega$ by the form
\begin{equation} \label{eq:deltaomega}
\delta_\omega(x) =
\begin{cases} \frac{1}{\omega}\beta(x/\omega) \quad 
           & |x|\le \omega; \\
            0 \quad 
            &|x|> \omega
\end{cases}
\end{equation}
where typical choices of $\beta(x)$ include
$\beta(x)=1-|\beta|$ and $\beta(x)=\frac{1}{2}(1+\cos(\pi x))$ \cite{EngTorn}. Here one can choose $\omega=mh$ where $m$ is the number of mesh points within the support of $\delta_\omega$. For $x=(x_1, \cdots, x_d) \in \mathbb{R}^d$, one defines $\delta_\omega(x) \equiv  \Pi_{i=1}^d \delta_\omega(x_i)$. \\

The solution of the discretised PDE can then be written as the following matrix equation 
\begin{align} \label{eq:psimatrix-1}
    \begin{pmatrix}
    \psi_{1, \vect{j}, \vect{l}} \\
    \psi_{2, \vect{j} ,\vect{l}} \\
    \vdots \\
    \psi_{N_t-1, \vect{j}, \vect{l}} \\
    \psi_{N_t, \vect{j}, \vect{l}}
    \end{pmatrix}=\mathcal{K}^{-1} \begin{pmatrix}
    \psi_{0, \vect{j}, \vect{l}} \\
    0 \\
    \vdots \\
    0 \\
    0
    \end{pmatrix}
\end{align}
where $\mathcal{K}$ is a $N_t N^{2d} \times N_t N^{2d}$ Toeplitz matrix. See Appendix~\ref{app:discretisation} for the form of $\mathcal{K}$ and the details of the discretisation procedure. \\

Both QLSP and SLEP involve the inversion of an Hermitian matrix $\mathcal{M}$. Since $\mathcal{K}$ is not Hermitian, one can define a new Hermitian matrix 
\begin{align}
    \mathcal{M}=\begin{pmatrix} 0 & \mathcal{K} \\
    \mathcal{K}^{\dagger} & 0 
    \end{pmatrix}
\end{align}
which has the same sparsity and condition number as $\mathcal{K}$. Using $\mathcal{M}$, the matrix inversion problem to solve $\psi_{n, \vect{j}, \vect{l}}$ becomes
\begin{align} \label{B11}
\begin{pmatrix}
    \vect{0} \\
    \psi_{n, \vect{j}, \vect{l}}
    \end{pmatrix}=\mathcal{M}^{-1} \begin{pmatrix}
    \psi_{0, \vect{j}, \vect{l}} \\
    \vect{0}
    \end{pmatrix}
\end{align}
   where $\vect{0}$ is a zero-vector of the same dimension as $\mathcal{K}$.

\begin{lemma} \label{lem:skappa} The condition number of $\mathcal{M}$ is $\kappa \leq O(d NT)$ where $T=N_t \Delta t$ is the stopping time, and sparsity is $s=O(d)$. 
\end{lemma}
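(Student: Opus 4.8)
The plan is to analyze the structure of the Toeplitz matrix $\mathcal{K}$ that encodes the time-stepping scheme, since the condition number and sparsity of $\mathcal{M}$ equal those of $\mathcal{K}$. First I would write $\mathcal{K}$ explicitly as a block-lower-bidiagonal matrix in the time index: the diagonal blocks are (copies of) the identity $I$ acting on the spatial/phase-space grid of size $N^{2d}$, and the subdiagonal blocks are $-B$, where $B = I - \Delta t\, A$ and $A$ is the first-order (upwind) spatial discretization of the Liouville operator $-\nabla_p H\cdot\nabla_x + \nabla_x H\cdot\nabla_p$. Schematically,
\begin{equation}
\mathcal{K}=\begin{pmatrix} I & & & \\ -B & I & & \\ & \ddots & \ddots & \\ & & -B & I \end{pmatrix}.
\end{equation}
From this form the sparsity is immediate: each row of $A$ has $O(d)$ nonzeros (two per coordinate direction for a first-order stencil, across $2d$ directions $x_1,\dots,x_d,p_1,\dots,p_d$), so each row of $B$ and hence of $\mathcal{K}$ has $O(d)$ nonzeros, giving $s=O(d)$, and the Hermitian dilation $\mathcal{M}$ inherits the same bound.

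The core of the argument is the bound $\kappa=\|\mathcal{M}\|\,\|\mathcal{M}^{-1}\|=\|\mathcal{K}\|\,\|\mathcal{K}^{-1}\|\le O(dNT)$. For $\|\mathcal{K}\|$: since $\mathcal{K}=I_{\text{block}} - S\otimes B$ with $S$ the nilpotent shift on $N_t$ time levels, we have $\|\mathcal{K}\|\le 1+\|B\|$, and $\|B\|=\|I-\Delta t A\|\le 1+\Delta t\|A\|$. Because the stencil coefficients of $A$ are bounded by the transport speeds $|\nabla_p H|,|\nabla_x H|$ divided by $h$, one gets $\|A\|=O(d/h)=O(dN)$, and with the CFL choice $\Delta t=O(h/d)$ this yields $\|B\|=O(1)$, so $\|\mathcal{K}\|=O(1)$. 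For $\|\mathcal{K}^{-1}\|$: the block-bidiagonal structure makes $\mathcal{K}^{-1}$ block-lower-triangular with the $(m,n)$ block equal to $B^{m-n}$ for $m\ge n$. Hence for any input vector the output in each time slice is a partial sum $\sum_{j=0}^{n-1}B^j(\cdot)$, so $\|\mathcal{K}^{-1}\|\le \sum_{j=0}^{N_t-1}\|B\|^j$. Here I would use the stability of the scheme: the upwind discretization under the CFL condition gives $\|B\|\le 1+O(\Delta t)$ (the $O(\Delta t)$ coming from lower-order terms, e.g. the coefficient $\nabla\cdot(\nabla_p H,-\nabla_x H)$ in non-divergence form, or simply $\|B\|\le 1$ in the divergence-free Liouville case up to the grid's boundary treatment). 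Then $\sum_{j=0}^{N_t-1}\|B\|^j\le N_t (1+O(\Delta t))^{N_t}\le N_t\, e^{O(T)}=O(N_t)$, and since $N_t=T/\Delta t=O(dNT)$ by the CFL relation, we obtain $\|\mathcal{K}^{-1}\|=O(dNT)$. Combining, $\kappa=O(1)\cdot O(dNT)=O(dNT)$.

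The main obstacle I expect is making the bound $\|B\|\le 1+O(\Delta t)$ fully rigorous. For a genuinely divergence-free transport operator discretized by upwinding, the update is a (sub)stochastic-type convex combination and $\|B\|_\infty\le 1$, but in the $\ell^2$ operator norm one must either invoke a discrete energy estimate or accept a possible $e^{O(T)}$ growth factor — which is harmless here since it is absorbed in $\tilde O$. A subtlety is whether the characteristic field $(\nabla_p H,-\nabla_x H)$ is exactly divergence-free on the grid (it is in the continuum, being a Hamiltonian vector field, but the discretization may introduce an $O(h)$ defect); one handles this by noting the defect contributes only $O(\Delta t)$ per step to $\|B\|$, again absorbed in the exponential-in-$T$ constant that $\tilde O$ hides. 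I would also need to check the $p$-direction boundary truncation does not spoil the estimate, which follows from the finite-propagation-speed argument already made in the text: the solution has compact support strictly inside $[0,1]^D$ for $t\le T$, so the artificial boundary is never activated. The remaining steps — counting stencil nonzeros and assembling the norm bounds — are routine.
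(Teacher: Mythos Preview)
Your proposal is correct and follows essentially the same route as the paper: write $\mathcal{K}$ as block-lower-bidiagonal, read off sparsity from the upwind stencil, bound $\|\mathcal{K}\|=O(1)$, express $\mathcal{K}^{-1}$ block-triangularly with powers of $B$, and sum to get $\|\mathcal{K}^{-1}\|\le N_t=O(dNT)$. The only refinement in the paper is that it exploits the upwind/CFL structure directly (via a Gershgorin/row-sum argument) to get $\|B\|\le 1$ exactly rather than $1+O(\Delta t)$, so the $e^{O(T)}$ factor you carry is not actually needed; your caution about the $\ell^2$ versus $\ell^\infty$ norm and the boundary truncation is well placed but does not change the outcome.
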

\begin{proof}
See Appendix~\ref{app:skappaproof}.
\end{proof}

\subsection{The observables} \label{sec:levelsetobservable}


Given any function $G: \mathbb{R}^{d} \rightarrow \mathbb{R}$ one can define the following ensemble average which we call the  observable.
\begin{definition} \label{def:levelsetobservable}
The following ensemble average we define as the observable 
\begin{equation}\label{G-HJ}
  \langle G(t,x) \rangle \equiv \int_{\mathbb{R}^d} G(p) \psi(t,x,p) dp=\frac{1}{M} \sum_{k=1}^M \int_{\mathbb{R}^d} G(p)  \delta(\phi^{[k]}(t,x,p)) dp=\frac{1}{M} \sum_{k=1}^M  \sum_{\gamma=1}^{J_k} \frac{G(u^{[k]}_{\gamma}(t,x))}{\mathcal{J}^{[k]}_{\gamma}},
\end{equation}
where the Jacobian $\mathcal{J}^{[k]}_{\gamma} \equiv|\det(\partial{\phi^{[k]}}/\partial p)|_{p=u^{[k]}_{\gamma}(t,x)}$. 
\end{definition}
This is an ensemble average of solution over the $M$ different initial data,
each with  the multi-valued solutions of $J_k$ branches. The multi-valued solution is a weighted average of each branch with weights $1/\mathcal{J}^{[k]}_{\gamma}$ that depends on the level set function $\phi^{[k]}$. \\

\noindent \textbf{Remark:} This is easily extended to the case of more general weighting of the initial conditions in Eq.~\eqref{eq:generalpsi}.\\

 Assume $\psi$ is the solution to Eq.~\eqref{Liouville-delta}, and $u^{[k]}$ is the solution to Eq.~\eqref{forced-Burgers}. One can then compute $\langle G(x,t) \rangle$ by using the numerical quadrature rule
 \begin{align}\label{eq:discreteensemblepsi1}
     \langle G(t_n,x=\vect{j}/N) \rangle=\int_{\mathbb{R}^d} G(p) \psi(t_n,\vect{j}/N,p) dp \approx \frac{1}{N^d}\sum_{\vect{l}}^N G_{\vect{l}} \psi^{\omega}_{n, \vect{j}, \vect{l}} \equiv \langle G^\omega_{n,\vect{j}}\rangle
\end{align}
where $\psi^{\omega}_{n, \vect{j}, \vect{l}} \equiv \frac{1}{M} \sum_{k=1}^M \delta_{\omega}(\phi^{[k]}(t_n,\vect{j}/N,\vect{l}/N))$ and $G_{\vect{l}} \equiv G(\vect{l}/N)$. We also use the notation $\sum_{\vect{l}}^N \equiv \sum_{l_1=1}^N...\sum_{l_d=1}^N$.\\

We can also define an ensemble average normalised by its zeroth moment:
\begin{align} \label{eq:goobservable}
    G_O(t,x)\equiv\frac{\langle G(t,x)\rangle}{\langle \mathbf{1}\rangle}=\frac{\int_{\mathbb{R}^n} G(p)\psi(t,x,p)dp}{\int_{\mathbb{R}^n}\psi(t,x,p)dp}
\end{align}
where $\mathbf{1}$ is the identity function, corresponding to the zeroth moment of $\psi$. In the cases of  $M=1$ and when there are no multi-valued solutions, one has
\begin{align}
     G_O(t,x)=G(u(t,x)).
\end{align}
Our goal is to devise a quantum algorithm to compute the observable $\langle G(t,x) \rangle$ (also $G_O(t,x)$) to precision $\epsilon$ and demonstrate that it can be more efficient on a quantum device with respect to parameters $M$, $d$ and $\epsilon$ compared to a purely classical algorithm. In the next subsection we show how different $G$ leads to different {\it physical observables} captured by $\langle G(t,x)\rangle$ and $G_O$, such as density, momentum and energy.\\

\subsubsection{Physical interpretation} \label{sec:physicalinterpretation}

 We now give physical interpretations for the  observables defined in Eq.~\eqref{G-HJ} through several physically important examples. We show in these cases that for nonlinear Hamilton-Jacobi where the WKB approximation is applicable, the corresponding Wigner function, in the semiclassical limit of the Schr\"odinger equation, obeys exactly the same  Liouville equation satisfied by  $\psi(t,x,p)$. Given different choices of $G(p)$, the observable $\langle G(t,x) \rangle$ correspond to moments of this Wigner function and they
 are {\it exactly} the  physical observables like density, momentum and energy.\\
 

We first consider the classical limit (WKB approximation) of the Schr\"odinger equation with $M=1$:
\begin{align}
    \label{schro}
    i \hbar \partial_t \Psi = -\frac{\hbar^2}{2} \Delta \Psi + V(x) \Psi\,,\qquad
    \Psi(0, x) = A_0(x) e^{i \frac{S_0(x)}{\hbar}},
\end{align}
with wavefunction $\Psi(t,x)$. The WKB analysis uses the ansatz $\Psi(t,x)=A(t,x)\exp(iS(t,x)/\hbar)$, where $A(t,x)$ and $S(t, x)$ are the amplitude and phase respectively. Ignoring $O(\hbar^2)$ terms, this ansatz results in two independent PDEs: the eikonal equation for $S$, which is a Hamilton-Jacobi PDE in Eq.~\eqref{H-J} with Hamiltonian $H(x, \nabla S)=(1/2)|\nabla S|^2+V(x)$, and the transport equation for $|A|^2$ . These two equations can be deduced from the moment-closure of  the Liouville equation
\begin{eqnarray} \label{Psi}
\partial_t w+ \nabla_p H \cdot \nabla_x w - \nabla_x H \cdot \nabla_p w=0 
\end{eqnarray} with initial data
\begin{equation}\label{Psi-IC}
    w(0,x,p)=|A_0(x)|^2 \delta(p-\nabla S_0(x)),
\end{equation}
with mono-kinetic ansatz $w(t,x,p)=|A(t,x)|^2\delta(p-\nabla_x S(t,x))$, but are not valid beyond caustics since $\delta(p-\nabla_x S(t,x))$ is not well-defined when $\nabla_x S(t,x)=u(t,x)$ becomes discontinuous. But EQ.~\eqref{Psi} is valid globally in time \cite{LP1993, GPPM}, since it unfolds the caustics in the phase space.  The problem defined by Eqs.~\eqref{Psi}-\eqref{Psi-IC} can be solved by first solving 
\begin{equation} \label{Psi-1}
\mathcal{L} \tilde{\phi}=0 \,,\qquad
    \tilde{\phi}(0,x,p)=|A_0(x)|^2 \,,
\end{equation}
and \begin{equation} \label{Psi-2}
\mathcal{L} \phi=0 \,, \quad
    \phi(0,x,p)= p-\nabla S_0(x)\,,
\end{equation}
independently and then $w$ can be obtained by $w=\tilde{\phi} \delta(\phi)$ \cite{JLOT1}. Here $\phi$ is {\it exactly} the level set function defined by \eqref{LS-Liouville} and \eqref{LS-ic}, while 
$\psi$ defined \eqref{Liouville-delta} and \eqref{Liou-I} is
exactly $w$ in the case of $A_0(x)=1$ for $M=1$.
Therefore different choices of $G$ can be used to recover moments of the Wigner function. These moments provide the classical limits of the observables of the original Schr\"odinger
equation. For instance, the zeroth, first and second moments of the Wigner function $w$ are
\begin{equation}\label{physical-observables}
   \rho(t,x)=\int w \, dp\,,\quad \rho(t,x) u(t,x) =\int pw \, dp\,,\quad 
  \frac{1}{2}\rho(t,x)u^2(t,x) =\int \frac{|p|^2}{2}  w \, dp\quad
\end{equation}
which correspond to using $G(p)=1, p, |p|^2$ respectively.  Here $\rho(t,x)$ is the classical limit to the position density $|\Psi|^2$, $\rho u$ is the momentum, or the classical limit to the current density $\hbar\, \text{Im} (\overline{\Psi} \nabla{\Psi})$, while the second moment is the classical limit of the kinetic energy $(\hbar^2/2)|\nabla \Psi|^2$. The total energy $E(t,x)$ can be recovered by combining the second and zeroth moments to choose $G(p)=(1/2)|p|^2+V(x)$ to obtain $E(t,x)=\int (|p|^2/2+V(x))w dp$. Likewise, the moments of $\psi$, if similarly defined as in \eqref{physical-observables}, give the same physical observables in the special  case of $A_0(x)=1.$\\


If one begins with more general $M>1$ initial data
\begin{equation}\label{Psi-IC-k}
    w(0,x,p)=\frac{1}{M}\sum_{k=1}^M |A^{[k]}_0(x)|^2 \delta(p-\nabla S^{[k]}_0(x))
\end{equation}
then $w(t,x,p)=\frac{1}{M}\sum_{k=1}^M w^{[k]}(t,x,p)$ where $w^{[k]}(t,x,p)$ solves  equation \eqref{Psi} with initial condition $w^{k]}(0,x,p)=|A^{[k]}_0(x)|^2 \delta(p-\nabla S^{[k]}_0(x))$, thus consequently, due to the linear superposition principle, the observables 
\begin{equation}\label{physical-observables-k}
   \int w \,dp=\frac{1}{M}\sum_{k=1}^M \rho^{[k]}(t,x) \,,\quad \int pw \, dp=\frac{1}{M} \sum_{k=1}^M\rho^{[k]}(t,x) u^{[k]}(t,x) \,,\quad 
  \int \left(\frac{p^2}{2} + V(x) \right) w \, dp=\frac{1}{M}\sum_{k=1}^M E^{[k]}(t,x)
  \quad 
\end{equation}
which are the ensemble average of the physical observables (position density, momentum and kinetic energy) of each individual observables. \\

We can also consider examples for general symmetric hyperbolic systems. The high frequency limit of general symmetric hyperbolic systems--for which geometric optics is one example-- possesses a strong similarity with
the semi-classical limit of the Schr\"odinger equation via the Wigner transform \cite{RPK}. \\

In the example of acoustic waves, the WKB approximation of such systems gives a Hamilton-Jacobi equation in Eq.~\eqref{H-J} for the phase $S$ with Hamiltonian 
\begin{equation}\label{H-GO}
    H(x,p)=\frac{1}{\sigma(x)\tau(x)}|p|
\end{equation}
where $\sigma(x)$ and $\tau(x)$ are the density and compressibility of the wave. The WKB approximation also provides a transport equation for the amplitude $|A|$. The corresponding Wigner function for this approximation also satisfies the Liouville equation in Eq.~\eqref{Psi} with the Hamiltonian in Eq.~\eqref{H-GO} \cite{GPPM}. Then its zeroth and first moments are
\begin{equation}\label{physical-observables2}
   \rho(t,x)=\int w \, dp\,,\quad \rho(t,x) u(t,x) =\int pw \, dp,
\end{equation}
which can be used to obtain the intensity $|A|=\sqrt{\rho}$ and the velocity $u=\rho(t,x) u(t,x)/\rho(t,x)=\nabla S$, where $\nabla S$ is also known as the slowness vector. \\

The second example is Maxwell's equation in an isotropic medium. One can similarly retrieve the moments of its high frequency limit via a Wigner analysis. The observables recovered include the energy and the Poynting vector \cite{RPK}. The third example is elastic waves, where a similar analysis can allow one to retrieve the kinetic energy, strain energy and energy flux of the elastic wave \cite{RPK}. The fourth example is the Dirac equation,  the relativistic version of the Schr\"odinger equation that describes very fast electrons in an electromagnetic field. In the semiclassical limit, it gives rise to a similar Liouville equation, like in geometric optics, but with a Lorentz term due to relativistic effects \cite{GPPM}.
The Wigner function can similarly recover physical observables such as the density of positrons and electrons. \cite{GPPM}.\\

We can also explain the presence of the Jacobian weights $\mathcal{J}^{[k]}_{\gamma}$ in Eq.~\eqref{G-HJ}. The Wigner function $w$ for $M=1$ has the general form $w(t, x, p)=a(t, x, p)\delta(\phi(t, x, p))$, where in general $\phi(t, x, p) \neq p-u(t, x)$ for $t>0$. By definition
$\delta(\phi(t,x,p))=\sum_{\gamma=1}^{J_k} \delta(p-u_{\gamma}(t,x))/\mathcal{J}_{\gamma}$, 
in the presence of multiple zeros of $\phi(t,x,p)$, corresponding to multi-valued solutions. 
Thus one can rewrite the Wigner function as $w(t, x, p)=\sum_{\gamma=1}^{J_k} \rho_{\gamma}(t, x, p)\delta(p-u_{\gamma}(t,x))$, where $\rho_{\gamma}(t, x, p)=a(t, x, p)/\mathcal{J}_{\gamma}$ \cite{GPPM}. This means $\rho_{\gamma}$ (with the Jacobian term contained in its definition) is the true density associated with each multi-valued solution, since $\rho(t,x)=\int w dp=\sum_{\gamma=1}^{J_k} \rho_{\gamma}$. 
The extension to $M>1$ and other moments is straightforward. The fact that this non-trivial Jacobian term (that is naturally captured by using $\psi(t,x,p)$) cannot be ignored in order to obtain the correct physical quantities is  important, an issue  we will return to at the end of Section~\ref{sec:quantumlevelset}.

\subsubsection{Classical error bounds} \label{sec:classicalerrorbounds}

Suppose one is interested in computing observables coming from solutions to Eq.~\eqref{forced-Burgers} with $M$ different initial data. Several approximations will be used in our computation.
(1) We solve instead the corresponding linear PDE problem in Eq.~\eqref{Liouville-delta} with initial data in Eq.~\eqref{Liou-I}, where the PDE will be approximated by
some classical finite difference or finite volume method on a $2d$-dimensional phase space mesh on $[0,1]^{2d}$ with mesh size $h=1/N$ and forward Euler method in time with time step $\Delta t$;  (2) The delta function $\delta$ in the initial condition will be approximated by a discrete delta function $\delta_{\omega}$ defined in Eq.~\eqref{eq:deltaomega}, with smoothing parameter $\omega=mh$, for some small integer $m>0$; (3) The observable defined by the integral in Eq.~\eqref{G-HJ} will be approximated by a quadrature rule in Eq.~\eqref{eq:discreteensemblepsi1}.
Taking all three sources of error into account, we obtain the following error bound in estimating the ensemble average. 
\begin{lemma} \label{thm:classicalerror}   The ensemble average $\langle G(t_n,\vect{j}/N) \rangle$ can be estimated by $\langle G^{\omega}_{n,\vect{j}}\rangle$ with error 
\begin{align}
    \epsilon_{CL} \equiv |\langle G(t_n,\vect{j}/N) \rangle-\langle G^{\omega}_{n,\vect{j}}\rangle| \leq C(\omega + dh/\omega^2)
\end{align}
where $C>0$ is independent of $h, \omega$. By choosing $\omega=(dh)^{1/3}$ one gets
\begin{align}
    \epsilon_{CL} \equiv |\langle G(t_n,\vect{j}/N) \rangle-\langle G^{\omega}_{n,\vect{j}}\rangle| \leq C (dh)^{1/3}.
\end{align}
\end{lemma}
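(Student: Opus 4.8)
The plan is to split the total error $\epsilon_{CL} = |\langle G(t_n,\vect{j}/N)\rangle - \langle G^\omega_{n,\vect{j}}\rangle|$ into three contributions according to the three approximations enumerated just before the statement, and bound each separately by a triangle inequality. Write
\begin{align*}
\epsilon_{CL} \le \underbrace{\left|\langle G\rangle - \int G(p)\,\psi^\omega(t_n,\vect{j}/N,p)\,dp\right|}_{\text{(A) delta regularisation}} + \underbrace{\left|\int G(p)\,\psi^\omega\,dp - \int G(p)\,\tilde\psi^\omega\,dp\right|}_{\text{(B) PDE discretisation}} + \underbrace{\left|\int G(p)\,\tilde\psi^\omega\,dp - \tfrac{1}{N^d}\sum_{\vect{l}} G_{\vect{l}}\,\psi^\omega_{n,\vect{j},\vect{l}}\right|}_{\text{(C) quadrature}},
\end{align*}
where $\psi^\omega$ is the exact solution of the Liouville equation with the \emph{smoothed} initial datum $\tfrac1M\sum_k \prod_i \delta_\omega(p_i-u^{[k]}_i(0,x))$ and $\tilde\psi^\omega$ is its numerical (finite-difference, forward-Euler) approximation. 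Because the Liouville equation is \emph{linear} and transport-type, the solution operator is an $L^1$-isometry along characteristics, so $\psi^\omega(t,x,p) = \tfrac1M\sum_k \delta_\omega(\phi^{[k]}(t,x,p))$ exactly (the smoothed analogue of Lemma~\ref{thm-psi}); this makes all three terms tractable.

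For term (A), I would use the standard quadrature/mollifier estimate: replacing $\delta$ by $\delta_\omega$ in an integral against a smooth function $G$ composed with the (locally smooth, away from caustics) branches $u^{[k]}_\gamma$ incurs an error $O(\omega)$ per branch, since $\delta_\omega$ has compact support of width $\omega$ and unit mass, and $G\circ u^{[k]}_\gamma/\mathcal{J}^{[k]}_\gamma$ is Lipschitz on the relevant region — a first-order accurate mollification. Summing over the finitely many branches and over $k$, averaging by $1/M$, this contributes $C\omega$. For term (C), the integrand $G(p)\,\tilde\psi^\omega(t_n,\vect{j}/N,p)$ is, for each $\gamma$, a bump of height $O(1/\omega^d)$ (after the product over $d$ coordinates, since $\delta_\omega(x_i)\sim 1/\omega$) and width $O(\omega)$, whose first derivative in $p$ is of order $1/\omega^{d+1}$; the midpoint/rectangle quadrature error on a grid of spacing $h$ over such a function scales like $h \times \|\partial_p(\cdot)\|_\infty \times (\text{support volume}) \sim h/\omega$ in one dimension, but carefully tracking the $d$-dimensional tensor structure one gets the stated $C\, dh/\omega^2$; I would need to check whether the paper intends $h/\omega^2$ to come from a second-derivative bound on $\delta_\omega$ (a $C^1$ mollifier has $\|\delta_\omega''\|\sim 1/\omega^3$, and trapezoidal-type quadrature gives $h^2\|\delta''\|\sim h^2/\omega^3$, which is $\le h/\omega^2$ when $h\le\omega$) — in any case, citing the classical analyses in \cite{EngTorn, EngTorn} of discrete delta functions, the net bound is $C\,dh/\omega^2$, with the factor $d$ entering through the $d$-fold product structure of $\delta_\omega$ in $\mathbb{R}^d$ and the $d$ spatial directions discretised.

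For term (B), the numerical-error estimate for the linear transport/Liouville PDE: a first-order finite-difference scheme with forward Euler, under the CFL condition $\Delta t = O(h/d)$, has truncation error $O(h)$ per step in each of the $2d$ phase-space directions and accumulates over $N_t = T/\Delta t$ steps, but because the scheme is stable ($\ell^1$-contractive for monotone/upwind discretisations of transport) the global error is $O(d h / \omega^{?})$ — here the subtlety is that the solution $\psi^\omega$ is not uniformly smooth: its derivatives blow up like $1/\omega$, so the effective truncation error is $O(h/\omega)$ times $d$ directions, giving $O(dh/\omega)$, which is dominated by $dh/\omega^2$ when $\omega\le 1$. I expect the main obstacle to be precisely this bookkeeping of how the smoothing scale $\omega$ degrades the regularity constants in the PDE-discretisation and quadrature estimates, and making sure the powers of $d$ and $\omega$ combine to exactly $C(\omega + dh/\omega^2)$ rather than something weaker; the rest is routine. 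Finally, optimising $\omega + dh/\omega^2$ over $\omega>0$ by setting $\tfrac{d}{d\omega}(\omega + dh/\omega^2)=1 - 2dh/\omega^3 = 0$ gives $\omega \sim (dh)^{1/3}$ and hence $\epsilon_{CL}\le C(dh)^{1/3}$, as claimed.
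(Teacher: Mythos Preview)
Your three-way triangle-inequality decomposition into (A) regularisation, (B) PDE discretisation, and (C) quadrature is exactly the paper's approach (their terms $I$, $III$, $II$ respectively), and the final optimisation over $\omega$ is identical. However, you have swapped the roles of (B) and (C) as the source of the dominant $dh/\omega^2$ term, and the reasoning in (B) contains a concrete error.

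For the PDE discretisation term (B), the consistency error of a first-order upwind scheme involves the \emph{second} spatial derivatives of the solution, not the first: the leading term in the Taylor expansion of $(u_{j+1}-u_j)/h - \partial_x u$ is $\tfrac{h}{2}\partial_x^2 u$. Since $\psi^\omega$ is built from $\delta_\omega$'s with $\|\partial^2\delta_\omega\|\sim 1/\omega^3$ pointwise but $\|\partial^2\psi^\omega\|_{L^1}\sim 1/\omega^2$ after integrating over the $O(\omega^d)$ support, the global $L^1$ error under a stable scheme is $O(dh/\omega^2)$, not the $O(dh/\omega)$ you wrote. This is precisely where the $dh/\omega^2$ in the lemma comes from; your recovery ``$dh/\omega$ is dominated by $dh/\omega^2$'' gets the right final bound but for the wrong reason. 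Conversely, for the quadrature term (C) the paper uses the midpoint rule, whose error is $O(h^2)\times\|\partial^2(G\psi^\omega)\|_{L^1}=O(d(h/\omega)^2)$, strictly smaller than $dh/\omega^2$ when $h\le\omega$ --- so (C) is not the dominant contribution. None of this changes the final estimate, but the bookkeeping you flagged as ``the main obstacle'' does need to be done with second rather than first derivatives in both terms.
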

\begin{proof}
See Appendix~\ref{app:classicalerrorproof}.
\end{proof}

{\bf Remark:} By using a second order finite difference scheme one can improve the above error
to $O(\sqrt{dh})$ with $\omega=\sqrt{dh}$. Higher order finite difference approximations will further reduce $\omega$.\\

This bound can also be used to estimate the total computational cost for a classical computer to solve the $(2d+1)$-dimensional linear) Liouville PDE. 

\begin{lemma} \label{lem:liouville}
If one uses $N_{\text{CL}}$ points in each dimension of the $2d$ phase space (hence $h_{\text{CL}}=1/N_{\text{CL}}$) to approximate Eq.~\eqref{Liouville-delta} to time $T$ by a first order finite difference or finite volume scheme, then $N_{\text{CL}}=O(d/\epsilon_{\text{CL}}^3)$ for $\omega=(dh_{\text{CL}})^{1/3}$, while the overall computational cost will be of $O(d^{2d+3}T(1/\epsilon_{\text{CL}})^{6d+3})$.   
\end{lemma}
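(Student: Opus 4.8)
The plan is to combine the error bound of Lemma~\ref{thm:classicalerror} with a direct operation count for a first-order finite difference (or finite volume) scheme on the $2d$-dimensional phase-space mesh, mirroring the argument already used for Lemma~\ref{lem:HJ}. First I would invoke Lemma~\ref{thm:classicalerror}: with the choice $\omega=(dh_{\text{CL}})^{1/3}$ the total error satisfies $\epsilon_{\text{CL}}\le C(dh_{\text{CL}})^{1/3}$, so to guarantee a target tolerance $\epsilon_{\text{CL}}$ it suffices to take $dh_{\text{CL}}=O(\epsilon_{\text{CL}}^{3})$, i.e. $h_{\text{CL}}=O(\epsilon_{\text{CL}}^{3}/d)$, which is exactly the claimed $N_{\text{CL}}=1/h_{\text{CL}}=O(d/\epsilon_{\text{CL}}^{3})$.

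Next I would count the cost of evolving Eq.~\eqref{Liouville-delta} to time $T$. The computational domain $[0,1]^{2d}$ carries $N_{\text{CL}}^{2d}$ grid points. Because the Liouville equation is a linear advection equation in $2d$ variables with transport speeds $\nabla_p H$ and $-\nabla_x H$ (each an $\mathbb{R}^d$-valued field with $O(1)$ entries on the bounded box), the multidimensional CFL stability condition forces $\Delta t_{\text{CL}}=O(h_{\text{CL}}/d)$, hence $N_{t,\text{CL}}=T/\Delta t_{\text{CL}}=O(dTN_{\text{CL}})$ time steps. Updating one grid point with a first-order upwind scheme requires evaluating the $2d$ directional fluxes together with the Hamiltonian derivatives, which is $O(d)$ work, and the one-time cost of initializing the discrete-delta data is of lower order. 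Multiplying, the total classical cost is $O(d)\cdot N_{\text{CL}}^{2d}\cdot O(dTN_{\text{CL}})=O\big(d^{2}T N_{\text{CL}}^{2d+1}\big)$.

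Finally I would substitute $N_{\text{CL}}=O(d/\epsilon_{\text{CL}}^{3})$ into this expression, giving $O\big(d^{2}T (d/\epsilon_{\text{CL}}^{3})^{2d+1}\big)=O\big(d^{2d+3}T(1/\epsilon_{\text{CL}})^{6d+3}\big)$, as claimed. There is no serious obstacle in this lemma; the only points that deserve care are justifying the $1/d$ factor in the CFL time step (it is the sum of $|a_i|/h_{\text{CL}}$ over all $2d$ advection directions, not a single $1/h_{\text{CL}}$, that enters the stability bound) and making sure the per-point $O(d)$ work is not undercounted — both handled exactly as in the proof of Lemma~\ref{lem:HJ}.
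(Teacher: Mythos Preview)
Your proposal is correct and follows essentially the same approach as the paper: invoke the error bound $\epsilon_{\text{CL}}\le C(dh_{\text{CL}})^{1/3}$ from Lemma~\ref{thm:classicalerror} to deduce $N_{\text{CL}}=O(d/\epsilon_{\text{CL}}^3)$, use the CFL condition $\Delta t_{\text{CL}}=O(h_{\text{CL}}/d)$ to count time steps, multiply by $O(d)$ work per grid point over $N_{\text{CL}}^{2d}$ points to get $O(d^2TN_{\text{CL}}^{2d+1})$, and substitute. The paper's proof is identical in structure, differing only in that it briefly records the more general relation $N_{\text{CL}}=O(d/(\omega^2(\epsilon_{\text{CL}}-\omega)))$ before specializing to $\omega=(dh_{\text{CL}})^{1/3}$.
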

\begin{proof}
If one uses $N_{\text{CL}}$ points in each dimension of the $2d$ phase space (hence $h_{\text{CL}}=1/N_{\text{CL}}$) to approximate \eqref{Liouville-delta} to time $T$ by a first order finite difference or finite volume scheme, the CFL condition will require $\Delta t_{\text{CL}}=O(h_{\text{CL}}/(2d))$. To reach an error of $O(\epsilon_{\text{CL}})$ one needs $N_{\text{CL}}=O(d/(\omega^2(O(\epsilon_{\text{CL}})-O(\omega))))$. (If one chooses $\omega=(dh_{\text{CL}})^{1/3}$ then $h_{\text{CL}}=O(\epsilon_{\text{CL}}^3/d)$, $N_{\text{CL}}=O(d^{1/3}/(h_{\text{CL}}^{2/3}\epsilon_{\text{CL}}))=O(d/\epsilon_{\text{CL}}^3)$.) The overall computational cost will be of $O(d N_{t, \text{CL}} N_{\text{CL}}^{2d})=O(d^2T N_{\text{CL}}^{2d+1})=O(d^{2d+3}T(1/\epsilon_{\text{CL}})^{6d+3})$.  
\end{proof}

\noindent{\bf Remark:} The above computational cost is {\it independent of} $M$. However, the cost now is much greater than that in Lemma~\ref{lem:HJ}, since one now solves a higher dimensional PDE, and with much smaller ($\omega$-dependent) mesh size to numerically resolve the discrete delta function $\delta_\omega$. \\



\subsection{The quantum algorithm to approximate physical observables} \label{sec:quantumlevelset}

Our aim is to approximate the observable $\langle G(t_n, \vect{j}/N)\rangle $ by devising the corresponding SLEP quantum algorithm. Consider $\langle G^{\omega}_{n,\vect{j}}\rangle$ defined by
\begin{align} \label{eq:discreteensemblepsi}
    \langle G^{\omega}_{n,\vect{j}}\rangle \equiv \frac{1}{N^d}\sum_{\vect{l}}^N G_{\vect{l}} \psi_{n,\vect{j}, \vect{l}}.
    \end{align}
 The initial condition in its discretised form in Eq.~\eqref{eq:psiinitialtext} can be represented by a quantum state
\begin{align}
    |\psi_0\rangle=\frac{1}{N_{\psi_{0}}}\sum_{\vect{j}^{N}} \sum_{\vect{l}}^N \psi_{0, \vect{j}, \vect{l}} |\vect{j}\rangle |\vect{l}\rangle |n=0\rangle
\end{align}
 and the normalisation is given by $N_{\psi_{0}}=\sqrt{\sum_{\vect{j}}^{N}\sum_{\vect{l}}^N |\psi_{0, \vect{j}, \vect{l}}|^2}$. In this paper, we assume access to a unitary operation $U_{initial}$ that prepares $U_{initial}|0\rangle=|\psi_{0}\rangle$. We also define the state
\begin{align}
    |G_{n, \vect{j}}\rangle \equiv \frac{1}{N_G}\sum_{\vect{l}}^{N}G^*_{\vect{l}}|\vect{l}\rangle |\vect{j}\rangle |n\rangle
\end{align}
where we note that here we do \textit{not} sum over the time step index $n$ or the spatial index $\vect{j}$.
The normalisation is $N_G=\sqrt{\sum_{\vect{l}}^N |G^*_{\vect{l}}|^2}$. \\

Using the states $|G_{n, \vect{j}, \vect{l}}\rangle$ and applying matrix inversion algorithms on $|\psi_{0}\rangle$, the observable can be recovered. Given the density matrix $\mathcal{G} \equiv |G_{n, \vect{j}}\rangle \langle G_{n,\vect{j}}|$, one observes that the expectation value
\begin{align} \label{eq:upsilondefinition}
   & \Upsilon \equiv \langle \psi_{0}|(\mathcal{M}^{-1})^{\dagger}\mathcal{G}\mathcal{M}^{-1}|\psi_{0}\rangle=\frac{1}{N^2_{\psi_{0}}}\sum_{\vect{j}', \vect{l}'}(\mathcal{M}^{-1}\psi_{0, \vect{j}', \vect{l}'})^T \mathcal{G}(\mathcal{M}^{-1}\psi_{0, \vect{j}', \vect{l}'}) \nonumber \\
   &=\frac{1}{N^2_{\psi_{0}}N_G^2}\Big|\sum_{\vect{l}}G_{\vect{l}}\psi_{n,\vect{j}, \vect{l}}\Big|^2=\frac{N^{2d}}{N^2_{\psi_{0}}N_G^2}\langle G^{\omega}_{n,\vect{j}}\rangle^2
\end{align}
where the last equality comes from Eq.~\eqref{eq:discreteensemblepsi}. Our aim is to use a quantum algorithm to extract $\Upsilon$, from which we can approximate the ensemble average 
\begin{align} \label{eq:gequation}
    \langle G(t_n,\vect{j}/N)\rangle \approx \langle G^{\omega}_{n,\vect{j}}\rangle \equiv \frac{1}{N^d}\sum_{\vect{l}=1}^N G_{\vect{l}} \psi_{n,\vect{j}, \vect{l}}=\frac{1}{N^d}N_{\psi_{0}}N_G|\sqrt{\Upsilon}|=n_{\psi_0}n_G|\sqrt{\Upsilon}|,
\end{align}
where we define $n_G \equiv N_G/N^{d/2}$ and $n_{\psi_0} \equiv N_{\psi_{0}}/N^{d/2}$. Then we have the following lemma.
\begin{lemma} \label{lem:normalisations}
The constant $n_G \equiv N_G/N^{d/2}=O(1)$ and depending on the application, the range of $n_{\psi_0}$ lies in $O(1) \leq n_{\psi_0} \equiv N_{\psi_{0}}/N^{d/2} \leq O(N^{d/2})$. Different $n_{\psi_0}$ corresponds to different initial data. If we assume the initial data has support in a box of size $\beta$, then $n_{\psi_0}=O((\beta \sqrt{N})^d)$. 
\end{lemma}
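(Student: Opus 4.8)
The plan is to estimate the two normalization constants $N_G = \sqrt{\sum_{\vect{l}}^N |G_{\vect{l}}^*|^2}$ and $N_{\psi_0} = \sqrt{\sum_{\vect{j}}^N \sum_{\vect{l}}^N |\psi_{0,\vect{j},\vect{l}}|^2}$ by recognizing the sums as Riemann-sum approximations to the corresponding $L^2$-type integrals over $[0,1]^d$ (for $G$) and $[0,1]^{2d}$ (for $\psi_0$), and then tracking how the discrete delta functions $\delta_\omega$ contribute. First I would handle $n_G$: since $G:[0,1]^d\to\mathbb{R}$ is a fixed smooth observable with $O(1)$ values, $\frac{1}{N^d}\sum_{\vect{l}}^N |G_{\vect{l}}|^2 \to \int_{[0,1]^d}|G(p)|^2\,dp = O(1)$, hence $N_G^2 = N^d\cdot O(1)$ and $n_G = N_G/N^{d/2} = O(1)$. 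That part is immediate from the definition of the Riemann sum and the boundedness of $G$.

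The substance is the estimate for $n_{\psi_0}$. I would write $\psi_{0,\vect{j},\vect{l}} = \frac{1}{M}\sum_{k=1}^M \prod_{i=1}^d \delta_\omega(l_i h - u_i^{[k]}(0,h\vect{j}))$ and square it. The key structural fact is that for fixed $\vect{j}$, the factor $\prod_{i=1}^d \delta_\omega(l_i h - u_i^{[k]}(0,h\vect{j}))$ is supported on only $O(m^d)$ values of $\vect{l}$ (those within $\omega=mh$ of the point $u^{[k]}(0,h\vect{j})$ in each coordinate), and on its support each $\delta_\omega$ factor is of size $O(1/\omega) = O(1/(mh)) = O(N/m)$, so the product is $O((N/m)^d)$. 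Thus $\sum_{\vect{l}}^N |\psi_{0,\vect{j},\vect{l}}|^2 = O(m^d)\cdot O((N/m)^{2d}) = O(N^{2d}/m^d)$ for each $\vect{j}$ where the support is nonempty (absorbing the $1/M$ and cross terms, which only help). The number of $\vect{j}$ for which the support is nonempty is governed by the support of the initial data $u_0(x)$ in $x$: if that support has size $\beta$, there are $O((\beta N)^d)$ such grid points $\vect{j}$. Combining, $N_{\psi_0}^2 = O((\beta N)^d)\cdot O(N^{2d}/m^d) = O(\beta^d N^{3d}/m^d)$, so $n_{\psi_0} = N_{\psi_0}/N^{d/2} = O(\beta^d N^d/m^{d/2}) $; with $m$ a fixed small integer this is $O((\beta\sqrt{N})^d)$ up to the constant hidden in $m$, matching the claimed formula $n_{\psi_0}=O((\beta\sqrt N)^d)$. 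The two extremes then follow: when $\beta$ is $O(1/\sqrt N)$ (initial data concentrated on $O(1)$ grid cells, as for a single mono-kinetic point) one gets $n_{\psi_0}=O(1)$, and when $\beta=O(1)$ (initial data spread over the whole box) one gets $n_{\psi_0}=O(N^{d/2})$, giving the stated range $O(1)\le n_{\psi_0}\le O(N^{d/2})$.

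The main obstacle I anticipate is being careful about what "support of size $\beta$" means and whether the heuristic "$\delta_\omega \sim 1/\omega$ on a set of $m$ points" is being applied correctly in the $L^2$ rather than $L^1$ norm — the $L^1$ mass of $\delta_\omega$ is normalized to $1$, but the $L^2$ mass scales like $1/\sqrt{\omega}$, which is precisely why an extra factor of $N$ (versus $N^{1/2}$) appears per dimension and produces the $\sqrt N$ rather than $N$ in the final bound. I would also need to make sure the cross terms in $|\frac{1}{M}\sum_k(\cdots)|^2$ do not change the order — they are bounded by the diagonal terms by Cauchy–Schwarz, and if the initial level sets for different $k$ overlap this only affects constants, not the power of $N$ — and that the map $\vect{j}\mapsto u^{[k]}(0,h\vect{j})$ does not collapse the count of occupied $\vect{l}$-cells (which is controlled since $u_0$ is bounded, so the image stays in $[0,1]^d$ and the $p$-domain truncation from the discussion preceding Lemma~\ref{lem:skappa} applies). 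Beyond that the argument is a routine Riemann-sum/support-counting estimate.
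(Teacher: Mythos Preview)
Your approach---Riemann-sum identification for $n_G$, and direct support counting on $\delta_\omega$ to estimate $N_{\psi_0}^2$---is essentially the paper's strategy (the paper phrases the latter as a continuous $L^2$ bound $\int\!\int|\psi_0|^2\le 1/\omega^d$ followed by a quadrature relation to the discrete sum, but this is the same computation you carry out discretely).

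However, there are two genuine gaps. First, your final arithmetic does not produce the claimed bound: from $N_{\psi_0}^2=O(\beta^d N^{3d}/m^d)$ one gets $N_{\psi_0}=O(\beta^{d/2}N^{3d/2}/m^{d/2})$ and hence $n_{\psi_0}=O(\beta^{d/2}N^{d}/m^{d/2})$, not $O(\beta^d N^d/m^{d/2})$; and even your written expression $\beta^d N^d=(\beta N)^d$ is not $(\beta\sqrt N)^d=\beta^d N^{d/2}$. So your counting, carried through correctly, does \emph{not} land on the stated formula---you need to revisit either the quadrature weight or what ``support of size $\beta$'' is controlling. (The paper gets the stated $N_{\psi_0}=O((\beta N)^d)$ by using a quadrature weight $1/(m N)^d$ rather than $h^{2d}=1/N^{2d}$, which you should scrutinize.) Second, your lower bound $n_{\psi_0}\ge O(1)$ is only an example: plugging $\beta\sim 1/\sqrt N$ into the upper bound shows existence of initial data with $n_{\psi_0}=O(1)$, not that every initial datum satisfies this. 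The paper's argument is different and structural: it uses the identity $\langle G^\omega_{n,\vect{j}}\rangle=n_{\psi_0}n_G\sqrt{\Upsilon}$ from Eq.~\eqref{eq:gequation}, the fact that the observable is $O(1)$, $n_G=O(1)$, and the bound $\Upsilon\le 1$ (since $\Upsilon$ is realized as an overlap $\langle 0|\mathcal{U}|0\rangle$ in the amplitude-estimation step) to conclude $n_{\psi_0}\gtrsim O(1)$ for all admissible data.
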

\begin{proof}

See Appendix~\ref{sec:normalisations}.
\end{proof}
\noindent \textbf{Remark:} The definition $n_{\psi_0}=N_{\psi_0}/N^{d/2}$ can be an underestimate in situations where the $N^d$ term in Eq.~\eqref{eq:gequation} overestimates the number of non-zero factors in the summation. For instance, for a point source where $\beta \sim 1/N$, $n_{\psi_0}<1$. To deal with similar scenarios, more information about the problem is required to improve the quadrature rule approximation in Eq.~\eqref{eq:gequation}. \\

 However, since the quantum device can only output an approximation $\tilde{\Upsilon}$ of $\Upsilon$ to finite precision, we can only access the estimate
\begin{align}
    \langle \tilde{G}^{\omega}_{n,\vect{j}}\rangle \equiv n_{\psi_0}n_G|\sqrt{\tilde{\Upsilon}}|
\end{align}
where $|\sqrt{\tilde{\Upsilon}}-\sqrt{\Upsilon}|\leq \epsilon_G$. Then we have the following contributions to the total error in estimating $\langle G(t_n,\vect{j}/N)) \rangle$. \\

\begin{lemma} \label{lem:3errors} The error $|\langle G(t_n,\vect{j}/N) \rangle-\langle \tilde{G}^{\omega}_{n,\vect{j}}\rangle|$ can be broken into two independent sources of error
\begin{align}
    & |\langle G(t_n,\vect{j}/N) \rangle-\langle \tilde{G}^{\omega}_{n,\vect{j}}\rangle| \leq |\langle G(t_n,\vect{j}/N) \rangle-\langle G^{\omega}_{n,\vect{j}}\rangle|+|\langle G^{\omega}_{n,\vect{j}}\rangle-\langle \tilde{G}^{\omega}_{n,\vect{j}}\rangle| \nonumber \\
    & =\epsilon_{\text{CL}}+\epsilon_{Q}\leq \epsilon.
\end{align}
Here $\epsilon_{\text{CL}}$ comes from approximating the solutions of the discretised linear PDE (Liouville equation) corresponding to the original Liouville equation, as estimated in Lemma~\ref{thm:classicalerror}. The contributions to the quantum sources of error is denoted $\epsilon_Q$. If $|\sqrt{\tilde{\Upsilon}}-\sqrt{\Upsilon}| \leq \epsilon_G$ and we impose $\epsilon_{\text{CL}}\sim\epsilon_Q$, then it is sufficient to choose $\epsilon_G \sim \epsilon/(n_Gn_{\psi_0}) \sim \epsilon_{\text{CL}}/(n_Gn_{\psi_0})$ where $n_G=O(1)$, $n_{\psi_0}\geq O(1)$.
\end{lemma}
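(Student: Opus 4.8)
The plan is to prove this by a two-term triangle inequality followed by elementary bookkeeping, using the explicit relations established just above the statement. First I would insert the intermediate quantity $\langle G^{\omega}_{n,\vect{j}}\rangle$ — the value obtained from the \emph{discretised} Liouville PDE and the quadrature rule but with \emph{exact} matrix inversion — and write
\[
|\langle G(t_n,\vect{j}/N) \rangle-\langle \tilde{G}^{\omega}_{n,\vect{j}}\rangle| \leq |\langle G(t_n,\vect{j}/N) \rangle-\langle G^{\omega}_{n,\vect{j}}\rangle|+|\langle G^{\omega}_{n,\vect{j}}\rangle-\langle \tilde{G}^{\omega}_{n,\vect{j}}\rangle|.
\]
The first term on the right is precisely the quantity bounded in Lemma~\ref{thm:classicalerror}, which bundles the three classical approximations (finite-difference discretisation of the Liouville equation, smoothing of the initial delta function, and the quadrature approximation of the momentum integral); I would label it $\epsilon_{\text{CL}}$ and recall $\epsilon_{\text{CL}}\le C(dh)^{1/3}$.

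Next I would bound the second, purely quantum, term. From Eq.~\eqref{eq:gequation} we have $\langle G^{\omega}_{n,\vect{j}}\rangle = n_{\psi_0}n_G\,|\sqrt{\Upsilon}|$, and by definition $\langle \tilde{G}^{\omega}_{n,\vect{j}}\rangle = n_{\psi_0}n_G\,|\sqrt{\tilde{\Upsilon}}|$, so
\[
|\langle G^{\omega}_{n,\vect{j}}\rangle-\langle \tilde{G}^{\omega}_{n,\vect{j}}\rangle| = n_{\psi_0}n_G\,\big|\,|\sqrt{\Upsilon}|-|\sqrt{\tilde{\Upsilon}}|\,\big| \le n_{\psi_0}n_G\,|\sqrt{\Upsilon}-\sqrt{\tilde{\Upsilon}}| \le n_{\psi_0}n_G\,\epsilon_G,
\]
where the middle step is the reverse triangle inequality — and in fact $\Upsilon\ge 0$ because $\mathcal{G}=|G_{n,\vect{j}}\rangle\langle G_{n,\vect{j}}|$ is positive semidefinite, so the outer absolute values are redundant — and the last step uses the hypothesis $|\sqrt{\tilde{\Upsilon}}-\sqrt{\Upsilon}|\le\epsilon_G$. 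I would then define the quantum error to be $\epsilon_Q\equiv n_{\psi_0}n_G\,\epsilon_G$, giving the total bound $\epsilon_{\text{CL}}+\epsilon_Q$.

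To obtain the stated budget allocation, I would impose the balancing $\epsilon_{\text{CL}}\sim\epsilon_Q$; then requiring the sum to be at most $\epsilon$ forces each term to be $\sim\epsilon$, so $n_{\psi_0}n_G\,\epsilon_G=\epsilon_Q\sim\epsilon$, i.e. $\epsilon_G\sim\epsilon/(n_Gn_{\psi_0})$, which — since the same balancing makes $\epsilon\sim\epsilon_{\text{CL}}$ — is equivalently $\epsilon_G\sim\epsilon_{\text{CL}}/(n_Gn_{\psi_0})$. The ranges $n_G=O(1)$ and $n_{\psi_0}\ge O(1)$ are imported verbatim from Lemma~\ref{lem:normalisations}. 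There is no genuinely hard step here: this is a packaging lemma. The one point worth flagging is that the SLEP subroutine is set up to deliver a guarantee on $\sqrt{\Upsilon}$ (equivalently an additive error on $\langle G^{\omega}_{n,\vect{j}}\rangle/(n_{\psi_0}n_G)$) rather than on $\Upsilon$ itself; this is exactly what lets $\epsilon_G$ propagate linearly into $\epsilon_Q$ instead of having to push an additive error on $\Upsilon$ through a square root, which would have spoiled the precision scaling.
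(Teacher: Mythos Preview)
Your proposal is correct and follows essentially the same approach as the paper: a triangle-inequality split into $\epsilon_{\text{CL}}$ (handled by Lemma~\ref{thm:classicalerror}) and $\epsilon_Q$, followed by the identity $\langle G^{\omega}_{n,\vect{j}}\rangle = n_{\psi_0}n_G|\sqrt{\Upsilon}|$ to bound $\epsilon_Q \le n_{\psi_0}n_G\,\epsilon_G$, and then the balancing $\epsilon_{\text{CL}}\sim\epsilon_Q\sim\epsilon$ to read off $\epsilon_G\sim\epsilon/(n_Gn_{\psi_0})$. Your explicit invocation of the reverse triangle inequality and the remark that $\Upsilon\ge 0$ are in fact slightly more careful than the paper's one-line version, but the argument is the same.
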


\begin{proof}
The quantum error can be written as
\begin{align} \label{eq:epsilonQG}
    &\epsilon_Q=|\langle G^{\omega}_{n,\vect{j}}\rangle-\langle \tilde{G}^{\omega}_{n,\vect{j}}\rangle|=n_Gn_{\psi_0}(|\sqrt{\Upsilon}|-|\sqrt{\tilde{\Upsilon}}|)\leq \epsilon_Gn_Gn_{\psi_0}.
\end{align}
If $\epsilon_{\text{CL}} \sim\epsilon_Q$, it's  sufficient to choose $\epsilon_Gn_Gn_{\psi_0} \sim \epsilon$ and the results follow.
 \end{proof} 


Before stating our main theorems, we prove the following lemma.

\begin{lemma} \label{lem:bslep2}
 A quantum algorithm can be constructed that takes the following inputs: (i) sparse access \\
 $(s, \|\mathcal{M}\|_{max}, O_{M}, O_{F})$ to a $2^m \times 2^m$ invertible Hermitian matrix $\mathcal{M}$ such that $\|\mathcal{M}\| \leq s\|\mathcal{M}_{max}\|$ (ii) $m$-qubit unitary $L(t_n, \vect{j}/n)$ where $L(t_n, \vect{j}/N)|0\rangle=|G_{n, \vect{j}}\rangle$ (iii) an accuracy $\epsilon' \in [1/2^m, 1]$ and (iv) $m$-qubit unitary black box $U_{initial}$ where $U_{initial}|0\rangle=|\psi_{0}\rangle$. The algorithm then returns with probability at least $2/3$ an $\epsilon'$-additive approximation to $\Upsilon \equiv \langle \psi_{0}|(\mathcal{M}^{-1})^{\dagger}\mathcal{G}\mathcal{M}^{-1}|\psi_{0}\rangle$ where $\mathcal{G} \equiv |G_{n, \vect{j}}\rangle \langle G_{n,\vect{j}}|$. This algorithm makes $\mathcal{O}(\kappa^2/(\|\mathcal{M}\|\epsilon'))$ queries to $U_{\mathcal{G}}$ and $U_{initial}$,  $\mathcal{O}(s\|\mathcal{M}\|_{max}\kappa^3\log(\kappa^2/(\|\mathcal{M}\|\epsilon'))/(\|\mathcal{M}\|\epsilon'))$ queries to sparse oracles for $\mathcal{M}$ and $\mathcal{O}(\kappa^2(m+ s\|\mathcal{M}\|_{max}\kappa\log(\kappa^2/(\|\mathcal{M}\|\epsilon'))))/(\|\mathcal{M}\|\epsilon'))$ additional $2$-qubit gates.
\end{lemma}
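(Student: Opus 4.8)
The plan is to reduce the statement to Lemma~\ref{lem:bslep}: that lemma already solves exactly this kind of observable-readout problem, but it demands \emph{block} access to $\mathcal{M}$ and to $\mathcal{G}$, whereas here we are handed only \emph{sparse} access to $\mathcal{M}$ and a state-preparation unitary $L(t_n,\vect{j}/N)$. So the work is to manufacture the two block-access descriptions from the given inputs, then invoke Lemma~\ref{lem:bslep} and obtain the stated complexities by substitution once all normalization constants are tracked carefully.

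First I would convert the sparse access $(s,\|\mathcal{M}\|_{max},O_M,O_F)$ into block access via Lemma~\ref{lem:sparsetoblock}. This produces a $(\alpha_{\mathcal{M}},n_{\mathcal{M}},0,U_{\mathcal{M}})$-block-encoding with $\alpha_{\mathcal{M}}=s\|\mathcal{M}\|_{max}$ and $n_{\mathcal{M}}=O(m)$, where each application of $U_{\mathcal{M}}$ (including $U_{\mathcal{M}}^{\dagger}$ and the controlled versions) costs $O(1)$ queries to $O_M,O_F$ plus $O(m+\mathrm{polylog}(\cdots))$ two-qubit gates. Equivalently $U_{\mathcal{M}}$ is a $(1,n_{\mathcal{M}},0)$-block-encoding of the subnormalized matrix $\widetilde{\mathcal{M}}:=\mathcal{M}/\alpha_{\mathcal{M}}$, which is the object actually inverted inside the SLEP routine. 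Since $\widetilde{\mathcal{M}}$ has the same condition number $\kappa$ as $\mathcal{M}$ but smallest singular value $\sigma_{\min}(\mathcal{M})/\alpha_{\mathcal{M}}$, the cost of matrix inversion on this block is governed by the effective condition number $\bar{\kappa}=\alpha_{\mathcal{M}}/\sigma_{\min}(\mathcal{M})=s\|\mathcal{M}\|_{max}\,\kappa/\|\mathcal{M}\|$; this is the origin of the $s\|\mathcal{M}\|_{max}$ and $1/\|\mathcal{M}\|$ factors that appear in the final bounds but not in Lemma~\ref{lem:bslep}.

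Second I would build block access to $\mathcal{G}=|G_{n,\vect{j}}\rangle\langle G_{n,\vect{j}}|$ from $L(t_n,\vect{j}/N)$. Writing $\mathcal{G}=L(t_n,\vect{j}/N)\,|0^{m}\rangle\langle 0^{m}|\,L(t_n,\vect{j}/N)^{\dagger}$, and noting that $|0^{m}\rangle\langle 0^{m}|$ is trivially a $(1,m,0)$-block-encoding of itself (realized by a CNOT-ladder onto $m$ fresh ancillas followed by projecting them back to $|0^{m}\rangle$), conjugation by $L$ and $L^{\dagger}$ gives a $(\alpha_{\mathcal{G}},n_{\mathcal{G}},0,U_{\mathcal{G}})$-block-encoding of $\mathcal{G}$ with $\alpha_{\mathcal{G}}=1$ and $n_{\mathcal{G}}=O(m)$, costing $O(1)$ uses of $L,L^{\dagger}$ and $O(m)$ extra gates. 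The rank-one projector structure of $\mathcal{G}$ makes $\alpha_{\mathcal{G}}=1$ automatic, so this costs nothing more than the general density-matrix case treated in Lemma~\ref{lem:bslep}.

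With both block encodings in hand I would invoke Lemma~\ref{lem:bslep}. Because $|\psi_{0}\rangle=U_{initial}|0\rangle$ is a normalized state, the vector $y$ with $y_i=\langle i|U_{initial}|0\rangle$ is a unit vector, and for $x=\mathcal{M}^{-1}y$ one has $x^{T}\mathcal{G}x=|\langle G_{n,\vect{j}}|\mathcal{M}^{-1}|\psi_{0}\rangle|^{2}=\Upsilon$, so the SLEP algorithm indeed returns an $\epsilon'$-additive estimate of $\Upsilon$ with success probability $2/3$; moreover the admissible window $[\alpha_{\mathcal{G}}/2^{m},\alpha_{\mathcal{G}}]=[1/2^{m},1]$ matches input (iii). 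The query and gate counts then follow by substituting $\alpha_{\mathcal{G}}=1$, $\alpha_{\mathcal{M}}=s\|\mathcal{M}\|_{max}$, $n_{\mathcal{G}}=n_{\mathcal{M}}=O(m)$, and the effective condition number $\bar{\kappa}$ into the bounds of Lemma~\ref{lem:bslep}, replacing each $U_{\mathcal{M}}$-query by $O(1)$ sparse-oracle queries, each $U_{\mathcal{G}}$-query by $O(1)$ queries to $L$, and adding the $O(m+\mathrm{polylog})$ conversion gates per oracle call. I expect the main obstacle to be exactly this normalization bookkeeping: one must verify that the factors of $\alpha_{\mathcal{M}}$ incurred when passing to $\widetilde{\mathcal{M}}$ combine across the target expectation value, the requested additive precision, and the effective condition number so as to reproduce precisely the powers of $s\|\mathcal{M}\|_{max}$, $\|\mathcal{M}\|$, $\kappa$ and $\epsilon'$ in the statement, with no spurious extra factors; the remainder is routine substitution into the cited results.
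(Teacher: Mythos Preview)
Your high-level strategy matches the paper's: convert sparse access to block access, manufacture a block-encoding of $\mathcal{G}$ from the state-preparation unitary, then run the SLEP routine. However, there is a genuine gap at the first step. You assert that Lemma~\ref{lem:sparsetoblock} yields a $(\alpha_{\mathcal{M}},n_{\mathcal{M}},0,U_{\mathcal{M}})$-block-encoding with $\delta_{\mathcal{M}}=0$. But look at the statement of that lemma: the gate count is $\mathcal{O}(m+\log^{2.5}(s\|\mathcal{M}\|_{max}/\delta_{\mathcal{M}}))$, which diverges as $\delta_{\mathcal{M}}\to 0$. One cannot obtain an exact block-encoding from sparse access with finitely many gates. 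Since Lemma~\ref{lem:bslep} is stated only for block access $(\alpha_{\mathcal{M}},n_{\mathcal{M}},0,U_{\mathcal{M}})$, you cannot invoke it directly, and your reduction breaks down precisely here.

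The paper confronts this head-on: it explicitly notes that ``one cannot apply Lemma~\ref{lem:bslep} \ldots directly at this point and needs to extend to the case where one is instead given block access $(\alpha_{\mathcal{M}},n_{\mathcal{M}},\delta_{\mathcal{M}}>0,U_{\mathcal{M}})$.'' Rather than quoting Lemma~\ref{lem:bslep} as a black box, the paper reopens its proof and redoes the polynomial-approximation step for $\mathcal{M}^{-1}$ with a nonzero input error (this is Lemma~\ref{lem:finiteerrorm}), tracking how $\delta_{\mathcal{M}}$ propagates into the block-encoding error $\delta$ of $\mathcal{M}^{-1}$ and thence into the additive error of the amplitude-estimation output. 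Only after this error-chain analysis can one choose $\delta_{\mathcal{M}}$ small enough (polynomially in $\epsilon'$, $\kappa$, etc.) so that its contribution is absorbed into the logarithmic gate overhead. Your write-up would need to supply this stability argument, or at least flag that Lemma~\ref{lem:bslep} must be extended to imperfect block-encodings; as written, the ``routine substitution'' you anticipate is not routine because the hypothesis of the lemma you are substituting into is not met.
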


\begin{proof}
See Appendix~\ref{app:A} for details of the proof. 
\end{proof}
 To estimate the observable in Definition~\ref{def:levelsetobservable}, a quantum algorithm is employed to estimate $\Upsilon$ (see Appendix~\ref{app:algorithmsteps} for a summary of the basic steps) with the following quantum query and gate complexities.

\begin{theorem} \label{thm:qquery} A quantum algorithm that takes sparse access $(s=O(d), \|\mathcal{M}\|_{max}=O(1), O_{M}, O_{F})$ to $\mathcal{M}$, where $\|\mathcal{M}\|= O(1)$, and access to the unitaries $L(t_n,\vect{j}/N)$, where $L(t_n,\vect{j}/N)|0\rangle=|G_{n, \vect{j}}\rangle$ and $U_{initial}$, where $U_{initial}|0\rangle=|\psi_{0}\rangle$, is able to estimate the ensemble average $\langle G(T, x)\rangle$ at time $T=t_n$, with $M$ initial data, to precision $\epsilon$ with an upper bound on the query complexity $\mathcal{Q}$ 
\begin{align} \label{eq:qquery}
  \mathcal{Q}= \mathcal{O}\left(\frac{n_{\psi_0}^2d^7T^3}{\epsilon^{10}}\log \left(\frac{n_{\psi_0}^2d^4T^2}{\epsilon^{7}}\right) \right)
\end{align}
and the same order of additional $2$-qubit gates, where we suppress all $O(1)$ terms except $T=O(1)$ and $n_{\psi_0}\geq O(1)$. 
\end{theorem}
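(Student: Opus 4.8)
The plan is to assemble the query complexity in Eq.~\eqref{eq:qquery} by feeding the parameters established in earlier lemmas into the modified SLEP subroutine of Lemma~\ref{lem:bslep2}, and then accounting for the amplification needed to convert the constant-success-probability estimate of $\Upsilon$ into an $\epsilon$-accurate estimate of $\langle G(T,x)\rangle$. First I would fix the internal accuracy: by Lemma~\ref{lem:3errors}, since we split $\epsilon_{\text{CL}}\sim\epsilon_Q$, the quantum subroutine must deliver $|\sqrt{\tilde\Upsilon}-\sqrt\Upsilon|\le\epsilon_G\sim\epsilon/(n_Gn_{\psi_0})$ with $n_G=O(1)$; since $\Upsilon=(N^{2d}/(N^2_{\psi_0}N_G^2))\langle G^\omega_{n,\vect j}\rangle^2$ and $\sqrt\Upsilon\le O(1)$, an additive error $\epsilon'$ on $\Upsilon$ translates to an error of order $\epsilon'$ (up to $O(1)$ factors) on $\sqrt\Upsilon$ near generic points, so I would take $\epsilon'\sim\epsilon_G^2\sim\epsilon^2/(n_{\psi_0}^2)$ (being careful with the square-root near zero, which is where the looseness in the exponent comes from — see the obstacle below). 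Next I would substitute the PDE-specific parameters from Lemma~\ref{lem:skappa}: $s=O(d)$, $\|\mathcal{M}\|_{\max}=O(1)$, $\|\mathcal{M}\|=O(1)$, and $\kappa\le O(dNT)$.

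The crux is then to express everything in terms of $\epsilon$. From Lemma~\ref{thm:classicalerror} the classical discretisation error is $\epsilon_{\text{CL}}\le C(dh)^{1/3}$ with the optimal choice $\omega=(dh)^{1/3}$, so to achieve $\epsilon_{\text{CL}}\sim\epsilon$ one needs $h\sim\epsilon^3/d$, i.e. $N=1/h\sim d/\epsilon^3$. Plugging this into the condition number gives $\kappa\le O(dNT)=O(d^2 T/\epsilon^3)$. Now I would take the dominant term in Lemma~\ref{lem:bslep2}, namely the query count to the sparse oracles for $\mathcal{M}$, which is $\mathcal{O}(s\|\mathcal{M}\|_{\max}\kappa^3\log(\kappa^2/(\|\mathcal{M}\|\epsilon'))/(\|\mathcal{M}\|\epsilon'))$. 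Substituting $s=O(d)$, $\kappa=O(d^2T/\epsilon^3)$, and $\epsilon'\sim\epsilon^2/n_{\psi_0}^2$:
\begin{align}
  \mathcal{Q} &= \mathcal{O}\!\left(d\cdot\frac{d^6T^3}{\epsilon^9}\cdot\frac{n_{\psi_0}^2}{\epsilon^2}\log\!\left(\frac{d^4T^2/\epsilon^6}{\epsilon^2/n_{\psi_0}^2}\right)\right)
   = \mathcal{O}\!\left(\frac{n_{\psi_0}^2 d^7 T^3}{\epsilon^{11}}\log\!\left(\frac{n_{\psi_0}^2 d^4 T^2}{\epsilon^{8}}\right)\right), \nonumber
\end{align}
which is essentially Eq.~\eqref{eq:qquery} up to the precise exponent bookkeeping (the $\epsilon^{10}$ versus $\epsilon^{11}$ and the log argument depend on exactly how one balances $\epsilon_{\text{CL}}$ against $\epsilon_Q$ and whether one squares $\epsilon_G$ to get $\epsilon'$ or can relax it; I would recheck that step carefully). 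The gate count matches the query count up to logarithmic factors by the corresponding clause of Lemma~\ref{lem:bslep2}, so the "same order of additional 2-qubit gates" claim follows for free. The $M$-independence is automatic: $M$ entered the construction only through the single initial state $|\psi_0\rangle$ via Lemma~\ref{thm-psi}, so it never appears in $s$, $\kappa$, $\|\mathcal{M}\|_{\max}$, and therefore not in $\mathcal{Q}$.

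The main obstacle I expect is the square-root sensitivity: recovering $\langle G^\omega_{n,\vect j}\rangle=n_{\psi_0}n_G|\sqrt\Upsilon|$ from an estimate of $\Upsilon$ requires $|\sqrt{\tilde\Upsilon}-\sqrt\Upsilon|\le\epsilon_G$, and $x\mapsto\sqrt x$ has unbounded derivative near $0$, so in the worst case (small observable) one cannot simply set $\epsilon'\sim\epsilon_G^2$; one either needs a lower bound on $\Upsilon$ (equivalently on $\langle G\rangle$) built into the problem assumptions, or one must invest extra precision, which is precisely what inflates the $\epsilon$ exponent from a naive count to the stated $\epsilon^{10}$. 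A secondary bookkeeping point is threading $n_{\psi_0}$ correctly — it enters both through $\epsilon_G\sim\epsilon/(n_Gn_{\psi_0})$ and, via Lemma~\ref{lem:normalisations}, is bounded by $O(N^{d/2})$, but the theorem keeps it as an explicit parameter rather than substituting its worst-case value, so I must carry it symbolically throughout. Everything else (the $\log$ argument, the $T^3$ from $\kappa^3\propto(NT)^3$ with $N$ fixed by $\epsilon$, collecting $O(1)$ factors) is routine substitution.
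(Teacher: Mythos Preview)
Your overall strategy matches the paper's: feed $s=O(d)$, $\|\mathcal{M}\|_{\max}=O(1)$, $\kappa=O(dNT)$ from Lemma~\ref{lem:skappa} into Lemma~\ref{lem:bslep2}, then substitute $N\sim d/\epsilon^3$ from Lemma~\ref{lem:liouville}. The $M$-independence and the gate-count claim follow exactly as you say.

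The gap is precisely the step you flagged as the obstacle, and the paper resolves it differently than you guessed. You take $\epsilon'\sim\epsilon_G^2\sim\epsilon^2/n_{\psi_0}^2$, which is the worst-case square-root conversion and yields the $\epsilon^{11}$ (and $\epsilon^8$ in the log) that you obtained. The paper instead uses $\Delta(x^2)=2x\,\Delta x$ with $x=\sqrt{\Upsilon}$, giving $\epsilon'\sim\sqrt{\Upsilon}\,\epsilon_G$, \emph{not} $\epsilon_G^2$. The missing observation is that $\sqrt{\Upsilon}$ has a known scale: since $\langle G^\omega_{n,\vect{j}}\rangle=n_{\psi_0}n_G\sqrt{\Upsilon}$ and the physical observable is taken to be $O(1)$, one has $\sqrt{\Upsilon}\sim 1/(n_G n_{\psi_0})$. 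Hence $\epsilon'\sim\epsilon_G/(n_G n_{\psi_0})\sim\epsilon/n_{\psi_0}^2$, linear rather than quadratic in $\epsilon$. Plugging this $\epsilon'$ into the dominant term $s\kappa^3\log(\kappa^2/\epsilon')/\epsilon'$ of Lemma~\ref{lem:bslep2} gives $d\cdot(d^2T/\epsilon^3)^3\cdot(n_{\psi_0}^2/\epsilon)\log(n_{\psi_0}^2 d^4 T^2/\epsilon^7)$, which is exactly the stated $\epsilon^{10}$ and $\epsilon^7$.

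So the ``lower bound on $\Upsilon$'' you asked for is supplied by the (implicit) assumption that the observable being computed is itself $O(1)$; this is where the paper buys back one power of $\epsilon$ relative to your count. Once you replace $\epsilon'\sim\epsilon^2/n_{\psi_0}^2$ by $\epsilon'\sim\epsilon/n_{\psi_0}^2$, your substitution reproduces Eq.~\eqref{eq:qquery} verbatim.
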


\begin{proof}
See Appendix~\ref{app:theoremhj} for details of the proof. 
\end{proof}

The complexity in Eq.~\eqref{eq:qquery} has several notable features. Firstly, we see that $\mathcal{Q}$ is {\it independent} of $M$, whereas the classical cost $\mathcal{C}= O(MTd^{d+4}(1/\epsilon)^{d+1})$ to solve the Hamilton-Jacobi equation from Lemma~\ref{lem:HJ} is linear in $M$. Secondly, the classical cost $\mathcal{C}$ contains exponential terms in dimension like $d^d$ and $(1/\epsilon)^d$, which is absent in applications where $n_{\psi_0}$ does not grow so quickly. Thirdly, this quantum algorithm is valid for all $T$, whereas some previous quantum algorithms for nonlinear PDEs the linear approximations are no longer valid for larger $T$, for {\it any}  nonlinearity (e.g \cite{lloyd2020quantum}). This is particularly important in the case of strong nonlinearity where it maybe interesting to study the large $T$ behavior where the nonlinearity cannot be well-approximated by low-order polynomials. However, there is no quantum advantage in $T$ for $T>1$ since $\mathcal{C}$ is linear in $T$ while the quantum algorithm depends on $T^3$. We note that this absence of quantum advantage in $T$ also holds for existing quantum algorithms for both linear and nonlinear ODEs and PDEs. Thus, there is potential quantum advantage on the three parameters $M, d, \epsilon$. \\
\begin{corollary} \label{cor:quantumhj1}  Let $\mathcal{C}$ be the cost to compute an observable from a $(d+1)$-dimensional Hamilton-Jacobi equation with purely classical methods and $\mathcal{Q}$ be the cost of our quantum algorithm. We say there is a quantum advantage in estimating the observables when $\mathcal{Q}=\littleo(\mathcal{C})$.  To attain a quantum advantage it is then sufficient for the following condition to hold
\begin{align}
\littleo\left(\frac{Md^{d-4}}{n_{\psi_0}^2T^2}\left(\frac{1}{\epsilon}\right)^{d-9}\right)=\tilde{O}(1)
\end{align}
where $\tilde{O}$ suppresses all logarithmic terms in $d, 1/\epsilon, T$. 
\end{corollary}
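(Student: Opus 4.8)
The plan is to read off the sufficient condition by a straight comparison of the two costs already established: the classical cost $\mathcal{C}$ of Lemma~\ref{lem:HJ} and the quantum query/gate complexity $\mathcal{Q}$ of Theorem~\ref{thm:qquery}. Since ``quantum advantage'' is here \emph{defined} by $\mathcal{Q}=\littleo(\mathcal{C})$, i.e.\ $\mathcal{C}/\mathcal{Q}\to\infty$, the argument reduces to forming the ratio $\mathcal{C}/\mathcal{Q}$, cancelling the common powers of $d$, $1/\epsilon$ and $T$, and matching the surviving factor to the displayed expression.

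First I would fix the classical benchmark. The cheapest purely classical route to estimate an observable $\langle G(T,x)\rangle$ at a point to precision $\epsilon$ with $M$ initial data is to solve the \emph{original} $(d+1)$-dimensional Hamilton-Jacobi equation for $S^{[k]}$ --- not its $(2d+1)$-dimensional level-set Liouville lift, whose classical cost (Lemma~\ref{lem:liouville}) is far larger --- so by Lemma~\ref{lem:HJ} we may take $\mathcal{C}=O(MTd^{d+3}(1/\epsilon)^{d+1})$; using instead the $u^{[k]}$-field cost of Lemma~\ref{lem:HJ}, which carries an extra factor $d$, only strengthens the conclusion, and recovering $\nabla S^{[k]}$ at the single evaluation point is otherwise a lower-order correction. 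For the quantum side I would take Eq.~\eqref{eq:qquery}, $\mathcal{Q}=\tilde{O}(n_{\psi_0}^2 d^{7} T^{3}\epsilon^{-10})$, after checking --- via Lemma~\ref{lem:normalisations} for the size of $n_{\psi_0}$ --- that the logarithm $\log(n_{\psi_0}^2 d^4 T^2/\epsilon^{7})$ appearing in Eq.~\eqref{eq:qquery} is polylogarithmic in every parameter and is therefore absorbed by the $\tilde{O}$.

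The computational core is a single division:
\begin{align}
\frac{\mathcal{C}}{\mathcal{Q}}
&=\tilde{O}\left(\frac{M\,T\,d^{d+3}\,\epsilon^{-(d+1)}}{n_{\psi_0}^2\,d^{7}\,T^{3}\,\epsilon^{-10}}\right)
=\tilde{O}\left(\frac{M\,d^{d-4}}{n_{\psi_0}^2\,T^{2}}\left(\frac{1}{\epsilon}\right)^{d-9}\right),
\end{align}
using $d^{d+3}/d^{7}=d^{d-4}$, $\epsilon^{-(d+1)}/\epsilon^{-10}=\epsilon^{9-d}$ and $T/T^{3}=T^{-2}$, with $M$ and $n_{\psi_0}^{-2}$ carried through unchanged. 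Hence $\mathcal{Q}=\littleo(\mathcal{C})$ holds precisely when this ratio diverges faster than the suppressed polylogarithmic factors, i.e.\ when $\tilde{O}(1)=\littleo\left(\tfrac{M d^{d-4}}{n_{\psi_0}^2 T^{2}}(1/\epsilon)^{d-9}\right)$, which is the stated condition. Substituting the application-dependent growth of $n_{\psi_0}$ with $d$ and $1/\epsilon$ then converts this into the $b$-parametrised row $r_1=d-4-b$, $r_2=d-9-3b$ of Table~\ref{table:nonlin}.

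I do not anticipate a genuine obstacle: the corollary is a bookkeeping consequence of the preceding cost statements. The only points that need care are (i) being explicit that the classical side is benchmarked against the \emph{direct} Hamilton-Jacobi solver of Lemma~\ref{lem:HJ} rather than the costlier level-set discretisation, and (ii) confirming that the logarithmic prefactor in $\mathcal{Q}$ is polylogarithmic in $M,d,1/\epsilon,T$ so that it drops into $\tilde{O}(\cdot)$ without disturbing the $\littleo$ statement --- both routine.
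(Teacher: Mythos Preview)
Your approach---form the ratio $\mathcal{C}/\mathcal{Q}$ and cancel powers---is exactly what the paper does, and you arrive at the correct displayed condition. There is one subtlety worth flagging, however: your claim that the logarithm $\log(n_{\psi_0}^2 d^4 T^2/\epsilon^7)$ in Eq.~\eqref{eq:qquery} is ``polylogarithmic in every parameter'' is not quite right. By Lemma~\ref{lem:normalisations} one has $n_{\psi_0}\le O(N^{d/2})$, so in the worst case $\log(n_{\psi_0}^2)\sim d\log N$, which is \emph{linear} in $d$ (times polylogs in $d,1/\epsilon,T$). The paper handles this by absorbing that linear-in-$d$ factor into $\mathcal{Q}$, effectively using $\mathcal{Q}=\tilde{O}(n_{\psi_0}^2 d^{8} T^3/\epsilon^{10})$, and then taking the classical benchmark to be the $u^{[k]}$-cost $\mathcal{C}=O(MTd^{d+4}/\epsilon^{d+1})$ from Lemma~\ref{lem:HJ} rather than the $S^{[k]}$-cost $d^{d+3}$ you use. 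The two choices differ by the same single factor of $d$, so your final exponent $d-4$ agrees with the paper's---but for the stated reason, not because the log is polylog.
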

\begin{proof}
The requirement $\mathcal{Q}=\littleo(\mathcal{C})$ implies that $\mathcal{C}$ increases much more quickly than $\mathcal{Q}$, i.e., in the asymptotic limit, the ratio $\mathcal{C}/\mathcal{Q}$ goes to infinity. This can be equivalently expressed as $\littleo(\mathcal{C}/\mathcal{Q})=1$.  For instance, this can include both polynomial and exponential advantages for the quantum algorithm. From Lemma~\ref{lem:normalisations}, the constant $n_{\psi_0}$ cannot be greater than $N^{d/2}$. This means that the extra logarithmic terms in the upper bound of $\mathcal{Q}$ is at most linear in $d$ multiplied by logarithmic factors in $d, 1/\epsilon, T$. Using the tight upper bound for $\mathcal{Q}$ from Theorem~\ref{thm:qquery} and $\mathcal{C}= O(MTd^{d+4}(1/\epsilon)^{d+1})$ from Lemma~\ref{lem:HJ}, the result follows. 
\end{proof}
In this case, it is clear that even when $M=1$, there is potential quantum advantage in parameters $d$ and $\epsilon$. Depending on the initial data, we have $n_{\psi_0}^2=O(N^b)$, where $b$ ranges from $0$ to $d$ from Lemma~\ref{lem:normalisations}. Ignoring $T=O(1)$ factors, using Lemma \ref{lem:liouville}, we see that Corollary~\ref{cor:quantumhj1} requires $\tilde{O}(d^{d-4-b}/\epsilon^{d-9-3b})=\littleo(1)$. Thus if $0\leq b<d/3-3$ and $d$ is large, one can obtain quantum advantage in both $d$ and $\epsilon$. For instance, in the best case scenario where $b=O(1)$, then quantum advantage for $d$ and $\epsilon$ is always possible for high enough $d$. However, in the worst-case scenario where $b=d$, we see that no advantage is possible and the quantum algorithm actually performs worse than the classical algorithm when $M=1$. \\

 Different $b$ corresponds to different physical scenarios. From Lemma~\ref{lem:normalisations} we see that if the initial data has support in a box of size $\beta$, then $n_{\psi_0}=O((\beta \sqrt{N})^d)=O(N^{b/2})$.  For a Gaussian source of width $\beta \sim 1/\sqrt{N}$ for instance, one has $n_{\psi_{0}}=O(1)$ so $b=0$. For $b=k$ a constant, one needs $\beta=O(1/N^{1/2-k/(2d)})$. The requirement $b<d/3-3$ for quantum advantage implies a support of the initial data with the upper bound $\beta<O(1/N^{1/3+3/(2d)})$. \\  

Often one needs to solve a PDE with many different initial data, such as those in Monte-Carlo simulation, uncertainty quantification or Bayesian inference-based inverse problems in uncertainty quantification, which demands large $M$. Since $\mathcal{Q}$ is independent of $M$ and $\mathcal{C}$ is linear in $M$, it is always (theoretically) possible to have large enough $M$ for a quantum advantage. The reason for this $M$-independence lies in the fact that, in our formulation, the initial  data for $\psi$ is written as a single sum over $M$ initial data. This means that even if we consider state preparation costs for $|\psi_0\rangle$, it would be still independent of $M$ since the sum over $M$ terms lies within a single amplitude. In addition, the $M$ different initial data can be arbitrarily distributed and there are no constraints on neighbouring initial points to be close together to form a good mesh for high accuracy in solving the PDE. \\

It is important to point out that, since the Hamilton-Jacobi equation is no longer valid beyond the formation of caustics, the comparison in Corollary~\ref{cor:quantumhj1} only makes sense when the solution to the Hamilton-Jacobi equation is {\it smooth}. Beyond the time of caustics, one should compare the quantum cost with the classical cost of solving the Liouville equation (or other more sophisticated classical algorithms laid out in at the beginning of section \ref{sec:H-J}).
As shown in Corollary~\ref{cost-Liou}, our quantum advantage is much bigger when compared with the classical solver for the Liouville equation, except now there is no advantage in $M$. 

\begin{corollary}\label{cost-Liou}
If $\mathcal{C}_L$ is the cost to compute an observable of a $(2d+1)$-dimensional Liouville equation using  purely classical methods and $\mathcal{Q}$ is the cost with a quantum algorithm, then $\mathcal{Q}=\littleo(\mathcal{C}_L)$ when 
\begin{align}
 \littleo\left(\frac{d^{2d-5}}{n^2_{\psi_0}T^2}\left(\frac{1}{\epsilon}\right)^{6d-7}\right)>\tilde{O}(1).
\end{align}
\end{corollary}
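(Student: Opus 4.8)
The strategy is to run the cost comparison of Corollary~\ref{cor:quantumhj1} again, changing only the classical benchmark. Once caustics have formed the Hamilton--Jacobi equation no longer describes the physics, so the fair classical competitor is a direct discretisation of the $(2d+1)$-dimensional Liouville equation~\eqref{Liouville-delta}--\eqref{Liou-I}, whose cost $\mathcal{C}_L = O(d^{2d+3}T(1/\epsilon_{\text{CL}})^{6d+3})$ is recorded in Lemma~\ref{lem:liouville}. By Lemma~\ref{lem:3errors} the target precision $\epsilon$ on the observable splits as $\epsilon=\epsilon_{\text{CL}}+\epsilon_Q$ with $\epsilon_{\text{CL}}\sim\epsilon_Q$, so $\epsilon_{\text{CL}}=\Theta(\epsilon)$ and one may write $\mathcal{C}_L = O(d^{2d+3}T(1/\epsilon)^{6d+3})$ without loss.

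Next I would take the quantum cost from Theorem~\ref{thm:qquery}, $\mathcal{Q} = \mathcal{O}\!\left(n_{\psi_0}^2 d^7 T^3 \epsilon^{-10}\log(n_{\psi_0}^2 d^4 T^2 \epsilon^{-7})\right)$, and---exactly as in the proof of Corollary~\ref{cor:quantumhj1}---absorb the surviving logarithm. From Lemma~\ref{lem:normalisations} one has $n_{\psi_0}\le O(N^{d/2})$, and from Lemma~\ref{lem:liouville} $N$ is polynomial in $d$ and $1/\epsilon$; hence $\log(n_{\psi_0}^2 d^4 T^2 \epsilon^{-7})$ is $O(d)$ up to further logarithmic factors in $d,1/\epsilon,T$, and absorbing it yields $\mathcal{Q} = \tilde O(n_{\psi_0}^2 d^8 T^3 \epsilon^{-10})$.

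At this point the proof is arithmetic of exponents: forming the ratio
\begin{align}
\frac{\mathcal{C}_L}{\mathcal{Q}} = \tilde O\!\left(\frac{d^{2d+3}\,T\,(1/\epsilon)^{6d+3}}{n_{\psi_0}^2\,d^{8}\,T^3\,(1/\epsilon)^{10}}\right) = \tilde O\!\left(\frac{d^{2d-5}}{n_{\psi_0}^2 T^2}\left(\frac{1}{\epsilon}\right)^{6d-7}\right).
\end{align}
By definition $\mathcal{Q}=\littleo(\mathcal{C}_L)$ is equivalent to $\mathcal{C}_L/\mathcal{Q}\to\infty$, i.e.\ to the expression $d^{2d-5}n_{\psi_0}^{-2}T^{-2}(1/\epsilon)^{6d-7}$ growing without bound modulo logarithmic corrections, which is precisely the condition asserted in Corollary~\ref{cost-Liou}.

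The only delicate point is the logarithm-absorption step: one must check that the crude bound $n_{\psi_0}\le N^{d/2}$ together with $N=O(\mathrm{poly}(d,1/\epsilon))$ really does make $\log(\cdots)=\tilde O(d)$, so that inflating the $d^7$ in $\mathcal{Q}$ to $d^8$ is legitimate; and one must confirm that $\mathcal{C}_L$ and $\mathcal{Q}$ are stated for the same target accuracy on the same quantity (the moment of $\psi$), which holds by Lemma~\ref{lem:3errors} since $\epsilon_{\text{CL}}\sim\epsilon$. Everything else is bookkeeping of powers, and there are no genuine obstacles beyond that.
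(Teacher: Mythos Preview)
Your proof is correct and follows essentially the same approach as the paper: quote $\mathcal{C}_L$ from Lemma~\ref{lem:liouville}, absorb the logarithm in Theorem~\ref{thm:qquery} to write $\mathcal{Q}=\tilde O(n_{\psi_0}^2 d^8 T^3 \epsilon^{-10})$, and take the ratio. If anything, your justification of the log-absorption step (via $n_{\psi_0}\le N^{d/2}$ and $N$ polynomial in $d,1/\epsilon$) is more explicit than the paper's.
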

\begin{proof}
From Lemma~\ref{lem:liouville},  $\mathcal{C}_L= O(Td^{2d+3}(1/\epsilon)^{6d+3})$, which we note is also independent of $M$. \\
Since $\mathcal{Q}= \tilde{O}(n^2_{\psi_0}d^8T^3(1/\epsilon)^{10})$ where $\tilde{O}$ suppresses all logarithmic terms in $d, T, 1/\epsilon$, we have our result.
\end{proof}
In this case, the classical cost is also independent of $M$, but has a large overhead in $d$ and $1/\epsilon$. Since $b$ ranges from $0$ to $d$ when $n^2_{\psi_0}=O(N^b)$, we see that $\mathcal{C}_L>\mathcal{Q}$ if $\tilde{O}(d^{2d-5-b}/\epsilon^{6d-7-3b})>O(1)$, ignoring $T\sim O(1)$ terms. This means that even in the worst-case scenario where $s=b$ (when the initial condition has the largest support $\beta=O(1)$), there is quantum advantage in both $d$ and $\epsilon$ for large $d$. \\


We remark that it is also always possible to have a quantum subroutine for the PDE problem (QLSP), where we use the HHL or CKS algorithms to prepare the quantum state 
\begin{align}
    |\psi\rangle \equiv \frac{1}{N_{\psi}}\sum_{n=0}^{N_t-1}\sum_{\vect{j}}^{N} \sum_{\vect{l}}^N \psi_{n, \vect{j}, \vect{l}} |\vect{j}\rangle |\vect{l}\rangle |n\rangle
\end{align}
with $\psi_{n, \vect{j}, \vect{l}}$ being the solutions to the discretised version of the Liouville equation in Eq.~\eqref{Liouville-delta} and $N_{\psi}$ is the normalisation constant. This encoding of the solutions of the PDE, which we can call the \textit{level-set encoding}, differs from the amplitude encoded states usually considered in the literature, where the amplitudes of the quantum state are directly proportional to the solutions of the \textit{original} PDE one wants to solve. Although the amplitudes of this state $\psi_{n, \vect{j}, \vect{l}}$ likewise contains all the information required for the solution, this encoding is in fact much more appropriate in this instance to allow observables to be extracted. Given this quantum subroutine, one can for instance apply the quantum swap-test between states $|\psi\rangle$ and $|G\rangle$ to recover $\Upsilon$ (although with worse query complexity compared to the current algorithm in Theorem~\ref{thm:qquery}). However, any similar approach to recover physical observables from quantum states whose amplitudes are directly proportional to the original solutions $u$ themselves, requires one to inject the Jacobian factors in Definition~\ref{def:levelsetobservable} explicitly. However, these factors are not generally a priori known since they depend on the solutions of the PDE and also cannot be ignored, since they can have real physical consequences as discussed at the end of Section~\ref{sec:physicalinterpretation}. The benefit of a level-set encoded state is that these Jacobian factors are automatically taken care of. This highlights the importance of choosing the right encoding of the PDE solutions, even on the level of a quantum subroutine.  



\section{Solving scalar hyperbolic PDEs} \label{sec:hyperbolic}
 We begin with a $(d+1)$-dimensional scalar nonlinear hyperbolic PDE 
\begin{eqnarray}\label{hyp-PDE2}
  &\partial_t u^{[k]} + F(u^{[k]}) \cdot \nabla_x u^{[k]} + Q(x,u^{[k]})=0, \quad u\in \mathbb{R}, \\
  \label{hyp-IC}
   & u^{[k]}(0,x)=u^{[k]}_0(x), \quad k=1,...,M
\end{eqnarray}
  subject to $M$ different initial data, where $x \in \mathbb{R}^d$. We can also employ the level set formalism for this PDE and introduce a level set function $\phi(t,x,p)$ in $(d+1)+1=d+2$ dimensions, where $p \in \mathbb{R}^1$. Its zero level set is the solution
  $u$:
  \begin{equation}
        \phi^{[k]}(t,x,p)=0  \quad {\text{at}} \quad p=u^{[k]}(t,x).
  \end{equation}
  Then $\phi$ satisfies \cite{JO03}
  \begin{eqnarray}
      & \partial_t \phi^{[k]} + F(p) \cdot \nabla_x \phi^{[k]} - Q(x,p) \partial_p \phi^{[k]} = 0,\\
      & \phi^{[k]}(0,x,p)=p-u^{[k]}_0(x).
  \end{eqnarray}
Like for the Hamilton-Jacobi PDEs, we can similarly define a function $\psi$ by the following problem
\begin{eqnarray}
      & \partial_t \psi + F(p) \cdot \nabla_x \psi - Q(x,p) \partial_p \psi = 0,\\
      & \psi(0,x,p)=\frac{1}{M}\sum_{k=1}^M \delta(p-u^{[k]}_0(x)),
  \end{eqnarray}
and one has
\begin{align}
    \psi(t, x, p)=\frac{1}{M}\sum_{k=1}^M \delta(\phi^{[k]}(t,x,p)).
\end{align}
The only difference compared to the Hamilton-Jacobi equation is that now $p \in \mathbb{R}^1$ instead of being a $d$-dimensional vector. This means the observable is now an integral over $\mathbb{R}$:
\begin{align}
    \langle g(t,x)\rangle=\int_{\mathbb{R}}g(p)\psi(t,x,p)dp \approx \frac{1}{N}\sum_{l=1}^N g_l \psi_{n,\vect{j}, l} \equiv \langle g^{\omega}_{n,\vect{j}}\rangle
\end{align}
where after discretisation of the level set PDE $\psi(t, x, p) \rightarrow \psi_{n,\vect{j}, l}$, where  $l$ is a scalar, while $\vect{j}$ remains a vector. The initial condition can be encoded in the quantum state
\begin{align}
    |\psi'_{0}\rangle=\frac{1}{N'_{\psi_{0}}}\sum_{\vect{j}}^N \sum_l^N \psi_{0,\vect{j}, l}|\vect{j}\rangle |l\rangle|n=0\rangle
\end{align}
with the normalisation $N'_{\psi_{0}}=\sqrt{\sum_{\vect{|j|=1}}^N \sum_{l=1}^N |\psi_{0, \vect{j}, l}|^2}$
and we also define the state 
\begin{align}
    |g_{n, \vect{j}}\rangle \equiv \frac{1}{N_g}\sum_{l=1}^N g^*_l|l\rangle |\vect{j}\rangle |n\rangle
\end{align}
with the normalisation $N_g=\sqrt{\sum_l^N|g^*_l|^2}$. Defining $\Upsilon$ in the same way as Eq.~\eqref{eq:upsilondefinition}, with $\mathcal{G}\equiv |g_{n, \vect{j}}\rangle \langle g_{n,\vect{j}}|$, we can write
\begin{align}
    \langle g(t,x)\rangle \approx \frac{1}{N}N'_{\psi_{0}}N_g|\sqrt{\Upsilon}|=n'_{\psi_{0}}n_g|\sqrt{\Upsilon}|
\end{align}
where $n'_{\psi_{0}} \equiv N'_{\psi_{0}}/\sqrt{N}$, $n_g \equiv N_G/\sqrt{N}$. The normalisations are now the following.

\begin{lemma} \label{lem:normalisations2}
The constant $n_g \equiv N_G/\sqrt{N}= O(1)$ and the range of $n'_{\psi_{0}}$  lies in $O(1) \leq n'_{\psi_{0}} \equiv N'_{\psi_{0}}/\sqrt{N} \leq O(N^{d/2})$. If we assume the initial data has support in a box of size $\beta$, then $n_{\psi_0}=O((\beta \sqrt{N})^{d/2})$.
\end{lemma}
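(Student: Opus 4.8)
Lemma~\ref{lem:normalisations2} is the scalar-momentum analogue of Lemma~\ref{lem:normalisations}, so the plan is to follow the proof of that lemma essentially verbatim, the only structural change being that the momentum variable $p$ is now one-dimensional: every sum or integral over the phase-space momentum runs over the single index $l=1,\dots,N$ (resp.\ over $\mathbb{R}^1$) rather than over the vector index $\vect{l}$ (resp.\ over $\mathbb{R}^d$), so the ``momentum volume'' factor is $N$ instead of $N^d$. This is precisely why the rescalings here divide $N_g$ and $N'_{\psi_0}$ by $\sqrt{N}$ rather than by $N^{d/2}$. Both claims then reduce to estimating the two discrete $\ell^2$-norms $N_g^2=\sum_{l=1}^N|g_l^*|^2$ and $(N'_{\psi_0})^2=\sum_{\vect{j}}^N\sum_{l=1}^N|\psi_{0,\vect{j},l}|^2$.

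For $n_g$ the argument is immediate: since $g$ is a fixed bounded observable function on $[0,1]$ (for instance $g(p)=1$ or $g(p)=p$), $N_g^2=\sum_{l=1}^N|g(l/N)|^2$ is a Riemann sum for $N\int_0^1|g(p)|^2\,dp$, so $N_g=O(\sqrt N)$ and $n_g=N_g/\sqrt N=O(1)$ --- the same step as in the Hamilton--Jacobi case with the $\mathbb{R}^d$ momentum integral replaced by an integral over $\mathbb{R}$.

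For the range of $n'_{\psi_0}$ I would bound $(N'_{\psi_0})^2$ from below and from above. \emph{Lower bound}: at any spatial grid point $\vect{j}$ in the support of $u_0$ the smoothed delta $\delta_\omega(lh-u_0^{[k]}(h\vect{j}))$ takes the value $\Theta(1/\omega)$ on the $\Theta(\omega/h)$ indices $l$ lying within $\omega$ of $u_0^{[k]}(h\vect{j})$; using that the convex combination $\tfrac1M\sum_k$ (or the weighted variant of the Remark) can only help, Cauchy--Schwarz against the quadrature identity $\tfrac1N\sum_l\psi_{0,\vect j,l}\approx\int\psi_0\,dp=O(1)$ gives $\sum_l|\psi_{0,\vect j,l}|^2=\Omega(N)$, hence $n'_{\psi_0}\ge O(1)$. \emph{Upper bound}: $(N'_{\psi_0})^2\le\|\psi_0\|_\infty\sum_{\vect j,l}|\psi_{0,\vect j,l}|$ with $\|\psi_0\|_\infty=O(1/\omega)$, while the total mass obeys $\sum_{\vect j,l}\psi_{0,\vect j,l}=O(N^{d+1})$ since there are at most $N^d$ indices $\vect j$ and $\tfrac1N\sum_l\psi_{0,\vect j,l}=O(1)$ for each; as $\omega$ is bounded below by $1/N$, this yields $n'_{\psi_0}=O(N^{d/2})$. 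Finally, the compact-support refinement follows by repeating the upper-bound count with the number of contributing $\vect j$ cut from $N^d$ down to $O((\beta N)^d)$, producing the stated $n'_{\psi_0}=O((\beta\sqrt N)^{d/2})$.

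The one place that demands genuine care --- and the main obstacle --- is this last counting estimate: one must track simultaneously how many spatial indices $\vect{j}$ the level set actually touches, how many momentum indices $l$ fall in the $\omega$-neighbourhood of $u_0(h\vect{j})$, and the $1/\omega$ amplitude of $\delta_\omega$, and then check that the residual $\omega$-dependence collapses (or is subleading at the working choice $\omega\sim(dh)^{1/3}$) so that only the clean power of $\beta\sqrt N$ survives in the $O$-notation. No idea beyond those already used for Lemma~\ref{lem:normalisations} is required; the whole argument is the scalar-$p$ specialisation of the estimate carried out in Appendix~\ref{sec:normalisations}.
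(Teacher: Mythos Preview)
Your plan is essentially the paper's: the whole argument is the scalar-$p$ specialisation of the estimate in Appendix~\ref{sec:normalisations}, and your treatment of $n_g$ via a Riemann sum is identical to the paper's. Two tactical differences are worth flagging.

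For the \emph{lower bound} on $n'_{\psi_0}$ you propose a direct counting/Cauchy--Schwarz argument on the smoothed delta; the paper instead obtains it non-constructively from the relation $\langle G^\omega_{n,\vect j}\rangle = n'_{\psi_0}\,n_g\,|\sqrt{\Upsilon}|$ together with the fidelity bound $\Upsilon\le 1$ (since $\Upsilon$ is a matrix element of a unitary), which forces $n'_{\psi_0}\gtrsim O(1)$. Your route is more hands-on; the paper's is shorter but relies on the quantum-algorithmic interpretation.

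For the \emph{upper bound}, the paper does not bound the discrete sum directly as you do with $(N'_{\psi_0})^2\le\|\psi_0\|_\infty\sum|\psi_0|$; instead it bounds the \emph{continuous} $L^2$-norm via the pointwise inequality $\delta_\omega^2\le(1/\omega)\,\delta_\omega$, giving $\int_{\Omega_\beta}\int|\psi_0|^2\,dp\,dx\le\beta^d/\omega$, and only then passes to a quadrature to relate this to $(N'_{\psi_0})^2$. The ``obstacle'' you correctly identify --- the residual $\omega$-dependence --- is resolved in the paper simply by the standing choice $\omega=mh=m/N$ with $m=O(1)$, so that the $1/\omega$ factor contributes exactly one power of $N$ and the clean $N^{d/2}$ (resp.\ $\beta$-refined) bound drops out. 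Your discrete route reaches the same endpoint once you insert this relation; no additional idea is needed.
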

\begin{proof}
The proof proceeds in a similar way to Lemma~\ref{lem:normalisations}. See Appendix~\ref{sec:normalisations}.
\end{proof}

The quantum query and gate complexities are equivalent to the Hamilton-Jacobi case up to constants and the proof is identical.

\begin{theorem} \label{thm:qqueryhyperbolic} A quantum algorithm that takes sparse access $(s=O(d), \|\mathcal{M}\|_{max}=O(1), O_{\mathcal{M}}, O_{F})$ to $\mathcal{M}$ with $\|\mathcal{M}\|=O(1)$ and access to the unitaries $l(t_n,\vect{j}/N)$ where $l(t_n, \vect{j}/N)|0\rangle=|g_{n, \vect{j}}\rangle$, and $U_{initial}$ where $U_{initial}|0\rangle=|\psi'_{0}\rangle$, is able to estimate the ensemble average $\langle g(T, \vect{j}/N)\rangle$ at time $T$, with $M$ initial data, to precision $\epsilon$ with an upper bound on the query complexity $\mathcal{Q}$ 
\begin{align} \label{eq:qquery-1}
  \mathcal{Q}=\mathcal{O} \left(\frac{(n'_{\psi_{0}})^2 d^7T^3}{\epsilon^{10}}\log \left(\frac{(n'_{\psi_{0}})^2 d^4T^2}{\epsilon^{7}}\right) \right)
\end{align}
and the same order of additional $2$-qubit gates, where we suppress all $O(1)$ terms except $T=O(1)$ and $n'_{\psi_{0}}\geq O(1)$. 
\end{theorem}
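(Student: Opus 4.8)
The plan is to reuse the Hamilton--Jacobi argument behind Theorem~\ref{thm:qquery} essentially verbatim, since the level-set construction for the scalar hyperbolic PDE produces a linear transport (Liouville-type) equation with exactly the same analytic and discrete structure as Eq.~\eqref{Liouville-delta}; the only structural change is that $p$ is now a scalar, so the relevant phase space is $(d+1)$-dimensional rather than $2d$-dimensional. First I would set up the discrete matrix problem exactly as in Section~\ref{sec:levelset}: discretise the hyperbolic Liouville equation by a first-order (upwind) finite-difference scheme in space and forward Euler in time on $[0,1]^{d+1}$ with $N$ points per axis and $N_t$ time steps, obtain a Toeplitz matrix $\mathcal{K}$ of size $N_tN^{d+1}\times N_tN^{d+1}$, symmetrise it to the Hermitian block matrix $\mathcal{M}$ with off-diagonal blocks $\mathcal{K}$ and $\mathcal{K}^{\dagger}$, and record that $s=O(d)$, $\|\mathcal{M}\|_{\max}=O(1)$, and (after the same rescaling used in the Hamilton--Jacobi case) $\|\mathcal{M}\|=O(1)$. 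The analogue of Lemma~\ref{lem:skappa} then gives $\kappa\le O(dNT)$; this bound is insensitive to whether the momentum variable is $d$-dimensional or scalar, since it comes only from the lower-triangular-plus-identity block structure induced by the transport/CFL discretisation.

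Next I would reproduce the classical error analysis. The analogue of Lemma~\ref{thm:classicalerror} combines the first-order finite-difference truncation error, the smoothed-delta approximation of width $\omega$, and the quadrature error into $\epsilon_{CL}=O(\omega+dh/\omega^2)$ with $h=1/N$; choosing $\omega=(dh)^{1/3}$ gives $\epsilon_{CL}=O((dh)^{1/3})$, hence $N=O(d/\epsilon_{CL}^3)$, the hyperbolic analogue of Lemma~\ref{lem:liouville}. I would then split the budget $\epsilon_{CL}\sim\epsilon_Q\sim\epsilon$ via the analogue of Lemma~\ref{lem:3errors}: writing $\langle\tilde g^{\omega}_{n,\vect{j}}\rangle=n'_{\psi_0}n_g|\sqrt{\tilde\Upsilon}|$ and using $n_g=O(1)$, $n'_{\psi_0}\ge O(1)$ from Lemma~\ref{lem:normalisations2}, it suffices to estimate $\Upsilon$ to additive precision $\epsilon'\sim\epsilon/(n'_{\psi_0})^2$, as in the Hamilton--Jacobi case, because for an $O(1)$ observable $\Upsilon=\Theta(1/(n'_{\psi_0})^2)$ and so taking the square root amplifies the additive error by only a factor $\sim n'_{\psi_0}$.

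Finally I would feed these parameters into Lemma~\ref{lem:bslep2}. The dominant contribution is the number of sparse-oracle queries to $\mathcal{M}$, namely $\mathcal{O}\!\big(s\|\mathcal{M}\|_{\max}\kappa^3\log(\kappa^2/(\|\mathcal{M}\|\epsilon'))/(\|\mathcal{M}\|\epsilon')\big)$; substituting $s=O(d)$, $\kappa=O(dNT)=O(d^2T/\epsilon^3)$, and $\epsilon'=O(\epsilon/(n'_{\psi_0})^2)$ and collecting powers yields $\mathcal{Q}=\mathcal{O}\big((n'_{\psi_0})^2d^7T^3\epsilon^{-10}\log((n'_{\psi_0})^2d^4T^2\epsilon^{-7})\big)$, with the same number of additional two-qubit gates (the gate count in Lemma~\ref{lem:bslep2} is at most logarithmically larger and is absorbed). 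Since this matches Eq.~\eqref{eq:qquery} under $n_{\psi_0}\to n'_{\psi_0}$, the theorem follows.

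The part that is not literally copy-paste, and hence the main obstacle, is verifying that Lemma~\ref{lem:skappa} and Lemma~\ref{thm:classicalerror} genuinely transfer to the hyperbolic Liouville equation, whose bicharacteristics are $\dot x=F(p)$, $\dot p=-Q(x,p)$ rather than the Hamiltonian flow Eq.~\eqref{H-S}. For the condition number one must check that the corresponding upwind/CFL discretisation still yields a block-triangular-plus-identity matrix whose inverse has operator norm $O(dN_t)=O(dNT)$, which in the presence of the source term $Q$ and the state-dependent speed $F(p)$ is the step most likely to need a short standalone argument rather than a citation. For $\epsilon_{CL}$ one must confirm the three error sources still combine as $O(\omega+dh/\omega^2)$ when $p$ is scalar: the quadrature and the mollified delta are now one-dimensional, but the $d$-fold spatial transport still produces the $dh$ truncation term, so the bound, and therefore the entire complexity bookkeeping of Appendix~\ref{app:theoremhj}, is unchanged.
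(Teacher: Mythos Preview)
Your proposal is correct and follows exactly the paper's approach: the paper simply states that the proof is identical to that of Theorem~\ref{thm:qquery} (the Hamilton--Jacobi case), and you have faithfully reproduced that argument, including the same use of Lemma~\ref{lem:skappa}, Lemma~\ref{thm:classicalerror}/Lemma~\ref{lem:liouville}, Lemma~\ref{lem:3errors}, and Lemma~\ref{lem:bslep2} with the substitution $n_{\psi_0}\to n'_{\psi_0}$. Your cautionary remarks about verifying that the condition-number and error lemmas transfer to the non-Hamiltonian bicharacteristics $\dot x=F(p)$, $\dot p=-Q(x,p)$ are more careful than the paper itself, which tacitly assumes these carry over.
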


The classical cost for the scalar hyperbolic PDE is just one factor of $d$ smaller than for the Hamilton-Jacobi PDE we considered. 

\begin{corollary} \label{cor:hyperbolic} Let $\mathcal{C}$ be the cost to compute an observable from a $(d+1)$-dimensional scalar hyperbolic equation with purely classical methods, and $\mathcal{Q}$ be the cost to solve the identical problem on a quantum algorithm. We say there is a quantum advantage in estimating physical observables when $\mathcal{Q}=\littleo({\mathcal{C}})$. This requires
\begin{align}
    \littleo \left(\frac{Md^{d-5}}{(n'_{\psi_{0}})^2T^2}\left(\frac{1}{\epsilon}\right)^{d-9}\right)>\tilde{O}(1)
\end{align}
where the $\tilde{O}$ notation suppresses all factors logarithmic in $1/\epsilon, d, T$.  
\end{corollary}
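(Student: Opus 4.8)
The plan is to follow the proof of Corollary~\ref{cor:quantumhj1} step for step, the main change being the classical reference cost, which here carries one fewer power of $d$. First I would recall from Lemma~\ref{lem:classicalhyperbolic} that computing the ensemble observable at time $T$ to precision $\epsilon$ from $M$ initial data costs $\mathcal{C}=O(MTd^{d+3}(1/\epsilon)^{d+1})$ classically; this is one factor of $d$ smaller than the Hamilton--Jacobi cost for $u^{[k]}$ in Lemma~\ref{lem:HJ}, which is the ultimate origin of the shift from $d^{d-4}$ to $d^{d-5}$ in the conclusion, and reflects that here the reconstructed field lives over a scalar $p\in\mathbb{R}$ rather than $p\in\mathbb{R}^d$.

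Next I would take the quantum query (and gate) complexity from Theorem~\ref{thm:qqueryhyperbolic}, namely $\mathcal{Q}=\mathcal{O}\!\left(\frac{(n'_{\psi_{0}})^2 d^7T^3}{\epsilon^{10}}\log\!\left(\frac{(n'_{\psi_{0}})^2 d^4T^2}{\epsilon^{7}}\right)\right)$, and make the logarithmic factor explicit. Using Lemma~\ref{lem:normalisations2} we have $n'_{\psi_0}\le O(N^{d/2})$, and since the phase-space grid size $N$ scales at most polynomially in $d$ and $1/\epsilon$ (as in the classical error analysis behind Lemma~\ref{lem:liouville}), the logarithm is $O(d\,\text{poly}\log(dT/\epsilon))$. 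Hence the tight upper bound becomes $\mathcal{Q}=\tilde{O}\!\left((n'_{\psi_0})^2 d^{8}T^{3}(1/\epsilon)^{10}\right)$, where $\tilde{O}$ hides logarithmic factors in $d,1/\epsilon,T$; this is structurally identical to the bound used in the Hamilton--Jacobi case, so nothing is lost here.

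Then I would simply form the ratio, cancelling $d^{d+3}/d^{8}=d^{d-5}$, $\epsilon^{-(d+1)}/\epsilon^{-10}=\epsilon^{-(d-9)}$ and $T/T^{3}=T^{-2}$, to get $\mathcal{C}/\mathcal{Q}=\tilde{O}\!\left(\frac{M d^{d-5}}{(n'_{\psi_0})^2 T^{2}}(1/\epsilon)^{d-9}\right)$. Since $\mathcal{Q}=\littleo(\mathcal{C})$ is by definition the statement that $\mathcal{C}/\mathcal{Q}\to\infty$, i.e.\ $\littleo(\mathcal{C}/\mathcal{Q})=1$, substituting the ratio just computed yields exactly the stated sufficient condition, and the corollary follows.

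I do not expect a genuine obstacle here: the only place that needs care is the bookkeeping of the single extra power of $d$ coming from the logarithm (everything else is cancellation of monomials), together with checking that the precision $\epsilon$ appearing in $\mathcal{C}$ and in $\mathcal{Q}$ is the same total error. As in the discussion following Corollary~\ref{cor:quantumhj1}, I would also flag that this comparison is meaningful only while the scalar hyperbolic solution stays smooth, since $\mathcal{C}$ from Lemma~\ref{lem:classicalhyperbolic} is the cost of solving Eq.~\eqref{hyp-PDE2} directly; past shock formation the honest benchmark is a classical solver for the $(d+2)$-dimensional level-set Liouville equation, which would give a separate, $M$-independent advantage condition analogous to Corollary~\ref{cost-Liou}.
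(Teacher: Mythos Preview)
Your proposal is correct and matches the paper's own proof, which simply says ``Identical to Corollary~\ref{cor:quantumhj1} except using $\mathcal{C}=O(MTd^{d+3}/\epsilon^{d+1})$ from Lemma~\ref{lem:classicalhyperbolic}.'' You have filled in precisely the details that this one-line proof leaves implicit, including the bookkeeping on the extra $d$ from the logarithm and the ratio cancellation.
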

\begin{proof}
Identical to Corollary~\ref{cor:quantumhj1} except using $\mathcal{C}=O(MTd^{d+3}/\epsilon^{d+1})$ from Lemma~\ref{lem:classicalhyperbolic}. 
\end{proof}

The consequences are very similar to that of Hamilton-Jacobi equations. When $M=1$ and $(n'_{\psi_{0}})^2=O(N^b)$, for small enough $b$ where $0\leq b<d/3-3$ and large $d$ one can expect quantum advantage in $d$ and $\epsilon$. From Lemma~\ref{lem:normalisations2}, $n'_{\psi_{0}}=O((\beta \sqrt{N})^{d/2})=O(N^{b/2})$. A quantum advantage thus requires $\beta<O(1/N^{1/3+3/(2d)})$. Since $\mathcal{Q}$ is independent of $M$, the quantum algorithm will always be advantageous with respect to $M$ in the large $M$ limit. 

\section{Solving system of nonlinear ODEs} \label{sec:ODEs}

We aim to compute observables from the following system of $D$ nonlinear ODEs
\begin{align} \label{eq:upde2}
 &   \frac{d X^{[k]}(t)}{d t}=F( X^{[k]}(t)), \qquad X^{[k]}\in \mathbb{R}^D\,,\\
 & X^{[k]}(0)=X_0^{[k]}, \qquad  k=1,\cdots, M
\end{align}
 subject to $M$ different initial data, where the same ODE (i.e., same function $F(X)$ which can be nonlinear) is satisfied for each $k$. This can be viewed as a  system of $M$ non-interacting particles in $D$ dimensions, each with a trajectory described by $X^{[k]}(t)$. A traditional way to compute ensemble averages 
 \begin{equation} \label{Ens-ODE-1}
     A(t)=\frac{1}{M} \sum_{k=1}^M A(X^{[k]}(t))
\end{equation}
is to directly solve for the system of ODEs starting from $M$ different initial data $X_0^{[k]}$, and then carry out the summation in Eq.~\eqref{Ens-ODE-1}. \\

We note that the proposal below has appeared in \cite{dodin2021applications}, but no explicit details on the quantum costs were given, which we provide below. This is also very similar to the Koopman-von Neumann approach in \cite{joseph2020koopman}, except here we consider modelling the $M$ initial conditions as trajectories of $M$ non-interacting particles.\\


\subsection{Mapping nonlinear ODEs to a linear PDE} 

Assume $X^{[k]}(t)$ is the solution to Eq.~\eqref{eq:upde2}. Then one can define a function $\Phi(t, q):\mathbb{R}^+\times \mathbb{R}^D \rightarrow \mathbb{R}$  
\begin{align} \label{eq:phiaequation}
   \Phi(t, q) = \frac{1}{M} \sum_{k=1}^M   \delta(q-X^{[k]}(t)) 
\end{align}

One can easily check (Lemma~\ref{weaksoln-proof}) that $\Phi$ satisfies, in the weak sense, the {\it linear} $D+1$-dimensional PDE
\begin{align} \label{eq:pdephi}
  &  \frac{\partial \Phi(t, q)}{\partial t}+\nabla_q \cdot  [F (q) \Phi(t, q)]=0\nonumber \\
 &\Phi(0, q)=\frac{1}{M}\sum_{k=1}^M  \delta(q-X_0^{[k]}).
\end{align}

{\bf Remark:}  If $\nabla \cdot F=0$, which is the case for the Hamiltonian system,  then $J=1$, the dynamics in Eq.~\eqref{eq:upde2} is volume preserving, and equation (\ref{eq:pdephi})
becomes
\[
\frac{\partial \Phi(t, q)}{\partial t}+F(q) \cdot \nabla_q  \Phi(t, q)=0
\]

To justify the correctness of the method, we have the following lemma.

\begin{lemma}\label{weaksoln-proof}
 If $X^{[k]}(t)$ is the solution to Eq.~\eqref{eq:upde2},
 then Eq.~\eqref{eq:phiaequation} is the weak solution, in the sense of distribution, to the problem in Eq.~\eqref{eq:pdephi}.
  \end{lemma}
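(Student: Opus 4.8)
The plan is to verify the distributional identity directly by testing against smooth, compactly supported functions, exploiting the fact that the ODE flow in Eq.~\eqref{eq:upde2} furnishes the characteristics of the transport equation Eq.~\eqref{eq:pdephi}. Fix an arbitrary test function $\varphi \in C_c^\infty([0,T)\times\mathbb{R}^D)$. First I would write down the weak formulation of the initial value problem: multiplying Eq.~\eqref{eq:pdephi} by $\varphi$, integrating over $[0,T)\times\mathbb{R}^D$, and moving all derivatives onto $\varphi$ (the $t$-integration by parts producing a boundary term at $t=0$, and the $q$-integration by parts on $\nabla_q\cdot[F\Phi]$ producing no boundary term because $\varphi$ has compact spatial support), so that the statement to be proved becomes
\begin{align}
\int_0^T\!\!\int_{\mathbb{R}^D} \Phi(t,q)\big(\partial_t\varphi(t,q)+F(q)\cdot\nabla_q\varphi(t,q)\big)\,\d q\,\d t + \int_{\mathbb{R}^D}\Phi(0,q)\,\varphi(0,q)\,\d q = 0.
\end{align}
Note that it is the divergence form $\nabla_q\cdot[F(q)\Phi]$ that is tested here, which is the correct object independently of whether $\nabla\cdot F$ vanishes.

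Next I would substitute the ansatz $\Phi(t,q)=\frac{1}{M}\sum_{k=1}^M\delta(q-X^{[k]}(t))$, together with $\Phi(0,q)=\frac{1}{M}\sum_{k=1}^M\delta(q-X_0^{[k]})$, into this identity and collapse the $q$-integrals by the sifting property of the Dirac mass. This reduces the left-hand side to $\frac{1}{M}\sum_{k=1}^M\big[\int_0^T(\partial_t\varphi+F(X^{[k]})\cdot\nabla_q\varphi)(t,X^{[k]}(t))\,\d t+\varphi(0,X_0^{[k]})\big]$. The well-posedness of Eq.~\eqref{eq:upde2} — for which I would assume $F$ to be Lipschitz — guarantees that each $X^{[k]}(\cdot)$ is a well-defined $C^1$ curve, so every term here is meaningful.

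The crux is then to recognise the time-integrand as an exact derivative along the trajectory: by the chain rule and the ODE $\dot X^{[k]}(t)=F(X^{[k]}(t))$,
\begin{align}
\frac{\d}{\d t}\,\varphi\big(t,X^{[k]}(t)\big)=\partial_t\varphi\big(t,X^{[k]}(t)\big)+\dot X^{[k]}(t)\cdot\nabla_q\varphi\big(t,X^{[k]}(t)\big)=\big(\partial_t\varphi+F(X^{[k]})\cdot\nabla_q\varphi\big)\big(t,X^{[k]}(t)\big).
\end{align}
Hence $\int_0^T(\cdots)\,\d t=\varphi(T,X^{[k]}(T))-\varphi(0,X^{[k]}(0))=-\varphi(0,X_0^{[k]})$, the $t=T$ endpoint vanishing because $\varphi$ is supported in $[0,T)$. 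Summing over $k$ and adding back the $\frac{1}{M}\sum_k\varphi(0,X_0^{[k]})$ term gives $0$, which is exactly the weak formulation; since $\varphi$ was arbitrary, $\Phi$ solves Eq.~\eqref{eq:pdephi} in the sense of distributions. An equivalent route is to observe that the characteristics of Eq.~\eqref{eq:pdephi} are precisely Eq.~\eqref{eq:upde2}, along which each point mass is transported (picking up the Jacobian weight alluded to in the Remark), but the test-function computation is the cleanest self-contained argument.

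I do not expect a genuine obstacle here — the argument is routine. The only points that need a little care are (i) performing the distributional integration by parts so that the \emph{divergence} form is what gets tested, which is the feature that makes the proof go through in the non-volume-preserving case $\nabla\cdot F\neq 0$, and (ii) assuming enough regularity on $F$ (Lipschitz continuity suffices) both for Eq.~\eqref{eq:upde2} to define the trajectories $X^{[k]}(t)$ unambiguously and for the chain-rule step above to be legitimate.
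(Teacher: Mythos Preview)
Your proof is correct and follows essentially the same approach as the paper's: both pair $\Phi$ with a test function, collapse the spatial integral via the Dirac mass, and invoke the chain rule along the ODE trajectories to recognise the transport structure. The only cosmetic difference is that the paper tests against spatial functions $\chi(q)$ and differentiates in $t$, while you test against full space-time functions $\varphi(t,q)$ and integrate by parts in both variables; the underlying computation is identical.
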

  \begin{proof}
  See Appendix~\ref{app:odejustification}. 
  \end{proof}
  
Suppose we are interested in solving \eqref{eq:upde2} with $M=N^D$ initial data, {\it equally spaced}
on a $D$-dimensional uniform mesh on $[0,1]^D$, with a uniform mesh size $h=1/N$. This will be the mesh to solve the
PDE problem in Eq.\eqref{eq:pdephi}. The discretisation process proceeds in the same way as Section~\ref{sec:HJ} and we can similarly convert the linear PDE in Eq.~\eqref{eq:pdephi} into a linear algebra problem. Let the discretisation of $\Phi(t,q)$ be
\begin{align}
    & \Phi (t, q) \rightarrow \Phi^{\omega}_{n, \vect{j}}
\end{align}
where $n$ denotes the time step from $t \rightarrow n \Delta t$ and $n=0,...,N_t$. The vector $\vect{j}=(j_1,...,j_D)$, denotes spatial grid positions with grid size $h=1/N$ where $q \rightarrow h \vect{j}$. The initial state at $n=0$ can be approximated by 
\begin{align} \label{eq:Phiinitialtext}
    \Phi^{\omega}_{0, \vect{j}}= \frac{1}{M}\sum_{k=1}^M  \delta_\omega(\vect{j}h-X^{[k]}_0)
\end{align}
where $\delta_{\omega}$ is the smoothed delta function, defined in Section~\ref{sec:levelset}, and $\omega$ is a smoothing parameter. The linear algebra problem to solve for $\Phi^{\omega}_{n, \vect{j}}$ is then identical to Eq.~\eqref{B11}, with the replacement $\psi_{n, \vect{j}, \vect{l}} \rightarrow \Phi_{n, \vect{j}}$ and $\mathcal{M} \rightarrow \mathcal{M}_{\text{ODE}}$. Here $\mathcal{M}_{\text{ODE}}$ is an $N_t N^D \times N_t N^D$ Hermitian matrix. See Appendix~\ref{app:discretisationode} for the precise form of $\mathcal{M}_{\text{ODE}}$ and details of the discretisation procedure. 
\begin{lemma} \label{lem:odesparsecondition} The condition number of $\mathcal{M}_{\text{ODE}}$ is $\kappa \leq O(D NT)$ and sparsity is $s=O(D)$. 
\end{lemma}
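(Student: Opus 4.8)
The plan is to mirror the proof of Lemma~\ref{lem:skappa}, since the ODE PDE in Eq.~\eqref{eq:pdephi} is the same type of linear transport (advection) equation as the Liouville equation in Eq.~\eqref{Liouville-delta}, only now posed in $D$ spatial dimensions $q\in[0,1]^D$ rather than in the $2d$-dimensional phase space, and discretised on the same kind of uniform mesh with the forward Euler scheme in time. The detailed construction of $\mathcal{M}_{\text{ODE}}$ — which I will take from Appendix~\ref{app:discretisationode} — has exactly the block bidiagonal (Toeplitz-in-time) structure of $\mathcal{K}$ promoted to a Hermitian matrix $\begin{pmatrix}0 & \mathcal{K}_{\text{ODE}}\\ \mathcal{K}_{\text{ODE}}^{\dagger} & 0\end{pmatrix}$, and since conjugating to this anti-block-diagonal form preserves sparsity and condition number, it suffices to bound these quantities for the time-stepping matrix $\mathcal{K}_{\text{ODE}}$ itself.

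For the sparsity, I would count the number of nonzero entries per row of $\mathcal{K}_{\text{ODE}}$. A first-order finite difference (or finite volume) discretisation of $\nabla_q\cdot[F(q)\Phi]$ couples each grid point $\vect{j}$ only to its $2D$ nearest neighbours (two per coordinate direction, for upwind or centred differences), plus the diagonal term and the one-step-back coupling in time from the forward Euler update. This gives $O(D)$ nonzeros per row, so $s=O(D)$. This step is routine.

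The condition number bound $\kappa\le O(DNT)$ is the substantive step. Writing the scheme as $\Phi^{n+1}=\mathcal{E}\,\Phi^{n}$ with $\mathcal{E}=I-\Delta t\,A$, where $A$ is the spatial discretisation matrix, the matrix $\mathcal{K}_{\text{ODE}}$ is block lower-bidiagonal with $I$ on the diagonal and $-\mathcal{E}$ on the subdiagonal, so its inverse is block lower-triangular with $(\mathcal{K}_{\text{ODE}}^{-1})_{mn}=\mathcal{E}^{m-n}$ for $m\ge n$. Hence $\|\mathcal{K}_{\text{ODE}}^{-1}\|\le N_t\max_{0\le p\le N_t}\|\mathcal{E}^{p}\|$, and since the transport scheme is stable under the CFL condition $\Delta t=O(h/D)$ one has $\|\mathcal{E}^{p}\|\le O(1)$ (up to the mild growth permitted by a non-normal amplification matrix, absorbed into constants), so $\|\mathcal{K}_{\text{ODE}}^{-1}\|\le O(N_t)$. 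Since $T=N_t\Delta t$ and $\Delta t=O(h/D)=O(1/(DN))$ we get $N_t=O(DNT)$, and combined with $\|\mathcal{K}_{\text{ODE}}\|=O(1)$ (the entries are $O(\Delta t/h)=O(1/D)$ in magnitude with $O(D)$ per row, plus the identity) this gives $\kappa=\|\mathcal{K}_{\text{ODE}}\|\,\|\mathcal{K}_{\text{ODE}}^{-1}\|\le O(DNT)$.

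The main obstacle is controlling $\|\mathcal{E}^{p}\|$ uniformly in $p$: $\mathcal{E}$ is generally non-normal, so stability in a norm adapted to the scheme (an energy estimate for the discrete transport operator, or a von Neumann/symbol analysis on the periodic or compactly-supported problem) is what actually delivers $\|\mathcal{E}^p\|\le O(1)$ rather than an exponentially growing bound; this is precisely the role of the CFL restriction $\Delta t=O(h/D)$ and of the (upwind) discretisation choice, and it is the step where one must be slightly careful rather than merely counting. Everything else — assembling the block-triangular inverse, bounding the geometric-like sum, and translating $N_t$ into $DNT$ — is bookkeeping identical to Lemma~\ref{lem:skappa}.
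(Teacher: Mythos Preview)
Your proposal is correct and follows essentially the same route as the paper: reduce to the non-Hermitian $\mathcal{K}_{\text{ODE}}$, read off sparsity $O(D)$ from the upwind stencil, write $\mathcal{K}_{\text{ODE}}^{-1}$ as block lower-triangular in powers of the one-step amplification matrix, bound by $N_t$ using stability under CFL, and then convert $N_t=O(DNT)$.

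The one place your write-up is slightly over-engineered is the ``main obstacle'' you flag. For the upwind scheme under the CFL condition $D\lambda\le 1$, the amplification matrix $\mathcal{E}$ has nonnegative entries with row sums equal to $1$ (each row reads $(1-D\lambda)+\lambda+\cdots+\lambda=1$), so $\|\mathcal{E}\|_\infty\le 1$ and likewise $\|\mathcal{E}\|_1\le 1$, hence $\|\mathcal{E}\|_2\le 1$. This gives $\|\mathcal{E}^p\|\le 1$ for all $p$ immediately, with no need for energy estimates or von Neumann analysis; the paper phrases this as a Gershgorin-type bound. Everything else in your argument matches the paper's Appendix~\ref{app:skappaproof} with $d\to D/2$.
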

\begin{proof}
The proof is identical to the proof in Appendix~\ref{app:skappaproof} with the replacement $d \rightarrow D/2$. 
\end{proof}

\subsection{Observables of the ODE}
Any ensemble-averaged observable obtained from the solutions $X^{[k]}$ of the system of nonlinear ODEs can be written in the following form.  
\begin{definition} \label{def:observablea}
For any function $A(X)$, one can define the ensemble average
via
\begin{align} \label{Ens-G}
  \langle A (t) \rangle \equiv \int  A(q)  \Phi(t, q)\, dq=\frac{1}{M} \sum_{k=1}^M  A(X^{[k]}(t))
\end{align}
which we identify to be the observable associated with the ODE whose solutions are $\{X^{[k]}(t)\}$. 
\end{definition}
The second equality in Definition~\ref{def:observablea} is justified in Lemma~\ref{lem:AA}, where we prove that, if $\Phi(x,q)$ is the solution to 
(\ref{eq:pdephi}), and $\langle A(t)\rangle$ is defined as in Eq.~\eqref{Ens-G}, then
\begin{equation}
    \label{AA}
\langle A(t)\rangle=\frac{1}{M} \sum_{k=1}^M  A(X^{[k]}(t)),
\end{equation}
where $X^{[k]}$ is the solution to Eq.~\eqref{eq:upde2}. Since $\langle A(t) \rangle$ is an integral of the product of $A(q)$ and $\Phi(t,q)$, we can approximate this quantity with a very similar quantum algorithm that we used in Sections~\ref{sec:HJ} and \ref{sec:hyperbolic}.

\begin{lemma} \label{lem:AA}
 Assume $\Phi$ is the solution to Eq.~\eqref{eq:pdephi}, and $X^{[k]}$ is the solution to Eq.~\eqref{eq:upde},
 then Eq.~\eqref{AA} holds.
\end{lemma}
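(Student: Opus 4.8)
The plan is to verify directly that $\Phi(t,q) = \frac{1}{M}\sum_{k=1}^M \delta(q - X^{[k]}(t))$ — already shown in Lemma~\ref{weaksoln-proof} to be the (weak) solution of the linear transport PDE Eq.~\eqref{eq:pdephi} — produces the claimed ensemble average when integrated against $A(q)$. First I would substitute this expression for $\Phi$ into the definition Eq.~\eqref{Ens-G} and pull the sum outside the integral, so that $\langle A(t)\rangle = \frac{1}{M}\sum_{k=1}^M \int_{\mathbb{R}^D} A(q)\,\delta(q - X^{[k]}(t))\,dq$. Then I would invoke the sifting property of the Dirac delta — rigorously, this means testing against the smooth compactly supported function $A$ in the sense of distributions — to collapse each integral to $A(X^{[k]}(t))$, which yields Eq.~\eqref{AA}.

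The one genuine subtlety, and the place I would spend the most care, is the uniqueness of the solution to Eq.~\eqref{eq:pdephi}: Lemma~\ref{weaksoln-proof} tells us that $\frac{1}{M}\sum_k \delta(q - X^{[k]}(t))$ \emph{is} a weak solution, but the statement of Lemma~\ref{lem:AA} refers to ``the solution'' $\Phi$, so one must know that the weak solution of this linear continuity equation with the given initial measure is unique. This follows from the standard theory of linear transport/continuity equations along the characteristic flow generated by $F$ (assuming $F$ is, say, Lipschitz so the ODE flow is well-defined and the characteristics do not cross), via the method of characteristics or DiPerna–Lions theory; I would cite this rather than reprove it. Granting uniqueness, the $\Phi$ appearing in the hypothesis coincides with the explicit pushforward measure, and the computation above applies verbatim.

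Finally I would remark that Eq.~\eqref{eq:upde} and Eq.~\eqref{eq:upde2} are the same system (the lemma statement cites Eq.~\eqref{eq:upde}, but the $X^{[k]}$ in Eq.~\eqref{eq:phiaequation} solve Eq.~\eqref{eq:upde2}), so no separate argument is needed there. The main obstacle is thus not computational at all — the delta-function manipulation is routine — but rather the logical point of pinning down which object ``the solution $\Phi$'' denotes; once uniqueness of the weak solution is in hand, the proof is immediate from Lemma~\ref{weaksoln-proof} together with the sifting property. A one-line version: by Lemma~\ref{weaksoln-proof} and uniqueness, $\Phi(t,q)=\frac{1}{M}\sum_k\delta(q-X^{[k]}(t))$, whence $\int A(q)\Phi(t,q)\,dq = \frac{1}{M}\sum_k A(X^{[k]}(t))$.
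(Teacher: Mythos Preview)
Your argument is correct, but it takes a different route from the paper. You invoke Lemma~\ref{weaksoln-proof} (plus uniqueness, which the paper also cites from \cite{Rav85}) to identify $\Phi$ with the explicit sum of propagated deltas, and then the sifting property finishes in one line. The paper instead proves the lemma from scratch via the method of characteristics: it writes $\Phi(t,p)=\Phi(0,X(0;p,t))\,J(0;p,t)$ with the Jacobian of the flow, substitutes the initial data $\frac{1}{M}\sum_k\delta(\cdot-X_0^{[k]})$, and then uses the change-of-variables formula $\delta(X(0;p,t)-X_0^{[k]})=J(0;p,t)^{-1}\delta(p-X(t;X_0^{[k]},0))$ so that the Jacobian factors cancel exactly. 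Your approach is shorter and perfectly valid given that Lemma~\ref{weaksoln-proof} is already in hand; the paper's approach is more self-contained and has the conceptual payoff of making the Jacobian cancellation explicit, which is precisely the point the text emphasizes when contrasting the ODE observable (no Jacobian weights) with the level-set observable in Definition~\ref{def:levelsetobservable}.
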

\begin{proof}
See Appendix~\ref{app:proofobservablea}. 
\end{proof}

We remark that here the observable $\langle A(t) \rangle$ is not a sum of terms $A(X^{[k]}(t))$ weighted by Jacobians, unlike in the case of nonlinear PDEs.  This key difference arises from the fact that during the evolution of the function $\Phi(t,q)$, the form of the delta function is preserved, i.e., from $t=0$ to $T$, the delta function terms in $\Phi(t,q)$ remains the form of  $ \delta(q-X^{[k]}(t))$, so there is no Jacobian term (or, in fact, the determinant of the Jacobian is always $1$ here).  This is unlike the case for Hamilton-Jacobi equation and hyperbolic PDEs, where for general $t$ the form of the delta function terms at $t=0$ are not preserved (from Eq.~\eqref{Liou-I} to Eq.~\eqref{eq:psisolution}). \\

Similarly to Section~\ref{sec:levelsetobservable} we can also define a new observable $A_O(t)$ that is the ratio of two observables where the second observable is the identity function. This can be considered as a normalised observable
\begin{align}
  A_O(t) \equiv \frac{\langle A (t)\rangle}{\mathbf{1}(t)}=\frac{\int  A(q)  \Phi(t, q)\, dq}{\int \Phi(t, q)\, dq}.  
\end{align}

\noindent{\bf Remark:}
Our algorithm does not recover individual $X^{[k]}(t)$,  only  the ensemble average. Moreover, we can
recover the ensemble average in any part of the domain, say $\mathcal{D}$, by
\[
A_P^{\mathcal{D}} (t) = \frac{1}{\text{vol}(\mathcal{D})}\int  A(q) \chi_{\mathcal{D}}(q) \Phi(t, q)\, dq
\]
where $\chi(\mathcal{D})$ is the characteristic function on $\mathcal{D}$ and $\text{vol}(\mathcal{D})$ is the volume of $\mathcal{D}$. We can do the simulation in some bounded domain in $q$, so the support of $\Phi$ remains inside the domain throughout the computational time. This is possible since the transport equation \eqref{eq:pdephi} has a finite propagation speed, if the support of $\Phi(0,q)$ is bounded. \\

The observable defined by the integral in Eq.~\eqref{Ens-G} can then be approximated by a quadrature method:
\begin{equation}\label{quadrature}
\langle A(t) \rangle \approx \langle A^{\omega}_n \rangle  = \frac{1}{N^D} \sum_{\vect{j}}^N A_{\vect{j}} \Phi^{\omega}_{n, \vect{j}},
\end{equation}
where one uses the discretisation $A(q) \rightarrow A_{\vect{j}}$. For details of the discretisation see Appendix~\ref{app:discretisationode}.\\

 The total computational cost in computing observables from Eq.~\eqref{eq:pdephi} is similar that in Lemma \ref{lem:liouville},  where Eq.~\eqref{eq:pdephi} has dimension $D+1$. 
\begin{lemma} \label{lem:liouville-1}
 The overall computational cost will be of $O(D N_{t, \text{CL}} N_{\text{CL}}^{D})=O(D^2T N_{\text{CL}}^{D+1})=O(D^{D+3}T(1/\epsilon_{\text{CL}})^{3D+3})$.  
\end{lemma}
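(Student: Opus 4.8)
\emph{Proof plan.} The strategy is to repeat verbatim the argument of Lemma~\ref{lem:liouville}, with the phase-space dimension $2d$ there replaced by the spatial dimension $D$ here, since Eq.~\eqref{eq:pdephi} is again a \emph{linear} transport equation driven by a (smoothed) delta-function initial datum, now in the $D$ variables $q$ on $[0,1]^D$. First I would fix the discretisation exactly as in Section~\ref{sec:HJ}: $N_{\text{CL}}$ grid points in each of the $D$ directions, so $h_{\text{CL}}=1/N_{\text{CL}}$, together with the forward Euler scheme in time and a first-order (upwind) finite difference or finite volume discretisation of the flux $\nabla_q\cdot[F(q)\Phi]$. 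Because transport occurs along all $D$ coordinate directions at once, the CFL stability condition forces $\Delta t_{\text{CL}}=O(h_{\text{CL}}/D)$, exactly as in Lemmas~\ref{lem:HJ} and \ref{lem:liouville}; the zeroth-order term $\Phi\,\nabla_q\cdot F$ appearing when $\nabla_q\cdot F\neq 0$ is bounded and lower order and does not alter this.

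Next I would transfer the classical error analysis. The three error sources here --- the first-order space/time discretisation of the $(D+1)$-dimensional transport PDE, the replacement of $\delta$ by the smoothed $\delta_\omega$ with $\omega=m h_{\text{CL}}$, and the quadrature rule in Eq.~\eqref{quadrature} --- are structurally identical to those bounded in Lemma~\ref{thm:classicalerror}, with $d$ replaced by $D$. The same Taylor-expansion estimates then give $\epsilon_{\text{CL}}\le C(\omega+D h_{\text{CL}}/\omega^2)$, and the optimal choice $\omega=(D h_{\text{CL}})^{1/3}$ yields $\epsilon_{\text{CL}}=O((D h_{\text{CL}})^{1/3})$. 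Inverting this relation gives $h_{\text{CL}}=O(\epsilon_{\text{CL}}^3/D)$, hence $N_{\text{CL}}=1/h_{\text{CL}}=O(D/\epsilon_{\text{CL}}^3)$, the $D$-dimensional analogue of the grid count in Lemma~\ref{lem:liouville}.

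Finally I would count operations. Updating the solution at one grid point for one time step costs $O(D)$ arithmetic operations (one upwind stencil per coordinate direction); there are $N_{\text{CL}}^{D}$ grid points; and $N_{t,\text{CL}}=T/\Delta t_{\text{CL}}=O(D T N_{\text{CL}})$ time steps. The quadrature sum in Eq.~\eqref{quadrature} costs only $O(N_{\text{CL}}^{D})$ and is therefore subdominant. Multiplying gives
\begin{align}
O\!\left(D\, N_{t,\text{CL}}\, N_{\text{CL}}^{D}\right)
= O\!\left(D^{2} T\, N_{\text{CL}}^{D+1}\right)
= O\!\left(D^{2} T\,(D/\epsilon_{\text{CL}}^{3})^{D+1}\right)
= O\!\left(D^{D+3} T\,(1/\epsilon_{\text{CL}})^{3D+3}\right),
\end{align}
which is exactly the claimed chain of estimates.

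I do not anticipate a genuine obstacle: every step is a direct transcription of the $(2d+1)$-dimensional Liouville analysis (Lemmas~\ref{thm:classicalerror} and \ref{lem:liouville}) under the substitution $2d\mapsto D$, and the sparsity/condition-number data needed as a sanity check are already recorded in Lemma~\ref{lem:odesparsecondition}. The only point worth double-checking is that the conservative form of Eq.~\eqref{eq:pdephi} --- i.e.\ the possible presence of the $\Phi\,\nabla_q\cdot F$ term --- does not degrade either the CFL constraint or the first-order truncation error; since it is a bounded, smooth, lower-order perturbation, it does not, so for counting purposes the equation behaves exactly like the divergence-free transport equation already treated.
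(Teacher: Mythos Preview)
Your proposal is correct and takes essentially the same approach as the paper: the paper does not give a standalone proof of this lemma but simply notes that the cost ``is similar [to] that in Lemma~\ref{lem:liouville}, where Eq.~\eqref{eq:pdephi} has dimension $D+1$,'' i.e.\ the substitution $2d\mapsto D$ you carry out. Your additional remark that the zeroth-order term $\Phi\,\nabla_q\cdot F$ in the conservative form does not affect the CFL constraint or the first-order truncation error is a sensible sanity check that the paper leaves implicit.
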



\noindent{\bf Remark:} This method is related to the particle methods \cite{Rav85} which is a popular classical method used to solve linear transport equations in high dimensions. There one uses particles, which satisfy the ODEs \eqref{eq:upde2}, to solve the linear transport (Liouville \eqref{eq:pdephi} here) equation. Our method here is, in a sense, like an \textit{inverted} particle method--here we use the Liouville equation \eqref{eq:pdephi} to approximate the particle (ODE) system \eqref{eq:upde2}, instead of the other way around. 

\subsection{The quantum algorithm to estimate the ensemble average}

We proceed in a similar way to Section~\ref{sec:levelsetobservable} and define the initial quantum state that encodes the initial condition 
\begin{align}
    |\Phi_{0}\rangle=\frac{1}{N_{\Phi_{0}}}\sum_{\vect{j}}^{N} \Phi_{0, \vect{j}} |\vect{j}\rangle |n=0\rangle
\end{align}
where $\Phi_{0, \vect{j}}$ is defined in Eq.~\eqref{eq:phiinitial}. The normalisation is given by $N_{\Phi_{0}}=\sqrt{\sum_{\vect{j}}^{N}|\Phi_{0, \vect{j}}|^2}$. We can also define the state 
\begin{align}
    |A^n\rangle \equiv \frac{1}{N_A}\sum_{\vect{j}}^{N}A^*_{\vect{j}}|\vect{j}\rangle |n\rangle
\end{align}
where we note that here we do \textit{not} sum over the time step index $n$, where $n\Delta t$ is the time-step that we later want to compute $\langle A^{\omega}_n \rangle$ in Eq.~\eqref{lem:expectationa}. The normalisation is $N_A=\sqrt{\sum_{\vect{j}}^N |A^*_{\vect{j}}|^2}$. \\

Using Eq.~\eqref{quadrature} and given the states $|\Phi\rangle$ and $|A^n\rangle$, one can recover the observable from
\begin{align} \label{lem:expectationa}
    \langle A(t_n)\rangle \approx \langle A^{\omega}_n \rangle \equiv \frac{1}{N^D}\sum_{\vect{j}=1}^N A_{\vect{j}} \Phi_{\omega}(t_n, \vect{j})=n_{\Phi_{0}}n_A |\sqrt{\Upsilon_{\text{ODE}}}| 
\end{align}
where $n_{\Phi_{0}} \equiv N_{\Phi_{0}}/N^{D/2}$, $n_A \equiv N_A/N^{D/2}$ and a quantum algorithm allows one to compute
\begin{align}
    \Upsilon_{\text{ODE}} \equiv \langle \Phi_{0}|(\mathcal{M}_{\text{ODE}}^{-1})^{\dagger} \mathcal{A} \mathcal{M}_{\text{ODE}}^{-1}|\Phi_{0}\rangle.
\end{align}
Now the normalisations can similarly be estimated like in Section~\ref{sec:levelsetobservable}

\begin{lemma} \label{lem:normalisationsode}
The constants $n_A \equiv N_A/N^{d/2}=O(1)$ and $n_{\Phi_{0}} \equiv N_{\Phi_{0}}/N^{D/2}=O(1)$.
\end{lemma}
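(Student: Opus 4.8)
The plan is to reduce both claims to comparing a discrete $\ell^2$-sum over the $N^D$-point mesh with an integral over the unit box $[0,1]^D$, exactly as in the proofs of Lemma~\ref{lem:normalisations} and Lemma~\ref{lem:normalisations2}. The bound on $n_A$ is immediate: since $A_{\vect j}=A(\vect j/N)$ merely samples the bounded function $A$ on $[0,1]^D$, we get $N_A^2=\sum_{\vect j}^N|A^*_{\vect j}|^2\le N^D\|A\|_\infty^2$, hence $n_A^2=N_A^2/N^D\le\|A\|_\infty^2=O(1)$.

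For $n_{\Phi_0}$ the point I would stress is that the regime $M=N^D$ with the $X_0^{[k]}$ equally spaced on the mesh of size $h=1/N$ makes the smoothed initial density bounded uniformly in $\vect j$. With $\omega=mh$ and $m=O(1)$, at most $(2m+1)^D=O(1)$ of the $M$ mesh points $X_0^{[k]}$ lie in the $\ell^\infty$-ball of radius $\omega$ about $\vect jh$, and the product form $\delta_\omega(x)=\prod_i\delta_\omega(x_i)$ bounds $\delta_\omega\le(C/\omega)^D=(CN/m)^D$, so
\begin{align}
\Phi_{0,\vect j}=\frac{1}{M}\sum_{k=1}^M\delta_\omega(\vect jh-X_0^{[k]})\le\frac{1}{N^D}\,(2m+1)^D\Bigl(\frac{CN}{m}\Bigr)^D=O(1)
\end{align}
uniformly in $\vect j$. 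Consequently $N_{\Phi_0}^2=\sum_{\vect j}^N|\Phi_{0,\vect j}|^2=O(N^D)$, i.e. $n_{\Phi_0}^2=N_{\Phi_0}^2/N^D=O(1)$. Heuristically $\Phi_0\approx\mathbf 1_{[0,1]^D}$, because the $1/M$ prefactor exactly cancels the $M$-fold overlap of the smoothed deltas; this is the ``worst case'' $\beta=O(1)$ analogue of Lemma~\ref{lem:normalisations}. If one additionally wants $n_{\Phi_0}=\Theta(1)$ rather than merely $O(1)$, the matching lower bound is free from Cauchy--Schwarz: $n_{\Phi_0}^2=\frac{1}{N^D}\sum_{\vect j}|\Phi_{0,\vect j}|^2\ge\bigl(\frac{1}{N^D}\sum_{\vect j}\Phi_{0,\vect j}\bigr)^2$, and $\frac{1}{N^D}\sum_{\vect j}\Phi_{0,\vect j}$ is just the $D$-dimensional quadrature of $\int\Phi_0\,dq=1$, hence $\approx 1$.

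I do not expect a serious obstacle here; the only care needed is the bookkeeping of powers of $N$, so that the $(CN/m)^D$ peak height of the $D$-fold product $\delta_\omega$ is exactly absorbed by the $1/M=1/N^D$ normalisation together with the $O(1)$ count of nearby mesh points — i.e. checking that it is precisely this sampling regime ($M=N^D$, equispaced) that forces $n_{\Phi_0}=O(1)$, whereas a sparser or non-uniform choice of initial data would make $\Phi_0$ spiky and $n_{\Phi_0}$ grow with $N$, as in the $b>0$ cases of Lemma~\ref{lem:normalisations}. The quadrature step $\frac1{N^D}\sum_{\vect j}\delta_\omega(\vect jh-X_0^{[k]})\to\int\delta_\omega=1$ used for the lower bound needs only $\int\delta_\omega=1$ plus the bounded-variation estimate $\mathrm{TV}(\delta_\omega)=O(1/\omega)$, giving relative error $O(h/\omega)=O(1/m)$ — the same ingredient already invoked in the proof of Lemma~\ref{thm:classicalerror} — and a minor boundary correction for grid points within $O(\omega)$ of $\partial[0,1]^D$ only decreases $|\Phi_{0,\vect j}|$ and is therefore harmless for the upper bound.
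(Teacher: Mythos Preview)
Your argument is correct and takes a genuinely different route from the paper for the $n_{\Phi_0}$ bound. The paper's proof in Appendix~\ref{sec:normalisations} follows the template of Lemma~\ref{lem:normalisations}: it bounds the continuous integral $\int|\Phi(0,q)|^2\,dq\le 1/\omega^D$ (written for a single $k$) and then invokes a quadrature identity $\int|\Phi_0|^2\approx N_{\Phi_0}^2/m^D$ to conclude $N_{\Phi_0}=O(N^{D/2})$. You instead work directly at the discrete level and exploit the standing assumption $M=N^D$ with the $X_0^{[k]}$ on the mesh, obtaining a \emph{pointwise} bound $\Phi_{0,\vect j}=O(1)$ by counting the $(2m+1)^D$ nearby sources against the $1/N^D$ prefactor. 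This makes transparent precisely why $n_{\Phi_0}=O(1)$ holds here but not in the sparse-data regimes $b>0$ of Lemma~\ref{lem:normalisations}: the $M=N^D$ equispacing is what forces the smoothed initial density to be essentially the indicator of $[0,1]^D$, as your heuristic $\Phi_0\approx\mathbf 1_{[0,1]^D}$ says (indeed for $\beta(x)=1-|x|$ one has $\Phi_{0,\vect j}=1$ exactly at interior points). The paper's proof, by contrast, does not explicitly invoke $M=N^D$ and relies on a quadrature weight $1/m^D$ whose justification is left implicit.

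For $n_A$ your $\ell^\infty$ bound is equivalent to the paper's quadrature argument for $n_G$. Your Cauchy--Schwarz lower bound for $n_{\Phi_0}$ is also a slight departure: the paper instead deduces $n_{\Phi_0}\gtrsim O(1)$ from $\Upsilon\le 1$ together with the normalisation of the observable, as in the proof of Lemma~\ref{lem:normalisations}. Both routes are valid; yours is more self-contained. One minor caveat: your crude pointwise upper bound carries a constant $((2m+1)C/m)^D$ that is exponential in $D$, whereas the exact Riemann-sum identity (or the paper's quadrature) gives a $D$-independent constant; since $O(1)$ in this context means independent of $N$, this does not affect the stated conclusion, but it is worth noting if $D$-dependence is ever tracked.
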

\begin{proof}
See Appendix~\ref{sec:normalisations}. 
\end{proof}

The quantum algorithm only allows one to estimate $\Upsilon_{\text{ODE}}$ to finite precision, so we denote the quantum estimate as 
\begin{align}
    \langle \tilde{A}^{\omega}_n\rangle \equiv n_{\Phi_{0}}n_A |\sqrt{\tilde{\Upsilon}_{\text{ODE}}}|
\end{align}
where $|\sqrt{\Upsilon_{\text{ODE}}}-\sqrt{\tilde{\Upsilon}_{\text{ODE}}}| \leq \epsilon_A$. The error analysis then proceeds in exactly the same way as Section~\ref{sec:levelsetobservable}. Then we have the following contributions to the total error in estimating $A_O(t_n)$. 

\begin{lemma} \label{lem:3errorsode} The error $|\langle A(t_n) \rangle-\langle \tilde{A}^{\omega}_n\rangle|$ can be broken into two independent sources of error
\begin{align}
    & |\langle A(t_n) \rangle-\langle \tilde{A}^{\omega}_n\rangle| \leq |\langle A(t_n) \rangle-\langle A^{\omega}_n \rangle|+|\langle A^{\omega}_n \rangle-\langle \tilde{A}^{\omega}_n\rangle| \nonumber \\
    & =\epsilon_C+\epsilon_{Q}=\epsilon.
\end{align}
Here $\epsilon_C$ comes from approximating the solutions of the discretised linear PDE \eqref{eq:pdephi}, as estimated in Eq.~\eqref{eq:acerror}. The quantum error comes from the estimate of $\Upsilon_{\text{ODE}}$, where $|\sqrt{\Upsilon_{\text{ODE}}}-\sqrt{\tilde{\Upsilon}_{\text{ODE}}}| \leq \epsilon_A$. Imposing $\epsilon_C \sim \epsilon_Q \sim \epsilon$, then we require $\epsilon_A \sim \epsilon/(n_{\Phi_{0}}n_A)\sim \epsilon_C/(n_{\Phi_{0}}n_A)$.
\end{lemma}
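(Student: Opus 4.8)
The plan is to mirror the error decomposition already carried out for the Hamilton--Jacobi case in Lemma~\ref{lem:3errors}, since the ODE algorithm is structurally identical to those of Sections~\ref{sec:HJ} and~\ref{sec:hyperbolic} up to the relabelling $\psi \to \Phi$, $\mathcal{M}\to\mathcal{M}_{\text{ODE}}$ and $G\to A$. First I would introduce the intermediate quantity $\langle A^{\omega}_n\rangle$, the observable obtained from the \emph{exact} solution of the discretised linear transport PDE in Eq.~\eqref{eq:pdephi} with a smoothed delta function and the quadrature rule of Eq.~\eqref{quadrature}, and apply the triangle inequality
\[
|\langle A(t_n)\rangle - \langle\tilde{A}^{\omega}_n\rangle| \le |\langle A(t_n)\rangle - \langle A^{\omega}_n\rangle| + |\langle A^{\omega}_n\rangle - \langle\tilde{A}^{\omega}_n\rangle|.
\]
The first term is purely classical and collects the errors from replacing the delta initial data by $\delta_\omega$, from the first-order finite-difference/forward-Euler discretisation of the $(D+1)$-dimensional transport PDE, and from the quadrature; it is precisely the quantity estimated in Eq.~\eqref{eq:acerror}, and I would simply call it $\epsilon_C$. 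The second term is the quantum error $\epsilon_Q$.

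For the quantum error I would use the representations $\langle A^{\omega}_n\rangle = n_{\Phi_{0}}n_A|\sqrt{\Upsilon_{\text{ODE}}}|$ and $\langle\tilde{A}^{\omega}_n\rangle = n_{\Phi_{0}}n_A|\sqrt{\tilde{\Upsilon}_{\text{ODE}}}|$ from Eq.~\eqref{lem:expectationa}, giving
\[
\epsilon_Q = n_{\Phi_{0}}n_A\,\bigl|\,|\sqrt{\Upsilon_{\text{ODE}}}| - |\sqrt{\tilde{\Upsilon}_{\text{ODE}}}|\,\bigr| \le n_{\Phi_{0}}n_A\,|\sqrt{\Upsilon_{\text{ODE}}}-\sqrt{\tilde{\Upsilon}_{\text{ODE}}}| \le n_{\Phi_{0}}n_A\,\epsilon_A,
\]
where the first inequality is the reverse triangle inequality $\bigl|\,|a|-|b|\,\bigr|\le|a-b|$ and the last uses the assumed bound $|\sqrt{\Upsilon_{\text{ODE}}}-\sqrt{\tilde{\Upsilon}_{\text{ODE}}}|\le\epsilon_A$. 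To allocate the error budget I would then impose $\epsilon_C\sim\epsilon_Q\sim\epsilon$; it is then sufficient that $n_{\Phi_{0}}n_A\,\epsilon_A\sim\epsilon$, i.e. $\epsilon_A\sim\epsilon/(n_{\Phi_{0}}n_A)\sim\epsilon_C/(n_{\Phi_{0}}n_A)$, which with $n_A=O(1)$ and $n_{\Phi_{0}}=O(1)$ from Lemma~\ref{lem:normalisationsode} is exactly the claim.

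The computation is routine; the only genuine input is the classical estimate $\epsilon_C$ referenced in Eq.~\eqref{eq:acerror}, which must be proved separately by the same smoothed-delta-plus-quadrature argument used in Appendix~\ref{app:classicalerrorproof} for the Liouville equation, now applied to the $(D+1)$-dimensional transport PDE; within the present lemma it is taken as given. Hence the only thing one must be careful about is correctly carrying the multiplicative normalisation $n_{\Phi_{0}}n_A$ when converting between precision in $\Upsilon_{\text{ODE}}$ and precision in the observable --- and, in contrast to the Hamilton--Jacobi and hyperbolic cases, here there is no Jacobian reweighting to track, since the delta-function form is preserved under the evolution in Eq.~\eqref{eq:pdephi}.
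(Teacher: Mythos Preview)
Your proposal is correct and follows essentially the same approach as the paper: the paper does not give a separate proof for this lemma, since it is structurally identical to Lemma~\ref{lem:3errors} under the relabelling $\psi\to\Phi$, $G\to A$, $\Upsilon\to\Upsilon_{\text{ODE}}$, and your argument reproduces that proof (triangle inequality, then bounding $\epsilon_Q$ via the normalisation factors and the assumed bound on $|\sqrt{\Upsilon_{\text{ODE}}}-\sqrt{\tilde{\Upsilon}_{\text{ODE}}}|$). If anything, your use of the reverse triangle inequality $\bigl|\,|a|-|b|\,\bigr|\le|a-b|$ is slightly more careful than the paper's one-line computation in the proof of Lemma~\ref{lem:3errors}.
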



The query and gate complexity for estimating the observable $\langle A(t_n) \rangle$ is then identical to Theorem~\ref{thm:qquery} with the replacement $d \rightarrow D/2$.

\begin{theorem} \label{thm:qqueryode} The worst-case total query complexity $\mathcal{Q}$ to estimate the observable $\langle A(t_n) \rangle$ to precision $\epsilon$ on a quantum algorithm that takes sparse access 
 $(s=O(D), \|\mathcal{M}\|_{max}=O(1), O_{M}, O_{F})$ to $\mathcal{M}_{\text{ODE}}$, access to the unitaries $J(n)$ where $J(n)|0\rangle=|A^n\rangle$ and $U_{\Phi_{0}}$ where $U_{\Phi_{0}}|0\rangle=|\Phi_{0}\rangle$, is 
 \begin{align} \label{eq:qqueryode1}
  \mathcal{Q}= \mathcal{O}\left(\frac{n_{\Phi_{0}}^2D^7T^3}{\epsilon^{10}}\log \left(\frac{n_{\Phi_{0}}^2D^4T^2}{\epsilon^{7}}\right) \right)
\end{align}
where all constant terms $O(1)$ are suppressed except $T=O(1)$ and $n_{\Phi_{0}}=O(1)$. This complexity is independent of $M$. 
\end{theorem}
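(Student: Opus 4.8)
The plan is to prove Theorem~\ref{thm:qqueryode} by transcribing the argument behind Theorem~\ref{thm:qquery} (carried out in Appendix~\ref{app:theoremhj}) under the dictionary that the $(D{+}1)$-dimensional linear transport PDE in Eq.~\eqref{eq:pdephi} plays the role of the $(2d{+}1)$-dimensional Liouville equation, i.e. with the formal substitution $d \mapsto D/2$. All the structural ingredients are already in place: Lemma~\ref{lem:AA} together with Definition~\ref{def:observablea} reduces the target $\langle A(t_n)\rangle$ to extracting $\Upsilon_{\mathrm{ODE}} \equiv \langle \Phi_{0}|(\mathcal{M}_{\mathrm{ODE}}^{-1})^{\dagger}\mathcal{A}\mathcal{M}_{\mathrm{ODE}}^{-1}|\Phi_{0}\rangle$ from the discretised problem; Lemma~\ref{lem:bslep2} is the SLEP routine given sparse access to a Hermitian matrix; Lemma~\ref{lem:odesparsecondition} supplies sparsity $s = O(D)$ and condition number $\kappa \le O(DNT)$; and Lemma~\ref{lem:normalisationsode} gives $n_A = n_{\Phi_0} = O(1)$. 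What remains is to (i) fix the discretisation parameters so the classical error meets the target $\epsilon$, (ii) convert that into the internal SLEP precision $\epsilon'$, and (iii) substitute into Lemma~\ref{lem:bslep2} and collect exponents.

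First I would carry out the classical error analysis for Eq.~\eqref{eq:pdephi}, the exact analogue of Lemma~\ref{thm:classicalerror}: under the CFL scaling $\Delta t = O(h/D)$ the forward-Euler/first-order upwind truncation error is $O(Dh)$, the smoothed-delta replacement of the initial datum in Eq.~\eqref{eq:Phiinitialtext} contributes $O(\omega + Dh/\omega^{2})$, and the quadrature in Eq.~\eqref{quadrature} is subdominant, so the balanced choice $\omega = (Dh)^{1/3}$ yields classical error $\epsilon_C = O((Dh)^{1/3})$, equivalently $N = O(D/\epsilon_C^{3})$ (consistent with Lemma~\ref{lem:liouville-1}). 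Imposing $\epsilon_C \sim \epsilon_Q \sim \epsilon$ as in Lemma~\ref{lem:3errorsode} gives $N = O(D/\epsilon^{3})$ and hence $\kappa = O(DNT) = O(D^{2}T/\epsilon^{3})$. For the quantum half, $\langle \tilde A^{\omega}_n\rangle = n_{\Phi_0}n_A|\sqrt{\tilde\Upsilon_{\mathrm{ODE}}}|$ with $\sqrt{\Upsilon_{\mathrm{ODE}}} = \langle A^{\omega}_n\rangle/(n_{\Phi_0}n_A)$, so an $\epsilon'$-additive estimate of $\Upsilon_{\mathrm{ODE}}$ induces an error $O(\epsilon'/\sqrt{\Upsilon_{\mathrm{ODE}}}) = O(\epsilon' n_{\Phi_0}n_A/\langle A\rangle)$ on $\sqrt{\Upsilon_{\mathrm{ODE}}}$; requiring this to be $\lesssim \epsilon/(n_{\Phi_0}n_A)$ forces $\epsilon' \sim \epsilon/(n_{\Phi_0}^{2}n_A^{2}) = O(\epsilon)$ — the $n_{\Phi_0}^{2}$ written in Eq.~\eqref{eq:qqueryode1} is $O(1)$ here but is retained to parallel the Hamilton--Jacobi formula.

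Next I would substitute $s = O(D)$, $\|\mathcal{M}\|_{\max} = O(1)$, $\|\mathcal{M}\| = O(1)$, $\kappa = O(D^{2}T/\epsilon^{3})$ and $\epsilon' = O(\epsilon)$ into Lemma~\ref{lem:bslep2}. The dominant contribution is the number of queries to the sparse oracles for $\mathcal{M}_{\mathrm{ODE}}$, namely $\mathcal{O}(s\|\mathcal{M}\|_{\max}\kappa^{3}\log(\kappa^{2}/(\|\mathcal{M}\|\epsilon'))/(\|\mathcal{M}\|\epsilon')) = \mathcal{O}(D\cdot(D^{2}T/\epsilon^{3})^{3}\cdot\epsilon^{-1}\log(D^{4}T^{2}/\epsilon^{7})) = \mathcal{O}(D^{7}T^{3}\epsilon^{-10}\log(D^{4}T^{2}/\epsilon^{7}))$, which also bounds the additive $2$-qubit gate count once one notes $m = \log(N_tN^{D}) = \tilde{O}(D)$ is subdominant to $s\kappa\log(\cdot)$, while the $U_{\mathcal{G}}$ and $U_{\Phi_0}$ query counts $\mathcal{O}(\kappa^{2}/(\|\mathcal{M}\|\epsilon')) = \mathcal{O}(D^{4}T^{2}/\epsilon^{7})$ are strictly smaller. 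This reproduces Eq.~\eqref{eq:qqueryode1}. Finally, $M$-independence is immediate: in Eq.~\eqref{eq:Phiinitialtext} the whole sum over the $M$ initial data sits inside a single amplitude of $|\Phi_0\rangle$, so neither $U_{\Phi_0}$ nor any downstream cost sees $M$. I expect the only non-mechanical step to be (i), re-deriving the classical error bound for Eq.~\eqref{eq:pdephi} with the $\omega = (Dh)^{1/3}$ balance, since the $\epsilon'$ conversion and exponent bookkeeping are routine given Lemma~\ref{lem:bslep2}; a minor point to check is that $\|\mathcal{M}_{\mathrm{ODE}}\| = O(1)$ still holds for the block-bidiagonal forward-Euler/upwind discretisation in $D+1$ dimensions, exactly as asserted in the Liouville case.
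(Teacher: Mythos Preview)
Your proposal is correct and follows essentially the same approach as the paper: the paper's proof consists of the single sentence ``The proof is the same as Theorem~\ref{thm:qquery} except with the replacement $D=2d$,'' and you have carried out exactly that substitution, spelling out the intermediate steps (the $\omega=(Dh)^{1/3}$ balance giving $N=O(D/\epsilon^3)$, hence $\kappa=O(D^2T/\epsilon^3)$, hence the dominant $s\kappa^3/\epsilon'$ term from Lemma~\ref{lem:bslep2}) that the paper leaves implicit. Your flagged checks on the classical error bound and $\|\mathcal{M}_{\mathrm{ODE}}\|=O(1)$ are indeed the only places requiring a moment's thought, and both follow by the same Gershgorin/CFL reasoning used in Appendix~\ref{app:skappaproof}.
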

\begin{proof}
The proof is the same as Theorem~\ref{thm:qquery} except with the replacement $D=2d$. 
\end{proof}

\begin{corollary} We say there is a quantum advantage in estimating the observables $\langle A(T)\rangle$ to precision $\epsilon$ when $\mathcal{Q}=\littleo(\mathcal{C})$, which requires
\begin{align}
   \littleo\left(\frac{M}{n^2_{\Phi_{0}} D^4T^2(1/\epsilon)^9}\right)=\tilde{O}(1).
\end{align}
\end{corollary}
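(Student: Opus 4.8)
The plan is to read off the claimed condition by simply dividing the tight upper bound on $\mathcal{Q}$ by the classical cost $\mathcal{C}$, in exact analogy with the proof of Corollary~\ref{cor:quantumhj1}. The first thing to pin down is the correct classical baseline: since the quantum algorithm computes the ensemble observable $\langle A(T)\rangle$ to precision $\epsilon$, the competing classical algorithm is the direct ODE integrator of Lemma~\ref{lem:odeclassical}, which solves the \emph{same} problem at cost $\mathcal{C}=O(D^3 M T/\epsilon)$. One must be careful here not to compare instead against the cost of classically discretising the transport PDE~\eqref{eq:pdephi} (Lemma~\ref{lem:liouville-1}); that is a much weaker competitor and not the comparison this corollary intends.

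Next I would quote $\mathcal{Q}$ from Theorem~\ref{thm:qqueryode} and absorb its logarithmic factor into $\tilde O$:
\begin{align}
  \mathcal{Q}= \mathcal{O}\left(\frac{n_{\Phi_{0}}^2D^7T^3}{\epsilon^{10}}\log \left(\frac{n_{\Phi_{0}}^2D^4T^2}{\epsilon^{7}}\right) \right)=\tilde{O}\left(\frac{n_{\Phi_{0}}^2D^7T^3}{\epsilon^{10}}\right),
\end{align}
where the absorption is legitimate because Lemma~\ref{lem:normalisationsode} gives $n_{\Phi_0}=O(1)$ and $D$ enters the argument of the logarithm only polynomially, so the suppressed factor is genuinely polylogarithmic in $D,1/\epsilon,T$. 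Forming the ratio then gives
\begin{align}
  \frac{\mathcal{C}}{\mathcal{Q}}=\tilde{O}\left(\frac{D^3 M T/\epsilon}{n_{\Phi_0}^2 D^7 T^3/\epsilon^{10}}\right)=\tilde{O}\left(\frac{M}{n_{\Phi_0}^2 D^4 T^2 (1/\epsilon)^9}\right).
\end{align}
As in Corollary~\ref{cor:quantumhj1}, the requirement $\mathcal{Q}=\littleo(\mathcal{C})$ is by definition the statement that $\mathcal{C}/\mathcal{Q}\to\infty$ in the asymptotic limit, and, up to the logarithmic factors tracked by $\tilde O$, this is precisely the claimed sufficient condition $\littleo(M/(n_{\Phi_0}^2 D^4 T^2 (1/\epsilon)^9))=\tilde O(1)$.

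There is no real technical difficulty in this argument — it is a one-line substitution once the two input bounds are in place — so the main obstacle is purely bookkeeping: using the direct-ODE cost from Lemma~\ref{lem:odeclassical} rather than the PDE-discretisation cost, and checking via Lemma~\ref{lem:normalisationsode} that the logarithm in $\mathcal{Q}$ is polylogarithmic before sweeping it into $\tilde O$. From the displayed ratio one also reads off the qualitative conclusion the corollary is driving at: since $\mathcal{Q}$ is independent of $M$ while $\mathcal{C}$ is linear in $M$, a quantum advantage is always attainable for sufficiently large $M$, with the threshold of order $M \sim \tilde O(n_{\Phi_0}^2 D^4 T^2/\epsilon^{9})$, i.e.\ of order $D^4 T^2/\epsilon^9$ up to logarithms once $n_{\Phi_0}=O(1)$ is used.
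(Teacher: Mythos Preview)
Your proof is correct and follows essentially the same route as the paper: quote $\mathcal{C}=O(MD^3T/\epsilon)$ from Lemma~\ref{lem:odeclassical}, quote $\mathcal{Q}=\tilde O(n_{\Phi_0}^2 D^7 T^3/\epsilon^{10})$ from Theorem~\ref{thm:qqueryode} (using Lemma~\ref{lem:normalisationsode} to justify absorbing the log), and form the ratio. Your added remarks about selecting the right classical baseline and the resulting $M$-threshold are helpful context but go slightly beyond what the paper's own proof records.
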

\begin{proof}
From Lemma~\ref{lem:odeclassical}, the classical cost is $\mathcal{C}=O(MD^3T/\epsilon)$. Since $n_{\Phi_0}=O(1)$, then from Theorem~\ref{thm:qqueryode}, we see $\mathcal{Q}= \tilde{O}(n^2_{\Phi_{0}}D^7T^3/\epsilon^{10})$ where $\tilde{O}$ suppresses all logarithmic terms in $D, \epsilon$.
\end{proof}

Here the only possible quantum advantage is with respect to $M$. Ignoring $n^2_{\Phi_{0}}=O(1)$ and $T=O(1)$ constant factors, quantum advantage is only possible for $M >\tilde{O}(D^4 (1/\epsilon)^9)$.\\ 


We note that here, unlike in the Hamilton-Jacobi and hyperbolic PDEs where the level set formalism is used, the $M$ initial data in the ODE case needs to be close to each other for neighboring initial points to form a mesh
good enough for the accuracy of solving the PDE in Eq.~\eqref{eq:pdephi}. There is no such constraint for the nonlinear PDE problems in Section~\ref{sec:HJ}. \\

In addition, the quantum algorithm for the nonlinear PDEs are much more efficient than the quantum  ODE solver, relatively speaking, when compared to their respective classical counterparts. In the ODE case, $D$ ODEs need to be solved by a $(D+1)$-dimensional linear PDE, while a $(d+1)$-dimensional Hamilton-Jacobi PDE, describes an infinite
dimensional dynamical system, can be represented by only a  $(2d+1)$-dimensional PDE.


\section{Solving more general nonlinear PDEs}
\label{sec:generalnonlinear}
In the special cases of nonlinear Hamilton-Jacobi and hyperbolic partial differential equations we saw how it is possible to reformulate the problem to linear partial differential equations by just (at most) doubling the dimension using the level set formalism. This same result, however, cannot be done analytically for  general nonlinear PDEs. \\

To devise quantum algorithms to solve more general $(d+1)$-dimensional nonlinear PDEs, like the Euler and Navier-Stokes equations in fluid dynamics, a naive way is to  first  numerically approximating the system so they become a system of nonlinear ODEs, and then use the quantum algorithm in Section~\ref{sec:ODEs}. Below we present two methods to achieve this: (i) the Lagrangian discretisation method and (ii) the Eulerian discretisation method. 


\subsection{The Lagrangian discretisations} 

One of the most important Lagrangian discretisation method in incompressible flows is the vortex method \cite{majda-Bertozzi}.   For the incompressible Euler equations the vortex method solves a Hamiltonian particle system like in Eq.~\eqref{eq:upde}. Although the diffusion term in the Navier-Stokes equation is usually modelled by stochastic ODEs (with Brownian motion for diffusion), there are also deterministic versions. This method of solving a $(d+1)$-dimensional nonlinear Euler or Navier-Stokes equation involves a description of $dN$ particles each obeying an ODE, where $N=O(d/\epsilon)$ and $\epsilon$ is the error in computing observables from the resulting ODE solutions. This means these PDEs can be reduced to a system of $O(d^2/\epsilon)$ nonlinear ODEs. \\

One can employ the quantum algorithm from Section~\ref{sec:ODEs} with quantum query and gate complexity cost $\mathcal{Q}=\tilde{\mathcal{O}}(d^{14}/\epsilon^{17})$, where we used Theorem~\ref{thm:qqueryode} with the replacement $D \rightarrow d^2/\epsilon$ and $\tilde{\mathcal{O}}$ suppressing logarithmic terms in $d$, $\epsilon$ and all $O(1)$ terms. The cost of classically solving the $(d+1)$-dimensional nonlinear PDE with a system of $dN$ nonlinear ODEs is $\mathcal{C}=O(MD^3/\epsilon)=O(Md^6/\epsilon^4)$ from Lemma~\ref{lem:odeclassical}. This quantum cost is independent of $M$ whereas the classical algorithm would depend linearly on $M$. Thus for a large enough $M>\tilde{O}(d^7/\epsilon^{13})$, one can expect an advantage in employing the quantum algorithm.\\



 There are other particle or mesh-free methods used for some other nonlinear PDEs like smoothed particle hydrodynamics (SPH) in solid mechanics and fluid flows \cite{SPH}. The Boltzmann equation for rarefied gas is often solved by the Direct Simulation Monte-Carlo method, which is a stochastic particle method \cite{bird1994}, and particle-in-cell (PIC) methods are often used for the Vlasov-Poisson or Vlasov-Maxwell systems \cite{SonKor} in plasma physics.

\subsection{The Eulerian discretisations}
Grid-based  Eulerian discretisations solve the PDEs on a fixed grid. They offer higher order accuracies but suffer from the curse-of-dimensionality.   When given the most general nonlinear $(d+1)$-dimensional PDE, one can first discretise the $d$ spatial dimensions with mesh size $1/N$. This then becomes a system of $D=N^d$ nonlinear ODEs. Since $N= O(d/\epsilon)$, where $O(\epsilon)$ is the error in computing observables from the ODE system as well as the error in the original PDE, this gives  $D=(d/\epsilon)^d$. Using Lemma~\ref{lem:odeclassical} and Theorem~\ref{thm:qqueryode} with the replacement $D \rightarrow (d/\epsilon)^d$, one obtains $\mathcal{Q}=\tilde{\mathcal{O}}(d^{7d}/\epsilon^{10+7d})$ and $\mathcal{C}=O(Md^{3d}/\epsilon^{1+3d})$. Thus one requires  $M>\tilde{O}(d^{4d}/\epsilon^{9+4d})$ for a quantum advantage. Since the size of $M$ scales exponentially with $d$, it does not fit known realistic problems. Better quantum methods to tackle these more general PDEs are thus still required.  



\section{Summary and discussion}

We introduced  quantum algorithms for computing physical observables of nonlinear Hamilton-Jacobi and scalar hyperbolic PDEs with arbitrary  nonlinearities. There are potential quantum speedups with respect to dimension $d$, precision $\epsilon$, the number of initial data $M$ and for arbitrary nonlinearity, with no large overheads in any other parameters. This would be of potential interest for the field of high-dimensional Hamilton-Jacobi equations, which appear in many areas including optimal control, mean-field games and machine learning. We found tight upper asymptotic query and gate complexities of quantum algorithms that compute ensemble averages of nonlinear PDEs and ODEs. This allows us to find a lower bound on the ratio between the classical and quantum cost to obtain an identical output. This can be equivalently expressed as 
\begin{align}
    \mathcal{O}\left(\frac{\mathcal{C}}{\mathcal{Q}}\right)=\tilde{O}\left(\frac{M}{T^2}d^{r_1}\left(\frac{1}{\epsilon}\right)^{r_{2}}\right)
    \end{align}
    where $(d+1)$ denotes the dimension of the PDE and $\tilde{O}$ suppresses  all logarithmic factors in $d$, $T$, $1/\epsilon$. For a system of $D$ nonlinear ODEs, we replace $d$ with $D$. The different values of the exponents $r_1$, $r_2$ depends on the type of PDE and its initial conditions. Here one sees that there is always a quantum advantage with respect to $M$ and no quantum advantage for any $T>1$. For a quantum advantage in $d$ and $\epsilon$, it is sufficient to have $r_1>0$ and $r_2>0$ respectively in the asymptotic limit. These results are summarised in Table~\ref{table:nonlin}.\\

    
   The query and gate complexities presented could be improved for instance by using higher order methods or using preconditioners for solving the Liouville equation. The quantum query and gate complexity with respect to the condition number can also be further optimised.  It is also worthwhile to consider these quantum algorithms for real applications with large $M$ and to extend these methods to stochastic PDEs. \\
   

       We see that even though quantum algorithms that utilise the linear representation for nonlinear PDEs (where the level set is one example) have advantages for Hamilton-Jacobi and scalar hyperbolic equations, this method is not necessarily generally applicable. For more general nonlinear PDEs, we see that even though quantum algorithms with quantum advantage are theoretically possible in regimes of large $M$,  there are not necessarily generally useful for real applications, so other methods are required. Other linear representations for nonlinear PDEs, rather than linear approximations or linear representations for ODEs, are desired if it is possible to find them. It may be that some nonlinear PDEs are more amenable to a quantum treatment than others and different classes of PDEs can be treated more efficiently with different quantum methods, rather than trying to find a one-method-fits-all strategy. 
       
       \section*{Acknowledgements} 
      N. Liu thanks Barry Sanders for very interesting and insightful discussions. S. Jin was partially supported by the NSFC grant No.~12031013 and the Shanghai Municipal Science and Technology Major Project (2021SHZDZX0102).  N. Liu acknowledges funding from the NSFC International Young Scientists Project (no.~12050410230), the Science and Technology Program of Shanghai (no.~21JC1402900) and the Natural Science Foundation of Shanghai grant 21ZR1431000.
\appendix 

\section{Proof of Lemma~\ref{thm-psi}} \label{app:hjjustification}
Denote the bicharacteristics of Eq.~\eqref{Liouville-delta} by
\begin{equation}\label{H-ODE}
    \frac{\partial x}{\partial t} = \nabla_p H, \quad
    \frac{\partial p}{\partial t} = -\nabla_x H, \quad 
    \qquad x(0)=x^{(0)}, \quad p(0)=p^{(0)}.
\end{equation} 
Assume sufficient smoothness of $H$, then the above ODE problem has a unique solution, denoted by
\[
(x(t; x^{(0)}, p^{(0)}),p(t; x^{(0)}, p^{(0)})),
\]
which can be inverted to get the inverse functions
$(x^{(0)}(t; x,p), p^{(0)}(t; x, p))$. This is possible since the Jacobian matrix of the map from $(x^{(0)}, p^{(0)})$ to $(x(t),p(t))$ has determinant $1$. Then by the method of characteristics, 
\begin{equation}
    \phi_i^{[k]}(t,x,p)=\phi_i^{[k]}(0,x^{(0)}(t; x,p), p^{(0)}(t; x, p))=p_i^{(0)}(t; x, p)-u_i^{[k]}(0, x^{(0)}(t; x,p))
\end{equation}
and, also by the method of characteristics and using the above solution for $\phi^{[k]}$,
\begin{equation}
    \psi(t,x,p)=\psi(0,x^{(0)}(t; x,p), p^{(0)}(t; x, p))=
    \frac{1}{M}\sum_{k=1}^M\prod_{i=1}^d\delta\left(p_i^{(0)}(t; x, p)-u_i^{[k]}(0, x^{(0)}(t; x,p))\right)
     =\frac{1}{M} \sum_{k=1}^M \delta(\phi^{[k]}(t,x,p))
\end{equation}

\section{Discretised PDE} \label{app:discretisation}
We proceed by first discretising the PDE in Eq.~\eqref{Liouville-delta} by, for example, finite difference schemes. The state with amplitudes proportional to the solutions of this discretised PDE can be created by using the quantum linear systems solver subroutine. For instance, we use the forward Euler method in time and upwind approximation in $x$ and $p$-derivatives, 

\begin{align}
    & \frac{\partial \psi(t_n, x, p)}{\partial t} \rightarrow \frac{\psi_{n+1, \vect{j}, \vect{l}}-\psi_{n, \vect{j}, \vect{l}}}{\Delta t} \nonumber \\
   & \frac{\partial H}{\partial p_i}\frac{\partial \psi(t_n, x, p)}{\partial x_i} \rightarrow \frac{1}{h}\left\{\frac{\partial H}{\partial p_i}\right\}_-\left[T^+_i\psi_n-\psi_n\right]_{\vect{j}, \vect{l}}
   +\frac{1}{h}\left\{\frac{\partial H}{\partial p_i}\right\}_+\left[\psi_n-T^-_i\psi_n\right]_{\vect{j}, \vect{l}}
   \nonumber \\
     & \frac{\partial H}{\partial x_i}\frac{\partial \psi(t_n, x, p)}{\partial q_i} \rightarrow 
     \frac{1}{h}\left\{\frac{\partial H}{\partial x_i}\right\}_+\left[P^+_i\psi_n-\psi_n\right]_{\vect{j}, \vect{l}}
   +\frac{1}{h}\left\{\frac{\partial H}{\partial x_i}\right\}_-\left[\psi_n-P^-_i\psi_n\right]_{\vect{j}, \vect{l}}
     \nonumber
\end{align}
Here  $\alpha_+ = \max\{\alpha, 0\},\alpha_- = \min\{\alpha, 0\}$ for a general quantity $\alpha$, $ \vect{j} \equiv (j_1,...,j_d)$ and $\vect{l} \equiv (l_1,...,l_d)$ denote vectors corresponding to the discretised $x_i \rightarrow j_i h$ and $p_i \rightarrow l_i h$ respectively, where $j_i, l_i=1,...,N$ and $i=1,...,d$,   $n=0,1,...,N_t$. The time step is denoted $t_n=n \Delta t$. In addition, $T^{\pm}_i \psi_{n, \vect{j}, \vect{l}}=\psi_{n, j_1,...,j_{i \pm 1},...,j_d, \vect{l}}$ are translation operators with respect to $x$ while $P^{\pm}_i \psi_{n, \vect{j}, \vect{l}}=\psi_{n,\vect{j}, l_1,...,l_{i \pm 1},...d}$ are translation operators with respect to $p$ while operating on $\psi_{n, \vect{j}, \vect{l}}$.  Note here $\frac{\partial H}{\partial p_i}$ and $\frac{\partial H}{\partial x_i}$ can be evaluated analytically, for given $H(x,p)$, thus do not need to be approximated.

Define $\lambda = \Delta t/h$. We require the CFL condition 
\begin{equation}\label{CFL}
     d\lambda \,  \max_i \sup_{x,p} \left\{ \left|\frac{\partial H}{\partial x_i}\right|, \left|\frac{\partial H}{\partial p_i} \right| \right\} \le 1
\end{equation}
for numerical stability. Then the discretised version of  Eq.~\eqref{Liouville-delta} can be rewritten as 
\begin{align}
 &&   \psi_{n+1, \vect{j}, \vect{l}}-\psi^{n}_{\vect{j}, \vect{l}}+\lambda\sum_{i=1}^d \left[\left\{\frac{\partial H}{\partial p_i}\right\}_-(T^+_i-I)
   +\left\{\frac{\partial H}{\partial p_i}\right\}_+(I-T^-_i)\right.\\
   && 
  \left.-\left\{\frac{\partial H}{\partial x_i}\right\}_+(P^+_i-I)
   -\left\{\frac{\partial H}{\partial x_i}\right\}_-(I-P^-_i) \right]\psi_{n, \vect{j}, \vect{l}} 
   =0,
\end{align}
where $I$ is the identity operator,
with the initial condition (for $n=0$)
\begin{align} \label{eq:psiinitial}
    \psi_{0, \vect{j}, \vect{l}}= \frac{1}{M}\sum_{k=1}^M  \prod_{i=1}^d\delta_\omega(l_ih-u^{[k]}_i(n=0, \vect{j})),
\end{align}
where $\delta_{\omega}$ is the smoothed delta function. \\

Now the discretised PDEs can be written as a matrix equation
\begin{align} \label{eq:psimatrix}
    \mathcal{K}\begin{pmatrix}
    \psi_{1, \vect{j}, \vect{l}} \\
    \psi_{2, \vect{j} ,\vect{l}} \\
    \vdots \\
    \psi_{N_t-1, \vect{j}, \vect{l}} \\
    \psi_{N_t, \vect{j}, \vect{l}}
    \end{pmatrix}=\begin{pmatrix}
    \psi_{0, \vect{j}, \vect{l}} \\
    0 \\
    \vdots \\
    0 \\
    0
    \end{pmatrix}
\end{align}
and $\mathcal{K}$  is a $N_t N^{2d} \times N_t N^{2d}$ Toeplitz matrix of the form
\begin{align} \label{eq:kmatrixpsi}
    \mathcal{K}=\begin{pmatrix}
    \mathbf{1} & 0 & 0 & \hdots & 0 & 0 & 0 \\
     K & \mathbf{1} & 0 & \hdots & 0 & 0 & 0 \\
    \vdots & \vdots & \vdots & \vdots & \vdots & \vdots & \vdots \\
    0 & 0 & 0 & \hdots &  K & \mathbf{1} & 0 \\
    0 & 0 & 0 & \hdots & 0 & -\mathbf{1} & \mathbf{1} 
    \end{pmatrix}
\end{align}
where each $\mathbf{1}$ above is the $N^{2d}\times N^{2d}$ identity matrix and $K$ is the $N^{2d}\times N^{2d}$ matrix: 

\begin{align}\label{matrixK}
    K=-I +\lambda\sum_{i=1}^d \left[\left\{\frac{\partial H}{\partial p_i}\right\}_-(T^+_i-I)
   +\left\{\frac{\partial H}{\partial p_i}\right\}_+(I-T^-_i)
 -\left\{\frac{\partial H}{\partial x_i}\right\}_+(P^+_i-I)
   -\left\{\frac{\partial H}{\partial x_i}\right\}_-(I-P^-_i) \right].
\end{align}
 We can then solve for $\psi_{n, \vect{j}, \vect{l}}$ by matrix inversion
\begin{align} \label{eq:matrixinvpsi}
    \begin{pmatrix}
    \psi_{1, \vect{j}, \vect{l}} \\
    \psi_{2, \vect{j}, \vect{l}} \\
    \vdots \\
    \psi_{N_t-1, \vect{j}, \vect{l}} \\
    \psi_{N_t, \vect{j}, \vect{l}}
    \end{pmatrix}=\mathcal{K}^{-1} \begin{pmatrix}
    \psi_{0, \vect{j}, \vect{l}} \\
    0 \\
    \vdots \\                                                  
    0 \\
    0
    \end{pmatrix}. 
\end{align}
Both QLSP and SLEP involve the inversion of an Hermitian matrix $\mathcal{M}$. Since $\mathcal{K}$ from Eq.~\eqref{eq:kmatrixpsi} is not Hermitian, we can define a new Hermitian matrix 
\begin{align}
    \mathcal{M}=\begin{pmatrix} 0 & \mathcal{K} \\
    \mathcal{K}^{\dagger} & 0 
    \end{pmatrix}
\end{align}
which has the same sparsity and condition number as $\mathcal{K}$. \\

Using $\mathcal{M}$, the matrix inversion problem in order to solve $\psi_{n, \vect{j}, \vect{l}}$ s
\begin{align} \label{eq:matrixinvpsi2}
    \begin{pmatrix}
    \vect{0} \\
    \psi_{n, \vect{j}, \vect{l}}
    \end{pmatrix}=\mathcal{M}^{-1} \begin{pmatrix}
    \psi_{0, \vect{j}, \vect{l}} \\
    \vect{0}
    \end{pmatrix}
\end{align}
   where $\vect{0}$ is a zero-vector of the same dimension as $\mathcal{K}$.
   

\section{Condition number of matrix $\mathcal{M}$} \label{app:skappaproof}

For clarity we will only derive this for linear constant coefficient transport equations. We start with the  simple one-dimensional equation
\begin{equation}
    \frac{\partial}{\partial t} u + \frac{\partial}{\partial x} u=0, \quad  0<x<1, \quad 
u(t, 0)=g(0),
\end{equation}
which will be discretised by the upwind scheme
\begin{equation}
    u^{n+1}_j+(\lambda -1)u_j^n -\lambda u^n_{j-1} =0,
\end{equation}
where $\lambda=\Delta t/h, j=1, \cdots,N, n=0, \cdots N_t-1$. 
Now the matrix $K$ in \eqref{matrixK} is
\begin{align}
K =
\begin{bmatrix}
\lambda-1  &        &    &       &         \\
 -\lambda  &  \lambda-1       &    &      &    \\
 &     \ddots  & \ddots    &     &    \\
    &         & \ddots    & \ddots  &  \\
    &          &           &  -\lambda  & \lambda-1 \\
\end{bmatrix}_{(N) \times (N_t)}.
\end{align}
Since the eigenvalues of $\mathcal{M}$ is just the singular values $\sigma$ of $\mathcal{K}$ we now estimate
the latter. Assume the CFL condition $\lambda \le 1$. By Gershogorin's theorem it is easy to see
$\sigma_{\text{max}}\le 2$.  To estimate the smallest singular value, let $L=-K$. Then
\begin{align}
\mathcal{K}^{-1} =
\begin{bmatrix}
I &        &    &       &         \\
 L  &  I      &    &      &    \\
 L^2&    L & I   &     &    \\
   \vdots  &  \vdots   & \ddots    & \ddots  &  \\
  L^{N_t-1}  &   L^{N_t-2}       &  \cdots         &  L & I \\
\end{bmatrix}_{(N) \times (N_t)}
=
\begin{bmatrix}
I &        &    &       &         \\
   &  I      &    &      &    \\
 &    & I   &     &    \\
    &    &    & \ddots  &  \\
    &         &        &  & I \\
\end{bmatrix}
+
\begin{bmatrix}
 &        &    &       &         \\
L   &      &    &      &    \\
 &  L  & &     &    \\
  &    & \ddots    &   &  \\
    &         &        & L &  \\
\end{bmatrix}
+\begin{bmatrix}
 &        &    &       &         \\
  &      &    &      &    \\
L^2 &    &  &     &    \\
  &  \ddots  &    &   &  \\
    &         &     L^2   &  &  \\
\end{bmatrix}
+\cdots
\end{align}
Hence
\begin{equation}
    \sigma_{\text{max}}(\mathcal{K}^{-1}) = \| \mathcal{K}^{-1}\|_2\le \|I\|_2+\|L\|_2+ \cdots +\|L^{N_t-1}\|_2.
\end{equation}
By Gershigorin's theorem,
\[
  \eta:=\|L\|_2 = \|K\|_2 \le (1-\lambda)+\lambda =1,
 \]
 therefore
 \begin{equation}
    \sigma_{\text{max}}(\mathcal{K}^{-1}) \le 1+\eta+\cdots \eta^{N_t-1} \le N_t.
\end{equation}
Consequently
\[
\kappa(\mathcal{K})=\frac{\sigma_{\text{max}}}{\sigma_{\text {min}} }\le N_t/2.
\]

We next consider the $d$-dimensional equation:
\begin{equation}
    \frac{\partial}{\partial t} u + \sum_{l=1}^d \frac{\partial}{\partial x_l} u=0, \quad  0<x_l<1, \quad
{\text{with incoming boundary conditions,}}
\end{equation}
which will be discretised by the upwind scheme
\begin{equation}\label{d-upwind}
    u^{n+1}_{\vect{j}}- u^{n}_{\vect{j}} + \lambda \sum_{l=1}^d (u_{\vect{j}}^n - u^n_{\vect{j}-e_l}) =0,
\end{equation}
where $e_l$ is the unit vector with $1$ in the $l$-th entry and $0$ elsewhere, $\lambda=\Delta t/h, j_l=1, \cdots,N, 1\le l\le N, n=0, \cdots N_t-1$. 

Define
\begin{align}
    {u} = (u_{1, 1, \cdots, 1},  \cdots,  u_{N, 1, \cdots, 1}, u_{1, 2, \cdots, 1}, \cdots, u_{1, N, \cdots, 1}, \cdots, u_{N, N, \cdots, 1}, \cdots, u_{N, N, \cdots, N})^T \nonumber 
\end{align}
then scheme \eqref{d-upwind} can be written in vector form as
\begin{align}
{u}^{n+1}-B {u}^n=0, \nonumber 
\end{align}
where 
\begin{align}
B=\lambda[T_h \otimes I^{\otimes (n-1)} +I \otimes T_h \otimes I^{\otimes (n-2)}+ \cdots
+I^{\otimes (n-1)} \otimes T_h] + (1-d\lambda ) I^{\otimes n}, \nonumber 
\end{align}
with
\begin{align}
T_h =
\begin{bmatrix}
0 &        &    &       &         \\
 1 &  0   &    &      &    \\
  &  \ddots  &   \ddots &   &  \\
    &         &   & 1 & 0 \\
\end{bmatrix}_{N\times N}. \nonumber 
\end{align}

Let 
\[
\tilde{T}_h=(1-d\lambda)I + \lambda T_h
\]
For clarify, consider the case of $d=3$. A direction calculation shows that the first few rows of $B$ are
\[
B_1=
\begin{bmatrix}
\tilde{T}_h &        &    &       &    & & & &     \\
 \lambda T_h  &   \tilde{T}_h  &    &      &    \\
  &  \ddots  &   \ddots &   &  &&&& \\
    &         &   & \lambda T_h &   &&&& \\
   \lambda I &        &    &       &  \tilde{T}_h  & & & &     \\
 &\lambda I  &        &      &   \lambda T_h & \tilde{T}_h  &&& \\
  &   &   \ddots &   &  & \ddots& \ddots && \\
    &         &   & \lambda I &   &&  \lambda T_h &  \tilde{T}_h& \\
\end{bmatrix}
\]
after ignoring subsequent repeating blocks. Assume the stability (CFL) condition $3\lambda \le 1$, 
then Gershgorin's theorem implies
\[
\|B\|_2 \le [(1-3\lambda)+\lambda]+\lambda+\lambda \le 1\,.
\]
The rest of the proof of the condition number of the HHL matrix  will be the same as the case of $d=1$.

Similarly, for general $d$, assume the stability (CFL) conditon
\begin{equation}\label{d-CFL}
\d\lambda \le 1,
\end{equation}
one gets the condition number of the HHL matrix 
\begin{equation}
    \kappa(\mathcal{M}) \lesssim N_t
\end{equation}
while the sparsity of $\mathcal{M}$ is obviously $\sim d$.

Assume one wants to compute to time $T$, namely $N_t \Delta t=T$. Using the stability condition
\eqref{d-CFL}, one gets
\[
\kappa(\mathcal{M}) \lesssim N_t \sim dNT\,.
\]
\section{Proof of Lemma~\ref{thm:classicalerror}}\label{app:classicalerrorproof}

Let $\psi^\omega$ be the analytical solution of \eqref{Liouville-delta} with initial data ~\eqref{Liou-I} in which $\delta$ is replaced by $\delta_\omega$, and $\psi^{\omega,h}$ is the upwind discretisation of
$\psi^\omega$. Then

\begin{eqnarray*}
 &&\epsilon_C \equiv |\langle G(t_n,x_{\vect{j}}) \rangle-G^n_{\omega} (\vect{j})|\\
\le && \left| \int_{\Omega_p} G(p) (\psi(t_n, x_{\vect{j}}, p) -\psi^\omega(t_n, x_{\vect{j}}, p))\, dp \right|
+ 
\left|\int_{\Omega_p} G(p) \psi^\omega(t_n, x_{\vect{j}}, p))\, dp -\frac{1}{N_p} \sum_{\vect{l}} G(p_{\vect{l}} )
  \psi^\omega_{n, \vect{j}, \vect{l}}\right|\\
  &&  \qquad + \left|\frac{1}{N_p} \sum_{\vect{l}} G(p_{\vect{l}}) 
  (\psi^\omega_{n, \vect{j}, \vect{l}}-\psi^{\omega,h}_{n, \vect{j}, \vect{l}})\right| \\
 = && I+II+III
  \end{eqnarray*}

Denote the bicharacteristics of \eqref{Liouville-delta} by
\begin{equation}\label{H-ODE}
    \frac{\partial x}{\partial t} = \nabla_p H, \quad
    \frac{\partial p}{\partial t} = -\nabla_x H, \quad 
    \qquad x(0)=x^{(0)}, \quad p(0)=p^{(0)}
\end{equation} 
Assume sufficient smoothness of $H$, and denoted  by 
$(x(t; x^{(0)}, p^{(0)}),p(t; x^{(0)}, p^{(0)}))$ the unique solution of  the above ODE problem, which can be inverted to get the inverse functions
$(x^{(0)}(t; x,p), p^{(0)}(t; x, p))$. Note that this is possible since the Jacobian matrix of the map from $(x^{(0)}, p^{(0)})$ to $(x(t),p(t))$ has determinant $1$. Then by method of characteristics, 
\begin{eqnarray*}
I = && \left| \int_{\Omega_p} G(p) \left[\psi(0, x^{(0)}(t_n,x_{\vect{j}},p), p^{(0)}(t_n,x_{\vect{j}},p))-\psi^\omega(0, x^{(0)}(t_n,x_{\vect{j}},p), p^{(0)}(t_n,x_{\vect{j}},p))\right]\, dp \right|\\
= && \left| \int_{\Omega_p} G(p) \left[\delta(p^{(0)}(t_n,x_{\vect{j}},p)-u_0(x^{(0)}(t_n,x_{\vect{j}},p)))-\delta_\omega(p^{(0)}(t_n,x_{\vect{j}},p)-u_0(x^{(0)}(t_n,x_{\vect{j}},p)))\right]\, dp \right|.
\end{eqnarray*}
Here for notation clarity we assume $M=1$. The more general case is obtained simply by linear superposition.

Assume $\chi(t_n,x_{\vect{j}},p)=p^{(0)}(t_n,x_{\vect{j}},p)-u_0(x^{(0)}(t_n,x_{\vect{j}},p))$ has (just one)
root $p_*\in \Omega_p$.  The case of multiple roots can be dealt with similarly. Then
\begin{equation}
    \int_{\Omega_p} G(p) \delta(p^{(0)}(t_n,x_{\vect{j}},p)-u_0(x^{(0)}(t_n,x_{\vect{j}},p))) dp
    =\frac{G(p_*)}{\left|\frac{\partial\chi}{\partial p}(p_*)\right|},
\end{equation}
while, by the mean value theorem,
\begin{eqnarray}
&&\int_{\Omega_p} G(p) \delta_\omega(p^{(0)}(t_n,x_{\vect{j}},p)-u_0(x^{(0)}(t_n,x_{\vect{j}},p)))\, dp\\
=&& \int_{\Omega_p} G(p) \delta_\omega\left( \frac{\partial\chi}{\partial p}(p_{**})(p-p_*)\right)\, dp
=
\int_{\Omega_p} G\left(\frac{q}{\frac{\partial\chi}{\partial p}(p_{**}(p))+ p_*}\right) \frac{\delta_\omega(q)}{\left|\frac{\partial\chi}{\partial p}(p_{**}(p))\right|}\, dp
\\
=&& \int_{|q|\le \omega} G\left(\frac{q}{\frac{\partial\chi}{\partial p}(p_{**})}+ p_* \right) \frac{\delta_\omega(q)}{\left|\frac{\partial\chi}{\partial p}(p_{**})\right|}\, dp
= 
G\left(\frac{\tilde{q}}{\frac{\partial\chi}{\partial p}(\hat{q})}+ p_* \right) \frac{1}{\left|\frac{\partial\chi}{\partial p}(\hat{q})\right|} \int_{|q|\le \omega}\delta_\omega(q)\, dq\\
=&& 
G\left(\frac{\tilde{q}}{\frac{\partial\chi}{\partial p}(\hat{q})}+ p_* \right) \frac{1}{\left|\frac{\partial\chi}{\partial p}(\hat{q})\right|}
=\frac{G(p_*)}{\left|\frac{\partial\chi}{\partial p}(p_*)\right|} + O(\omega), \nonumber 
\end{eqnarray}
since $|\tilde q|\le \omega, |\hat{q}-q_*|\le \omega$. 
Combining the above two estimates gives $I=O(\omega)$.

II is just the error of the (midpoint) quadrature rule, hence 
$II=O(d(h/\omega)^2)$, where the $1/\omega^2$ factor comes from
the second derivative of $\psi^\omega$ in the truncation error.

By standard error analysis for linear hyperbolic equation \cite{LeV-book}, if the first order upwind scheme is used, under the CFL condition, one has
\[
|\psi^\omega_{n, \vect{j}, \vect{l}}-\psi^{\omega,h}_{n, \vect{j}, \vect{l}}| = O(d h/\omega^2)
\]
where $1/\omega^2$ factor again comes from
the second derivative of $\psi^\omega$ in the consistency error.

By combining I, II and III the lemma is proved.

\section{Estimating normalisation constants: proof of Lemmas~\ref{lem:normalisations}, \ref{lem:normalisations2} and \ref{lem:normalisationsode}}\label{sec:normalisations}

Here we prove Lemma~\ref{lem:normalisations}.
\begin{proof}
To estimate $n_G$, one can use $O(1)\sim\int dp |G(p)|^2$. This is possible since to compute any observables we only need to consider the domain of $\phi(t,x,p)$, which has compact support in $p$ with support of size $O(1)$ domain, thus we only need to integrate $p$ in an $O(1)$ domain. Then one can write
\begin{align}
    O(1)\sim \int dp|G(p)|^2\approx \frac{1}{N^d}\sum_{\vect{l}}|G_{\vect{l}}|^2=\frac{N_G^2}{N^d}=n^2_G,
\end{align}
thus $n_G\sim O(1)$. For a lower bound of $n^2_{\psi_0}$, we use $\Upsilon \leq 1$ since we estimate $\Upsilon \approx \langle 0|\mathcal{U}|0\rangle$ from a large unitary matrix $\mathcal{U}$ (see Appendix~\ref{app:algorithmsteps}) and can therefore view it as a quantum fidelity. Since the ensemble average is an observable and should be also independent of $N$ and we assume to be independent of $d$, then $O(1)\sim G^{\omega}_{n,\vect{j}}$. Together with $\Upsilon\leq 1$ and $n_G\sim O(1)$, we have $n_{\psi_0}\gtrsim O(1)$. \\

To obtain an upper bound of $n_{\psi_0}$, one considers the case where there are no assumption on the initial data. One can estimate  $N_{\psi_{0}}$ with $\psi (0,x,p)$ defined by Eq.~\eqref{Liou-I} with the delta function approximated by $\delta_{\omega}$. We start with $M=1$. Without loss of generality we assume 
$(p,x) \in [0,1]^{2d}$ and  
\begin{eqnarray}\nonumber
 &&  \int\int |\psi(0,x,p)|^2 dp\,dx
   \approx\int\int \delta_\omega(p-u^{[k]})^2 dp\,dx 
  \le \frac{1}{\omega^d} \int\int \delta_\omega(p-u^{[k]}) dp\,dx\\
\label{psi-init-size}   = &&\frac{1}{\omega^d} \int 1\, dx=\frac{1}{\omega^d}.
\end{eqnarray}
To obtain an upper bound of $n_{\psi_0}$, one considers the case where there are no assumptions on the initial conditions. One can estimate  $N_{\psi_{0}}$ with $\psi (0,x,p)$ defined by Eq.~\eqref{Liou-I} with the delta function approximated by $\delta_{\omega}$. We start with $M=1$. Without loss of generality we assume $(p,x)\in [0,1]^{2d}$ and 
 \begin{eqnarray}\nonumber
 &&  \int\int |\psi(0,x,p)|^2 dp\,dx
   \approx\int\int \delta_\omega(p-u^{[k]})^2 dp\,dx 
  \le \frac{1}{\omega^d} \int\int \delta_\omega(p-u^{[k]}) dp\,dx\\
\label{psi-init-size}   = &&\frac{1}{\omega^d} \int 1\, dx=\frac{1}{\omega^d}
 \end{eqnarray}
 Next we approximate the above integral by the quadrature rule. Here we use $N^d$ mesh points in $x$. Note $\delta_\omega(p-u^{[k]})$  has support of the size of only $O(\omega^d)=O((mh)^d)=O((m/N)^d)$ in $p$, thus the number of mesh points in $p$ is of
 order $m^d$, where $m$ is the number of non-zero entries in the discrete delta function $\delta_{\omega}$. Then using the quadrature rule,
 \begin{equation}\label{quad-n}
   \int\int |\psi(0,x,p)|^2 dp\,dx
   \approx
\frac{1}{(mN)^d}\sum_{\vect{j}}^N\sum_{\vect{l}}^N |\psi_{0, \vect{j}, \vect{l}}|^2
=\frac{1}{(mN)^d}N_{\psi_{0}}^2.
\end{equation}

 From Eqs.~\eqref{psi-init-size} and \eqref{quad-n} one gets
 \begin{equation}\label{psi-init-n}
 N_{\psi_{0}}=O\left( \frac{(mN)^{d/2}}{\omega^{1/2}}\right)
 =O\left( \frac{(mN)^{d/2}}{(m/N)^{d/2}}\right)
 =O\left(N^{d}\right)
 \end{equation}
 
 More generally, suppose the initial data has support in
a box $\Omega_\beta$ of size $\beta$. This means one can write
 \begin{equation}
  \int_{\Omega_\beta}\int |\psi(0,x,p)|^2 dp\,dx
\label{psi-init-size2}   \le \frac{1}{\omega^d} \int_{\Omega_\beta} 1\, dx=\frac{\beta^d}{\omega^d}
 \end{equation}
 
 while its numerical approximation is
 \begin{equation}\label{quad-n2}
   \int\int |\psi(0,x,p)|^2 dp\,dx
   \approx
\frac{1}{(m\beta N)^d}\sum_{\vect{j}}^N\sum_{\vect{l}}^N |\psi_{0, \vect{j}, \vect{l}}|^2 
=\frac{1}{(m\beta N)^d}N_{\psi_{0}}^2.
\end{equation}

Eqs.~\eqref{psi-init-size2} and~\eqref{quad-n2} give
 \begin{equation}\label{psi-init-n2}
 N_{\psi_{0}}=O\left((\beta N)^{d}\right).
 \end{equation}
 We recover the upper bound in the limit $\beta=O(1)$. 
 The case of $M>1$ will yield the same estimate since it is just
 the average of $M$ different initial data of the same type.
\end{proof}

The proof of Lemma~\ref{lem:normalisations2} proceeds in exactly the same way for $n_g$ and the lower bound of $n_{\Phi_0}$. For more general cases, there is a small modification in the proof.
\begin{proof}
    The estimate  of $N'_{\psi_{0}}$ is pretty much the same as that in Lemma \ref{lem:normalisations}, except here $p \in \mathbb{R}^1$. Thus
 \begin{eqnarray}\nonumber
 &&  \int_{\Omega_{\beta}} \int |\psi(0,x,p)|^2 dp\,dx
    \approx\int\int \delta_\omega(p-u^{[k]})^2 dp\,dx 
   \le \frac{1}{\omega} \int\int \delta_\omega(p-u^{[k]}) dp\,dx\\
\label{psi-init-size}   = &&\frac{1}{\omega} \int_{\Omega_{\beta}} 1\, dx=\frac{\beta^d}{\omega},
 \end{eqnarray}
while its numerical quadrature approximation is 

 \begin{equation}\label{quad-n3}
   \int\int |\psi(0,x,p)|^2 dp\,dx
   \approx
\frac{1}{m(\beta N)^d}\sum_{\vect{j}}^{N}\sum_{\vect{l}}^N |\psi_{0, \vect{j}, \vect{l}}|^2 
=\frac{1}{m(\beta N)^d}(N'_{\psi_{0}})^2.
\end{equation}
Eqs.~\eqref{psi-init-size} and~\eqref{quad-n3} give
 \begin{equation}\label{psi-init-n3}
 N_{\psi_{0}}=O\left(\beta^d N^{(d+1)/2}\right),
 \end{equation}
 so the upper bound when $\beta=O(1)$ is $N_{\psi_{0}}=O\left(N^{(d+1)/2}\right)$. 
\end{proof}

The proof of Lemma~\ref{lem:normalisationsode} for estimating $n_A$ and the lower bound of $n_{\Phi_0}$ is identical to the estimation of $n_G$ and the lower bound to $n_{\psi_0}$, respectively, in Lemma~\ref{lem:normalisations}. The proof for the upper bound of $n_{\Phi_0}$ is the following.
\begin{proof}
The estimate  of $N_{\Phi_{0}}$ is pretty much the same as that in Lemma \ref{lem:normalisations}, except that  here there is no $x$ and $q \in \mathbb{R}^D$ thus
\begin{equation}\nonumber
 \int |\Phi(0,q)|^2 dq
    \approx\int \delta_\omega(q-X_0^{[k]})^2 dq 
   \le \frac{1}{\omega^D} \int \delta_\omega(q-u^{[k]}) dq
  =\frac{1}{\omega^D},
 \end{equation}
while
the quadrature rule becomes
\begin{equation}\label{quad-p2}
  \int |\Phi(0,q)|^2 dq \approx \frac{1}{m^D} \sum_{\vect{j}}^N |\Phi_{0, \vect{j}}|^2  \nonumber \\
 =\frac{1}{m^D}N_{\Phi_{0}}^2.
\end{equation}
Consequently 
$N_{\Phi_{0}}=O(N^{D/2})$.
\end{proof}

\section{Algorithm for computing $\Upsilon$ and proof of Lemma~\ref{lem:bslep2}} \label{app:A}

Here we prove Lemma~\ref{lem:bslep2} which is used for computing $\Upsilon \equiv \langle \psi_{0}|(\mathcal{M}^{-1})^{\dagger}\mathcal{G}\mathcal{M}^{-1}|\psi_{0}\rangle$. We attempt to keep this as self-contained as possible, so we include in full similar arguments and steps made in \cite{barrynew}, but with modifications required due to our assumption of beginning with sparse access $(s, \|\mathcal{M}\|_{max}, O_{M}, O_{F})$ to $\mathcal{M}$, instead of block access to $\mathcal{M}$. We also do not begin with block access to $\mathcal{G}$. However, block access to $\mathcal{M}$ can be constructed from sparse access from the following lemma.

\begin{lemma}\label{lem:sparsetoblock}
(\cite{barrynew,low2019hamiltonian}) Given $m$ qubits and sparse access $(s, \|\mathcal{M}\|_{max}, O_{M}, O_{F})$ to a $m$-qubit Hermitian matrix $\mathcal{M}$, block access $(\alpha_{\mathcal{M}}=s\|\mathcal{M}\|_{max}, n_{\mathcal{M}}=2, \delta_{\mathcal{M}}, U_{\mathcal{M}})$ to $\mathcal{M}$ can be implemented by making $\mathcal{O}(1)$ queries to sparse access oracles for $\mathcal{M}$ and $\mathcal{O}(m+\log^{2.5}(s\|\mathcal{M}\|_{max}/\delta_{\mathcal{M}}))$ additional $2$-qubit gates.
\end{lemma}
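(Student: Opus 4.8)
The plan is to invoke the standard ``sparse access $\Rightarrow$ block encoding'' construction of \cite{low2019hamiltonian} (see also the version in \cite{barrynew}), so the proof is essentially a matter of recalling that construction and tracking the query and gate counts. First I would note that the normalisation $\alpha_{\mathcal{M}}=s\|\mathcal{M}\|_{max}$ is the natural choice: since each row and column of $\mathcal{M}$ has at most $s$ nonzero entries, each of magnitude at most $\|\mathcal{M}\|_{max}$, the induced $\ell_1$ and $\ell_\infty$ operator norms are bounded by $s\|\mathcal{M}\|_{max}$, hence $\|\mathcal{M}\|\le s\|\mathcal{M}\|_{max}$ and $\mathcal{M}/(s\|\mathcal{M}\|_{max})$ can legitimately appear as a sub-block of a unitary.

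Next I would recall the construction itself, which has three ingredients. (i) A \emph{column-index subroutine}: one query to $O_F$, together with a Hadamard-type preparation of the uniform state $\frac{1}{\sqrt{s}}\sum_{l=1}^{s}|l\rangle$ on an index register, builds a unitary sending $|j\rangle|0\rangle\mapsto|j\rangle\frac{1}{\sqrt{s}}\sum_{l}|F(j,l)\rangle$, i.e.\ an equal superposition over the columns in which row $j$ has a nonzero entry. (ii) An \emph{entry-loading subroutine}: one query to $O_M$ writes the bits of $\mathcal{M}_{jk}$ into an ancilla register; a controlled rotation then produces on a fresh qubit the amplitude $\sqrt{\mathcal{M}_{jk}/(s\|\mathcal{M}\|_{max})}$ (with the Hermiticity of $\mathcal{M}$ used to fix the phase consistently), after which $O_M$ is queried once more to uncompute the entry register. (iii) A \emph{symmetrisation step}: conjugating the composition of (i) and (ii) by a SWAP between the $m$-qubit row register $|j\rangle$ and column register $|k\rangle$, and composing with the inverse subroutine, yields a unitary $U_{\mathcal{M}}$ on $m+n_{\mathcal{M}}$ qubits whose compression $\langle 0^{n_{\mathcal{M}}}|U_{\mathcal{M}}|0^{n_{\mathcal{M}}}\rangle$ equals $\mathcal{M}/(s\|\mathcal{M}\|_{max})$ up to the arithmetic error, with $n_{\mathcal{M}}$ accounting for the rotation qubit and index register.

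I would then tally the costs: the whole circuit uses $\mathcal{O}(1)$ calls to $O_M$ and $O_F$; the SWAP networks and state preparations on $m$-qubit registers contribute $\mathcal{O}(m)$ two-qubit gates; and coherently computing $\arcsin$ (or $\arccos$) of the loaded entry to the precision needed so that $\|\mathcal{M}-\alpha_{\mathcal{M}}\langle 0^{n_{\mathcal{M}}}|U_{\mathcal{M}}|0^{n_{\mathcal{M}}}\rangle\|\le\delta_{\mathcal{M}}$ costs $\mathcal{O}(\mathrm{polylog}(s\|\mathcal{M}\|_{max}/\delta_{\mathcal{M}}))$, which is $\mathcal{O}(\log^{2.5}(s\|\mathcal{M}\|_{max}/\delta_{\mathcal{M}}))$ with the quantum arithmetic circuits of \cite{low2019hamiltonian}. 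Summing gives the claimed $\mathcal{O}(m+\log^{2.5}(s\|\mathcal{M}\|_{max}/\delta_{\mathcal{M}}))$ additional two-qubit gates. The only genuinely delicate point — and the step I would write out in detail rather than treat as routine — is the precision bookkeeping in ingredient (ii): one must verify that a $t$-bit approximation to the rotation angle propagates to an $\mathcal{O}(2^{-t})$ error in the operator norm of the encoded block, which is what fixes $t=\mathcal{O}(\log(s\|\mathcal{M}\|_{max}/\delta_{\mathcal{M}}))$ and hence the $\log^{2.5}$ cost; everything else is a standard assembly of oracular gadgets and can be quoted directly from \cite{low2019hamiltonian,barrynew}.
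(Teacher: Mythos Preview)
Your proposal is correct and follows the same approach as the paper: the paper's own proof is simply ``See Lemma~II.5 in \cite{barrynew} and also Lemma~6 in \cite{low2019hamiltonian} for an explicit construction,'' so you are in fact supplying more detail than the paper does by sketching the column-index/entry-loading/symmetrisation construction and tracking the $\mathcal{O}(1)$ query and $\mathcal{O}(m+\log^{2.5}(\cdot))$ gate counts. Your precision bookkeeping for the controlled rotation is the right place to focus if you want to make the argument self-contained.
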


\begin{proof}
See Lemma II.5 in \cite{barrynew} and also Lemma 6 in \cite{low2019hamiltonian} for an explicit construction. 
\end{proof}
This means one cannot construct a block access to $\mathcal{M}$ with $\delta_{\mathcal{M}}=0$ using a finite number of $2$-qubit gates if one begins with sparse access to $\mathcal{M}$. Thus one cannot apply Lemma~\ref{lem:bslep} from \cite{barrynew} directly at this point and needs to extend to the case where one is instead given block access $(\alpha_{\mathcal{M}}, n_{\mathcal{M}}, \delta_{\mathcal{M}}>0, U_{\mathcal{M}})$ to $\mathcal{M}$ and block access $(\alpha_{\mathcal{G}}=1, n_{\mathcal{G}}, 0, U_{\mathcal{G}})$ to $\mathcal{G}$. In our scenario, $\mathcal{G}$ is a pure density matrix, which means it is simple to construct a $\delta_{\mathcal{G}}=0$ block access if only given access to unitaries that construct the state, using the following lemma.

\begin{lemma} \label{lem:gconstruction}(\cite{qsvdarxiv}) Given $\mathcal{G}=|G_{n, \vect{j}}\rangle \langle G_{n,\vect{j}}|$ is a $m$-qubit density matrix and $L(n, \vect{j})$ is a $m$-qubit unitary operator such that $L(n, \vect{j})|0^{m}\rangle=|G\rangle$, then $U_{\mathcal{G}}=(L(n, \vect{j})^{\dagger}\otimes \mathbf{1}^{\otimes m+1})(\mathbf{1}^{\otimes m+1}\otimes SWAP_m)(L(n, \vect{j})\otimes \mathbf{1}^{m+1})$ is a $(1,n_{\mathcal{G}}=2m+1,\delta_{\mathcal{G}}=0,U_{\mathcal{G}})$ block encoding of $\mathcal{G}$. 
\end{lemma}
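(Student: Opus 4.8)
The plan is to check the block-encoding condition \eqref{G-def} directly. Since here $\alpha_{\mathcal{G}}=1$ and $\delta_{\mathcal{G}}=0$, what must be shown is the \emph{exact} operator identity $\mathcal{G}=\langle 0^{n_{\mathcal{G}}}|\,U_{\mathcal{G}}\,|0^{n_{\mathcal{G}}}\rangle$, i.e. that $\mathcal{G}$ sits in the top-left block of $U_{\mathcal{G}}$ once the $n_{\mathcal{G}}=2m+1$ ancilla qubits are prepared in, and postselected onto, $|0\rangle$. The circuit $U_{\mathcal{G}}=(L(n,\vect{j})^{\dagger}\otimes\mathbf{1}^{\otimes m+1})(\mathbf{1}^{\otimes m+1}\otimes SWAP_m)(L(n,\vect{j})\otimes\mathbf{1}^{\otimes m+1})$ is precisely the ``prepare--SWAP--unprepare'' gadget that converts a state-preparation unitary for a purification of a density operator into a block encoding of that operator \cite{qsvdarxiv}; the only simplification special to our setting is that $\mathcal{G}=|G_{n,\vect{j}}\rangle\langle G_{n,\vect{j}}|$ is \emph{pure}, so the required purification is simply $|G_{n,\vect{j}}\rangle$ itself (tensored with a register kept in $|0\rangle$), prepared by the single unitary $L(n,\vect{j})$.

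Concretely, I would label the registers as follows: an $m$-qubit register $S'$ carrying an arbitrary input state $|\phi\rangle$ (the system register of the block encoding), together with the $n_{\mathcal{G}}=2m+1$ ancilla qubits split into an $m$-qubit register $C$ and an $(m{+}1)$-qubit register $A$, both initialised in $|0\rangle$; $L(n,\vect{j})$ acts on $C$ and $SWAP_m$ exchanges $C$ with $S'$, while $A$ is the padding register carried along by the general construction of \cite{qsvdarxiv} and here never leaves $|0\rangle$. Tracking the state through the three factors: $L(n,\vect{j})\otimes\mathbf{1}$ sends $|0^m\rangle_C|0^{m+1}\rangle_A|\phi\rangle_{S'}$ to $|G_{n,\vect{j}}\rangle_C|0^{m+1}\rangle_A|\phi\rangle_{S'}$; then $SWAP_m$ gives $|\phi\rangle_C|0^{m+1}\rangle_A|G_{n,\vect{j}}\rangle_{S'}$; then $L(n,\vect{j})^{\dagger}\otimes\mathbf{1}$ gives $(L(n,\vect{j})^{\dagger}|\phi\rangle)_C|0^{m+1}\rangle_A|G_{n,\vect{j}}\rangle_{S'}$. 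Projecting the ancilla registers $C$ and $A$ back onto $|0\rangle$ extracts the scalar $\langle 0^m|L(n,\vect{j})^{\dagger}|\phi\rangle=\langle G_{n,\vect{j}}|\phi\rangle$, leaving $\langle G_{n,\vect{j}}|\phi\rangle\,|G_{n,\vect{j}}\rangle_{S'}=|G_{n,\vect{j}}\rangle\langle G_{n,\vect{j}}|\,|\phi\rangle=\mathcal{G}|\phi\rangle$ on $S'$. As $|\phi\rangle$ is arbitrary this is the claimed identity, with subnormalisation $\alpha_{\mathcal{G}}=1$ and error $\delta_{\mathcal{G}}=0$; the count $n_{\mathcal{G}}=2m+1$ is read off from the explicit form of $U_{\mathcal{G}}$, and the resource cost is immediate since $U_{\mathcal{G}}$ uses one call each to $L(n,\vect{j})$ and $L(n,\vect{j})^{\dagger}$ and an $O(m)$-gate SWAP network.

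There is essentially no analytic content here; the entire argument is a bookkeeping exercise. Accordingly, the only real obstacle is notational: keeping the tensor-factor ordering consistent with the definition of block encoding used in \eqref{G-def} (which register is ``ancilla'' and which is ``system''), making sure $SWAP_m$ is applied between precisely the copy register $C$ and the system register $S'$, and tracking the padding qubits that the general lemma of \cite{qsvdarxiv} carries along so that the ancilla count comes out to $2m+1$ rather than $m$. For the purposes of this paper it is enough to invoke that construction; the computation above records why it yields the stated $(1,2m+1,0)$-block encoding in the special case that $\mathcal{G}$ is a pure-state density matrix.
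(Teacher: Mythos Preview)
Your proposal is correct and follows essentially the same approach as the paper: the paper's own proof simply invokes Lemma~45 of \cite{qsvdarxiv} specialised to a pure state and notes the $O(m)$ gate cost of $SWAP_m$, whereas you carry out that specialisation explicitly by tracking the state through the prepare--SWAP--unprepare circuit. Your explicit verification is a useful elaboration, and your caveat about tensor-factor ordering is well placed, since the register labelling in the displayed expression for $U_{\mathcal{G}}$ is indeed somewhat loose.
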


\begin{proof}
Apply Lemma 45 in \cite{qsvdarxiv} to the case of a pure state $\mathcal{G}$. Note that the SWAP$_m$ gate can be constructed from $O(m)$ 2-qubit gates. 
\end{proof}
The block access $(\alpha_{\mathcal{M}}, n_{\mathcal{M}}, \delta_{\mathcal{M}}, U_{\mathcal{M}})$ can then be used to construct a block-access $(\alpha_{\mathcal{M}^{-1}}, n_{\mathcal{M}^{-1}}, \delta, U_{\mathcal{M}^{-1}})$ to $\mathcal{M}^{-1}$ as seen from Lemma~\ref{lem:finiteerrorm}. The block access $(\alpha_{\mathcal{G}}=1, n_{\mathcal{G}}, 0, U_{\mathcal{G}})$ can be used to construct block access $(1, n_{\mathcal{G}}+1, 0, U_{\mathcal{G}'})$ to $\mathcal{G}'=|0^{n_{\mathcal{M}^{-1}}}\rangle \langle 0^{n_{\mathcal{M}^{-1}}} |\otimes \mathcal{G}$, which will be shown in Lemma~\ref{lem:g0}.

\begin{lemma} \label{lem:g0} (\cite{barrynew}). Given block-access $(1,n_{\mathcal{G}}, 0, U_{\mathcal{G}})$ to $\mathcal{G}$, block-access $(1, n_{\mathcal{G}}+1, 0, U_{\mathcal{G}'})$ to $\mathcal{G}'=|0^{n_{\mathcal{M}^{-1}}}\rangle \langle 0^{n_{\mathcal{M}^{-1}}}|\otimes \mathcal{G}$ can be constructed where $U_{\mathcal{G}}$ is queried once with $\mathcal{O}(n_{\mathcal{M}^{-1}}^2)$ additional $2$-qubit gates. 
\end{lemma}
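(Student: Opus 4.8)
The plan is to prove Lemma~\ref{lem:g0} by giving an explicit construction of the block-encoding unitary $U_{\mathcal{G}'}$ from $U_{\mathcal{G}}$ together with a constant number of ancilla registers and a verification of the defining inequality in the block-encoding definition. First I would set up the register bookkeeping: $\mathcal{G}$ acts on $m$ qubits and $U_{\mathcal{G}}$ is a $(1,n_{\mathcal{G}},0,U_{\mathcal{G}})$-block encoding, so $U_{\mathcal{G}}$ acts on $m+n_{\mathcal{G}}$ qubits with $\langle 0^{n_{\mathcal{G}}}|U_{\mathcal{G}}|0^{n_{\mathcal{G}}}\rangle=\mathcal{G}$ exactly. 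The target operator $\mathcal{G}'=|0^{n_{\mathcal{M}^{-1}}}\rangle\langle 0^{n_{\mathcal{M}^{-1}}}|\otimes\mathcal{G}$ acts on $n_{\mathcal{M}^{-1}}+m$ qubits, so a $(1,n_{\mathcal{G}}+1,0,\cdot)$-block encoding of it should be a unitary on $(n_{\mathcal{G}}+1)+(n_{\mathcal{M}^{-1}}+m)$ qubits.

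The key idea is that the projector $P\equiv|0^{n_{\mathcal{M}^{-1}}}\rangle\langle 0^{n_{\mathcal{M}^{-1}}}|$ is already a (trivial) block encoding of itself: using one extra ancilla qubit and a multi-controlled NOT, one realizes a unitary $V$ on $1+n_{\mathcal{M}^{-1}}$ qubits with $\langle 0|V|0\rangle=P$. Concretely, let $V$ flip the ancilla qubit conditioned on the $n_{\mathcal{M}^{-1}}$-qubit register being in state $|0^{n_{\mathcal{M}^{-1}}}\rangle$ (and do nothing to that register); this multiply-controlled-$X$ can be built from $\mathcal{O}(n_{\mathcal{M}^{-1}})$ two-qubit gates with one additional clean ancilla, hence $\mathcal{O}(n_{\mathcal{M}^{-1}}^2)$ in the coarsest accounting, matching the stated gate count. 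Then one takes the tensor product with $U_{\mathcal{G}}$ acting on its own $m+n_{\mathcal{G}}$ qubits, after an appropriate reordering of wires, so that $U_{\mathcal{G}'}\equiv \pi\,(V\otimes U_{\mathcal{G}})\,\pi^{\dagger}$ where $\pi$ permutes registers to group the $n_{\mathcal{G}}+1$ "flag" ancillas together and the $n_{\mathcal{M}^{-1}}+m$ "system" qubits together. Since permutations of qubit registers cost no gates and preserve the block-encoding structure, the gate count is dominated by constructing $V$.

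The verification step is then a short computation: because the two factors act on disjoint registers,
\begin{align}
\langle 0^{n_{\mathcal{G}}+1}|\,U_{\mathcal{G}'}\,|0^{n_{\mathcal{G}}+1}\rangle
&=\big(\langle 0|V|0\rangle\big)\otimes\big(\langle 0^{n_{\mathcal{G}}}|U_{\mathcal{G}}|0^{n_{\mathcal{G}}}\rangle\big)
= P\otimes\mathcal{G}
=\mathcal{G}',
\end{align}
so the block-encoding error is exactly $0$ and the subnormalization constant is $\alpha=1\cdot 1=1$, giving a $(1,n_{\mathcal{G}}+1,0,U_{\mathcal{G}'})$-block encoding of $\mathcal{G}'$ as claimed. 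The query count to $U_{\mathcal{G}}$ is one, as it appears exactly once in the tensor product.

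I do not anticipate a serious obstacle here; the only mild subtlety is the register/wire bookkeeping — making sure the ancilla qubits of $V$ and of $U_{\mathcal{G}}$ are grouped into a single $(n_{\mathcal{G}}+1)$-qubit flag register and that $P$ acts on the intended part of the system register rather than on $\mathcal{G}$'s qubits — and confirming that the multi-controlled-NOT realizing $V$ can indeed be compiled into $\mathcal{O}(n_{\mathcal{M}^{-1}}^2)$ two-qubit gates (which follows from standard Toffoli-ladder decompositions). For full rigor one would cite the relevant construction in \cite{barrynew} (this is their analogous lemma), but the self-contained argument above suffices; alternatively, one simply invokes the general fact that block encodings are closed under tensor products together with the trivial block encoding of a projector, which is exactly what the displayed identity formalizes.
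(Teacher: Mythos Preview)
Your approach is sound and in fact more explicit than the paper's own proof, which simply refers the reader to Lemma~IV.16 in \cite{barrynew} for the construction. The idea of taking a trivial $(1,1,0)$-block encoding of the projector $P=|0^{n_{\mathcal{M}^{-1}}}\rangle\langle 0^{n_{\mathcal{M}^{-1}}}|$ and tensoring it with $U_{\mathcal{G}}$ is exactly right, and the gate-count argument via a Toffoli-ladder decomposition of the multi-controlled gate is standard.

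There is, however, one concrete slip in your specification of $V$. If $V$ flips the ancilla when the register \emph{is} in $|0^{n_{\mathcal{M}^{-1}}}\rangle$, then
\[
\langle 0|V|0\rangle\,|0^{n_{\mathcal{M}^{-1}}}\rangle=\langle 0|1\rangle\,|0^{n_{\mathcal{M}^{-1}}}\rangle=0,\qquad
\langle 0|V|0\rangle\,|x\rangle=|x\rangle\ \ (x\neq 0),
\]
so $\langle 0|V|0\rangle=I-P$, not $P$. The control must be inverted: flip the ancilla when the register is \emph{not} all zeros (equivalently, conjugate your $V$ by a single $X$ on the ancilla). With that one-line fix your displayed verification goes through verbatim, the query count to $U_{\mathcal{G}}$ remains one, and the $\mathcal{O}(n_{\mathcal{M}^{-1}}^2)$ two-qubit gate bound is unaffected.
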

\begin{proof}
See Lemma IV.16 in \cite{barrynew} for an explicit construction.
\end{proof}

Given these ingredients, we are now ready to estimate $\Upsilon$ to error $\epsilon'$. Observe that 
\begin{align} \label{eq:udef}
 &u\equiv\langle 0^{m+n_{\mathcal{G}}+1+n_{\mathcal{M}^{-1}}}|(U^{\dagger}_{initial}\otimes \mathbf{1}^{n_{\mathcal{G}}+1+n_{\mathcal{M}^{-1}}})(U^{\dagger}_{\mathcal{M}^{-1}}\otimes \mathbf{1}^{n_{\mathcal{G}}+1}) U_{\mathcal{G}'}(U_{\mathcal{M}^{-1}}\otimes \mathbf{1}^{n_{\mathcal{G}}+1})(U_{initial}\otimes \mathbf{1}^{n_{\mathcal{G}}+1+n_{\mathcal{M}^{-1}}})|0^{m+n_{\mathcal{G}}+1+n_{\mathcal{M}^{-1}}}\rangle \nonumber \\
 &=\langle 0^{m+n_{\mathcal{M}^{-1}}}|(U^{\dagger}_{initial}\otimes \mathbf{1}^{n_{\mathcal{M}^{-1}}})U^{\dagger}_{\mathcal{M}^{-1}}(|0^{n_{\mathcal{M}^{-1}}}\rangle \langle 0^{n_{\mathcal{M}^{-1}}}|\otimes \mathcal{G})U_{\mathcal{M}^{-1}}(U_{initial}\otimes \mathbf{1}^{n_{\mathcal{M}^{-1}}})|0^{m+n_{\mathcal{M}^{-1}}}\rangle \nonumber \\
 &=\langle \psi_{0}| (0^{n_{\mathcal{M}^{-1}}}|U^{\dagger}_{\mathcal{M}^{-1}}|0^{n_{\mathcal{M}^{-1}}}\rangle) \mathcal{G} (0^{n_{\mathcal{M}^{-1}}}|U_{\mathcal{M}^{-1}}|0^{n_{\mathcal{M}^{-1}}}\rangle)|\psi_{0}\rangle=\langle \psi_{0}|\mathcal{L}^{\dagger}\mathcal{G}\mathcal{L}|\psi_{0}\rangle,
 \end{align}
 where $\mathcal{L}\equiv \langle 0^{n_{\mathcal{M}^{-1}}}|U_{\mathcal{M}^{-1}}|0^{n_{\mathcal{M}^{-1}}}\rangle$ and $|\psi_{0}\rangle=U_{initial}|0^m\rangle$.
 Since block-access to $\mathcal{M}^{-1}$ can be made with small enough error $\delta$ as  will be seen later, then 
 \begin{align}
 \alpha^2_{\mathcal{M}^{-1}}u \approx \langle \psi_{0}|(\mathcal{M}^{-1})^{\dagger}\mathcal{G}\mathcal{M}^{-1}|\psi_{0}\rangle \equiv \Upsilon.
\end{align}
 The benefit of approximating $\Upsilon$ using $u$ is that $u$ is written in terms of the expectation value with respect to unitary operators, so one is able to more elegantly employ the amplitude estimation algorithm \cite{knill2007optimal} to gain a quadratic speedup with respect to error in computing $\Upsilon$. This allows an optimal estimation of the expectation value. \\

The amplitude estimation algorithm \cite{knill2007optimal} is an algorithm for estimating the value $u\equiv |\langle 0^r|U^{\dagger}VU|0^r\rangle|_u$ 
to $\epsilon_u$-additive precision with success probability at least $2/3$, 
where $U$, $V$ are $r$-qubit unitary black-boxes. Here we have
\begin{align} \label{eq:uvdef}
    &U \equiv (U_{\mathcal{M}^{-1}}\otimes \mathbf{1}^{n_{\mathcal{G}}+1})(U_{initial}\otimes \mathbf{1}^{n_{\mathcal{G}}+1+n_{\mathcal{M}^{-1}}}), \nonumber \\
    &V \equiv U_{\mathcal{G}'}, \nonumber \\
    & r \equiv m+n_{\mathcal{G}}+1+n_{\mathcal{M}^{-1}}.
\end{align}
The key is to construct a unitary operation $S$ whose eigenvalue is $\exp(i\theta)$ where $u=|\cos(\theta/2)|$. Then a quantum phase estimation algorithm is used to extract $u$. Defining $|\phi_0\rangle \equiv U|0^r \rangle$ and $|\phi_1\rangle \equiv VU|0^r\rangle$, our desired quantity is then the inner product $u=\langle \phi_0|\phi_1\rangle$. The aim is then to construct $S$ as a rotation operator with eigenvalue $\exp(i\theta)$, that rotates $|\phi_0\rangle$ to $|\phi_1\rangle$ and the two states are separated by angle $2\theta=4\cos^{-1}(\langle \phi_0|\phi_1\rangle)$. Just like in Grover's search algorithm, one can construct the rotation operator as a combination of two reflection operators $S=S_0S_1$ where $S_0=\mathbf{1}^r-2|\phi_0\rangle \langle \phi_0|=UP_0 U^{\dagger}$ and $S_1=\mathbf{1}^r-2|\phi_1\rangle \langle \phi_1|=VUP_0U^{\dagger}V^{\dagger}$ where $P_0=\mathbf{1}^r-2|\phi_0\rangle \langle \phi_0|$. This means that one can construct the unitary operator
\begin{align} \label{eq:sdef}
 S=UP_0 U^{\dagger}VUP_0U^{\dagger}V^{\dagger}
\end{align}
by concatenating the unitary black-boxes $U_{\mathcal{G}}$, $U_{\mathcal{M}^{-1}}$ and $U_{initial}$ and their adjoints, which we already assume one has access to. Then it is straightforward to apply the standard quantum phase estimation algorithm using controlled-$S$, i.e., $|0\rangle\langle 0|\otimes \mathbf{1}+|1\rangle \langle 1|\otimes S$, which can be easily constructed from controlled-$U$, controlled-$V$ and are in turn constructed from controlled-$U_{\mathcal{M}^{-1}}$ and controlled-$U_{\mathcal{G}}$. The output of the quantum phase estimation algorithm is then a $2\epsilon_u$-additive error estimate of $\theta$, which leads to an $\epsilon_u$-additive error estimate of  $u=|\cos(\theta/2)|$. \\

Note that $u=|u|$ for our problem since it has the interpretation of being proportional to quantum fidelity, so we don't need to be concerned about possible negative expectation values $\langle 0|U^{\dagger}VU|0\rangle$ that can happen for general $U,V$. This means one also does not require the extra steps in \cite{barrynew} to deal with possible negative expectation values. \\

To obtain an $\epsilon_u$-additive error estimate to $u$, one can use the following lemma. 

\begin{lemma} (\label{lem:qpe}\cite{barrynew, knill2007optimal}) An amplitude estimation algorithm exists that gives a $\epsilon_u$-additive estimate of $u$ that makes $\mathcal{O}(1/\epsilon_u)$ queries to $U$, $V$ and $\mathcal{O}(r/\epsilon_u)$ additional $2$-qubit gates. 
\end{lemma}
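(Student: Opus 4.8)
The plan is to treat this as the standard amplitude-estimation-by-phase-estimation result of \cite{knill2007optimal}, specialized to the inner-product form $u=|\langle 0^r|U^{\dagger}VU|0^r\rangle|$, and to assemble it from three ingredients: the two-dimensional invariant-subspace structure of the product-of-reflections operator $S$ constructed in Eq.~\eqref{eq:sdef}, ordinary quantum phase estimation applied to controlled-$S$, and a Lipschitz post-processing step that turns an eigenphase estimate into an estimate of $u$. No new idea is needed beyond careful bookkeeping of the query and gate counts.

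First I would establish the spectral property of $S$. Setting $|\phi_0\rangle=U|0^r\rangle$ and $|\phi_1\rangle=VU|0^r\rangle$, the operators $S_0=\mathbf{1}^r-2|\phi_0\rangle\langle\phi_0|$ and $S_1=\mathbf{1}^r-2|\phi_1\rangle\langle\phi_1|$ are reflections through the lines $\mathbb{C}|\phi_0\rangle$ and $\mathbb{C}|\phi_1\rangle$, and both are expressed, as in Eq.~\eqref{eq:sdef}, via $S_0=UP_0U^{\dagger}$ and $S_1=VUP_0U^{\dagger}V^{\dagger}$ with $P_0=\mathbf{1}^r-2|0^r\rangle\langle 0^r|$ the reflection about $|0^r\rangle$. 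By Jordan's lemma (equivalently, a direct $2\times 2$ computation inside $\mathrm{span}\{|\phi_0\rangle,|\phi_1\rangle\}$), the product $S=S_1S_0$ acts on that subspace as a rotation, so its restriction has eigenvalues $e^{\pm i\theta}$ with $\theta$ obeying $u=|\langle\phi_0|\phi_1\rangle|=|\cos(\theta/2)|$. The starting vector $|\phi_0\rangle=U|0^r\rangle$ has nonzero overlap with both eigenvectors, so phase estimation on $S$ initialized in $|\phi_0\rangle$ returns an estimate of $\theta$ (up to an overall sign, which is irrelevant since only $|\cos(\theta/2)|$ is wanted).

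Next I would run standard quantum phase estimation on the controlled-$S$ operator. Using $O(1/\epsilon_u)$ controlled applications of $S$---realized through the controlled powers $S^{2^j}$, whose total count is $\sum_j 2^j=O(1/\epsilon_u)$---together with an inverse quantum Fourier transform on $O(\log(1/\epsilon_u))$ ancilla qubits, one obtains with probability at least $2/3$ an estimate $\tilde\theta$ with $|\tilde\theta-\theta|=O(\epsilon_u)$. Each application of $S$ uses $O(1)$ calls to each of $U$, $V$ and their adjoints, so the total black-box query count to $U$ and $V$ is $O(1/\epsilon_u)$; each application of $S$ also needs two copies of the $r$-qubit reflection $P_0$, each implementable with $O(r)$ two-qubit gates (a multiply-controlled phase on $r$ qubits), contributing $O(r/\epsilon_u)$ additional gates in total, while the inverse QFT contributes only $O(\log^2(1/\epsilon_u))$, a lower-order term. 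Finally, since $u=|\cos(\theta/2)|$ and $|\tfrac{d}{d\theta}\cos(\theta/2)|\le\tfrac12$, the map $\theta\mapsto u$ is $\tfrac12$-Lipschitz, so $|\tilde\theta-\theta|=O(\epsilon_u)$ yields $\big||\cos(\tilde\theta/2)|-u\big|=O(\epsilon_u)$; absorbing the constant into the target precision of phase estimation gives an $\epsilon_u$-additive estimate of $u$ with success probability $\ge 2/3$ (boostable to $1-\delta$ by a median over $O(\log(1/\delta))$ independent runs, which leaves the asymptotics unchanged).

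The main obstacle is the gate-count bookkeeping rather than any conceptual step. One must check that the only $r$-dependence in a single application of $S$ comes from the two reflections $P_0$ about $|0^r\rangle$, each costing $O(r)$ two-qubit gates, so that the dominant extra-gate cost is exactly $O(r/\epsilon_u)$ and not, say, $O\!\left(r\log(1/\epsilon_u)/\epsilon_u\right)$; and one must verify that the controlled-$S$ required by phase estimation can be built from controlled-$U_{\mathcal{M}^{-1}}$ and controlled-$U_{\mathcal{G}}$ (assumed available earlier in the paper, hence controlled-$U$ and controlled-$V$ from Eq.~\eqref{eq:uvdef}) with at most $O(1)$ overhead---using the standard trick of controlling only the reflections $P_0$ while leaving the interior $U$'s and $V$'s uncontrolled. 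Both are routine but are the only places where care is required; everything else is a direct appeal to \cite{knill2007optimal} and the textbook phase-estimation error bound.
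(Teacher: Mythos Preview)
Your proposal is correct and is precisely the standard argument behind the cited result; the paper itself offers no independent proof here and simply defers with ``See \cite{knill2007optimal} and Lemma II.15 from \cite{barrynew}.'' What you have written is a faithful unpacking of those references---the invariant-subspace structure of $S=S_1S_0$, phase estimation on controlled-$S$, the Lipschitz map $\theta\mapsto|\cos(\theta/2)|$, and the $O(r)$ cost of the reflection $P_0$---so there is no substantive difference to compare.
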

\begin{proof}
See \cite{knill2007optimal} and Lemma II.15 from \cite{barrynew}.
\end{proof}
Since $\Upsilon\approx \alpha^2_{\mathcal{M}^{-1}}u$, one also needs to identify $\alpha_{\mathcal{M}^{-1}}$ to find the total query complexity. To demonstrate an appropriate value, one first requires three other lemmas.

\begin{lemma} \label{lem:newinvm} (\cite{qsvdarxiv, barrynew})
Given block-access $(\alpha_{\mathcal{M}}, n_{\mathcal{M}}, \delta_{\mathcal{M}}, U_{\mathcal{M}})$ to matrix $\mathcal{M}$, and error $\sigma>0$ and a polynomial $\mathcal{P}:\mathbb{R} \rightarrow \mathbb{R}$ of degree $D(\mathcal{P})$ satisfying $|\mathcal{P}(x)|\leq 1/2$ for all $x\in[-1,1]$, then block access $(1, n_{\mathcal{M}}, 4D(\mathcal{P})\sqrt{\delta_{\mathcal{M}}/\alpha_{\mathcal{M}}}+\sigma, U_{\mathcal{P}(\mathcal{M}/\alpha_{\mathcal{M}})})$ to $\mathcal{P}(\mathcal{M}/\alpha_{\mathcal{M}})$ can be constructed where each $U_{\mathcal{P}(\mathcal{M}/\alpha_{\mathcal{M}})}$ makes $2D(\mathcal{P})+1$ queries to $U_{\mathcal{M}}$ and $\mathcal{O}(n_{\mathcal{M}}D(\mathcal{P}))$ additional $2$-qubit gates.
\end{lemma}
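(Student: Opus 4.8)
The plan is to import the quantum singular value transformation (QSVT) machinery of \cite{qsvdarxiv} essentially wholesale, exploiting that $\mathcal{M}$ is Hermitian and $\mathcal{P}$ is a fixed real polynomial so that eigenvalue transformation applies directly. First I would rescale. By the definition of block access, $U_{\mathcal{M}}$ is a $(1, n_{\mathcal{M}}, \delta_{\mathcal{M}}/\alpha_{\mathcal{M}})$-block encoding of the Hermitian operator $A \equiv \mathcal{M}/\alpha_{\mathcal{M}}$, which satisfies $\|A\| \le 1$ (implicit in the block-access definition, up to the $\delta_{\mathcal{M}}$ slack). So it suffices to build an approximate block encoding of $\mathcal{P}(A)$ with subnormalisation $1$ and to track the error.

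Next I would deal with parity. A general real $\mathcal{P}$ need not have definite parity, so I split $\mathcal{P} = \mathcal{P}_{\mathrm{even}} + \mathcal{P}_{\mathrm{odd}}$; each summand has degree at most $D(\mathcal{P})$ and, since $|\mathcal{P}| \le 1/2$ on $[-1,1]$, the rescaled polynomials $2\mathcal{P}_{\mathrm{even}}$ and $2\mathcal{P}_{\mathrm{odd}}$ are bounded by $1$ on $[-1,1]$. For each of these two bounded, definite-parity polynomials I invoke the exact QSVT theorem of \cite{qsvdarxiv}: an alternating sequence of $U_{\mathcal{M}}$, $U_{\mathcal{M}}^{\dagger}$, and single-qubit $z$-rotations controlled on the $n_{\mathcal{M}}$ block ancillas, using $D(\mathcal{P})$ total queries to $U_{\mathcal{M}}$ and $U_{\mathcal{M}}^{\dagger}$, exactly block-encodes $2\mathcal{P}_{\mathrm{even}}(A)$ (respectively $2\mathcal{P}_{\mathrm{odd}}(A)$) when $U_{\mathcal{M}}$ is an exact block encoding. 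A single controlled linear-combination (LCU) circuit then averages the two blocks, producing $\tfrac12\big(2\mathcal{P}_{\mathrm{even}}(A) + 2\mathcal{P}_{\mathrm{odd}}(A)\big) = \mathcal{P}(A)$ with subnormalisation exactly $1$ — this is precisely why the hypothesis $|\mathcal{P}|\le 1/2$ (rather than $\le 1$) is needed. Counting gives $2D(\mathcal{P})+1$ queries to $U_{\mathcal{M}}$ and $\mathcal{O}(n_{\mathcal{M}} D(\mathcal{P}))$ additional two-qubit gates, the $n_{\mathcal{M}}$ factor being the per-layer cost of a multi-controlled reflection about $|0^{n_{\mathcal{M}}}\rangle$.

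Then I would propagate the two error sources. The input is only a $\delta_{\mathcal{M}}/\alpha_{\mathcal{M}}$-accurate block encoding of $A$, so I apply the robustness lemma for QSVT from \cite{qsvdarxiv}: running the same phase sequence on an $\epsilon$-accurate block encoding of an operator of norm at most $1$ yields a block encoding that is $4 D(\mathcal{P}) \sqrt{\epsilon}$-close in operator norm to the ideal $\mathcal{P}(A)$; with $\epsilon = \delta_{\mathcal{M}}/\alpha_{\mathcal{M}}$ this is the stated $4 D(\mathcal{P}) \sqrt{\delta_{\mathcal{M}}/\alpha_{\mathcal{M}}}$ term. The remaining $\sigma$ accounts for the QSVT phase angles having to be computed classically to finite precision; choosing them accurately enough perturbs the encoded operator by at most $\sigma$ and costs only polylogarithmic overhead absorbed into the $\mathcal{O}(\cdot)$. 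Summing the two contributions gives the claimed $\big(1, n_{\mathcal{M}}, 4 D(\mathcal{P}) \sqrt{\delta_{\mathcal{M}}/\alpha_{\mathcal{M}}} + \sigma\big)$-block access to $\mathcal{P}(\mathcal{M}/\alpha_{\mathcal{M}})$.

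The only genuinely nontrivial ingredient — the part I would simply cite rather than reprove — is the square-root error scaling in the robustness step: unlike the composition of exact block encodings, a perturbation $\epsilon$ of the encoded operator perturbs $\mathcal{P}(A)$ by $\mathcal{O}(D(\mathcal{P})\sqrt{\epsilon})$ rather than $\mathcal{O}(D(\mathcal{P})\epsilon)$, a bound obtained by analysing the alternating phase sequence directly and essentially sharp (it encodes a Markov–Bernstein-type constraint on how fast a polynomial bounded on $[-1,1]$ can vary near the endpoints). Everything else — the rescaling of $\mathcal{M}$, the parity split that converts the $1/2$ bound into a clean subnormalisation-$1$ output, and the query/gate bookkeeping — is routine once the QSVT theorem and its robustness corollary are invoked.
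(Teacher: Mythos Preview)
Your proposal is correct and follows exactly the route the paper takes: the paper's own proof is simply a citation to Theorem~56 of \cite{qsvdarxiv} (and Theorem~II.6 of \cite{barrynew}), and what you have written is a faithful unpacking of that theorem---the rescaling to $\mathcal{M}/\alpha_{\mathcal{M}}$, the even/odd parity split exploiting the $|\mathcal{P}|\le 1/2$ bound, the LCU recombination, and the invocation of the QSVT robustness lemma for the $4D(\mathcal{P})\sqrt{\delta_{\mathcal{M}}/\alpha_{\mathcal{M}}}$ term. There is nothing to add.
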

\begin{proof}
See Theorem 56 \cite{qsvdarxiv} and Theorem II.6 from \cite{barrynew}. 
\end{proof}

\begin{lemma} \label{lem:poly} (\cite{qsvdarxiv})
For any $\chi, \zeta \in (0,1/2]$, a polynomial $\mathcal{P}:\mathbb{R} \rightarrow \mathbb{R}$ with odd degree $\mathcal{O}(\log(1/\zeta)/\chi)$ exists such that for all $x\in[-1,1]\backslash[-\chi, \chi]$, $|\mathcal{P}(x)|\leq 1$ and $|\mathcal{P}(x)-\chi/(2x)|<\zeta$.
\end{lemma}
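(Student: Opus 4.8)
The plan is to construct $\mathcal{P}$ as a rescaled, Chebyshev-truncated version of the classical polynomial approximation of $1/x$. First I would take the odd polynomial
\[
g_b(x) \;=\; \frac{1-(1-x^2)^b}{x} \;=\; x\sum_{k=0}^{b-1}(1-x^2)^k,
\]
which has degree $2b-1$, and observe that for $x\in[-1,1]\setminus[-\chi,\chi]$ one has $|g_b(x)-1/x| = (1-x^2)^b/|x| \le (1-\chi^2)^b/\chi \le e^{-b\chi^2}/\chi$. Choosing $b = \Theta(\chi^{-2}\log(1/\zeta))$ makes this smaller than, say, $\zeta/\chi$. Multiplying through by $\chi/2$ would already yield a polynomial of degree $\Theta(\chi^{-2}\log(1/\zeta))$ that is $\zeta/2$-close to $\chi/(2x)$ on the domain; the remaining task is to cut the degree down to $\Theta(\chi^{-1}\log(1/\zeta))$.

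The second step is to expand $g_b$ in Chebyshev polynomials. Since $g_b$ is odd, $g_b(x) = \sum_{j=0}^{b-1} c_j\,T_{2j+1}(x)$, and from the binomial expansion of $(1-x^2)^b$ together with the product formulas for Chebyshev polynomials (equivalently, from the Fourier cosine coefficients of the corresponding $\pi$-periodic function) one obtains an explicit expression of the form $c_j \propto 2^{-2b}\sum_{i>j}\binom{2b}{b+i}$. The crucial estimate is the tail bound $\sum_{j>D'}|c_j| \le$ (a Chernoff/Hoeffding tail for a $\mathrm{Binomial}(2b,\tfrac12)$ random variable deviating by $D'$ from its mean), which decays like $e^{-D'^2/b}$. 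Hence truncating to $\mathcal{P}(x) = \tfrac{\chi}{2}\sum_{j=0}^{D'} c_j\,T_{2j+1}(x)$ with $2D'+1 = \Theta(\sqrt{b\log(1/\zeta)}) = \Theta(\chi^{-1}\log(1/\zeta))$ incurs truncation error at most $\tfrac{\chi}{2}\sum_{j>D'}|c_j| < \zeta/2$ uniformly on $[-1,1]$, using $|T_k|\le 1$ there.

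Combining the two error contributions gives, for every $x\in[-1,1]\setminus[-\chi,\chi]$,
\[
\Bigl|\mathcal{P}(x)-\frac{\chi}{2x}\Bigr| \;\le\; \frac{\chi}{2}\,\bigl|g_b(x)-1/x\bigr| \;+\; \frac{\chi}{2}\sum_{j>D'}|c_j| \;<\; \zeta,
\]
and $\mathcal{P}$ has odd degree $2D'+1 = \mathcal{O}(\log(1/\zeta)/\chi)$. For the boundedness claim, on $|x|\ge\chi$ we get $|\mathcal{P}(x)| \le |\chi/(2x)| + \zeta \le \tfrac12 + \tfrac12 = 1$ since $\zeta\le\tfrac12$, which is exactly what is required on the stated domain. (If one also wanted $|\mathcal{P}|\le 1$ on all of $[-1,1]$ one would multiply by a polynomial approximation of a smooth cutoff that is $\approx 1$ outside $[-\chi,\chi]$, but that is not needed here.)

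I expect the main obstacle to be the second step: producing the explicit binomial-coefficient form of the Chebyshev coefficients $c_j$ of $g_b$, and then proving the $e^{-D'^2/b}$-type tail bound with constants sharp enough that $D' = \Theta(\sqrt{b\log(1/\zeta)})$ suffices. Everything else — the elementary estimate on $g_b - 1/x$, the rescaling by $\chi/2$, and the final boundedness check — is routine.
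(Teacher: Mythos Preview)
Your construction is correct and is precisely the Childs--Kothari--Somma/QSVT approach that the paper defers to: the paper's own ``proof'' is simply a pointer to Corollary~67 of \cite{qsvdarxiv}, and that corollary is built on exactly the polynomial $g_b(x)=(1-(1-x^2)^b)/x$, its Chebyshev expansion, and the binomial-tail truncation you describe. The degree arithmetic $b=\Theta(\chi^{-2}\log(1/\zeta))$, $2D'+1=\Theta(\sqrt{b\log(1/\zeta)})=\Theta(\chi^{-1}\log(1/\zeta))$ is right, and your boundedness check $|\mathcal{P}(x)|\le \tfrac12+\zeta\le 1$ on $|x|\ge\chi$ matches the domain stated in the lemma.

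One caveat worth flagging (not a gap in your argument, but in how the lemma is used): the preceding Lemma~\ref{lem:newinvm} actually demands $|\mathcal{P}(x)|\le 1/2$ on \emph{all} of $[-1,1]$, not just outside $[-\chi,\chi]$. Your truncated polynomial does not satisfy this on $(-\chi,\chi)$, since $\tfrac{\chi}{2}g_b$ can reach size $\Theta(\log(1/\zeta))$ there. The full QSVT construction handles this by multiplying by a polynomial approximation to a rectangle (exactly the ``smooth cutoff'' you mention parenthetically), and that extra step is what makes the polynomial usable inside the block-encoding machinery. For the lemma \emph{as stated} your proof is complete; just be aware that the cited reference proves the stronger global bound, and that is what the application actually needs.
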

\begin{proof}
See Corollary 67 in \cite{qsvdarxiv}.
\end{proof}
Now we can prove the following lemma and show how the choice $\alpha_{\mathcal{M}^{-1}}=4\theta\kappa/\|\mathcal{M}\|$ for any $\theta>1$ is possible. 

\begin{lemma}\label{lem:finiteerrorm} (\cite{qsvd})
If given $m$ qubits and block access $(\alpha_{\mathcal{M}}, n_{\mathcal{M}}, \delta_{\mathcal{M}}, U_{\mathcal{M}})$ to $\mathcal{M}$ with $\|\mathcal{M}\|\leq \alpha_{\mathcal{M}}$, then block access $(\alpha_{\mathcal{M}^{-1}}=4\kappa \theta/\|\mathcal{M}\|, n_{\mathcal{M}^{-1}}=n_{\mathcal{M}}+2, \delta, U_{\mathcal{M}^{-1}})$ to $\mathcal{M}^{-1}$ can be constructed, for any $\theta>1$, such that $U_{\mathcal{M}^{-1}}$ makes $\mathcal{O}((\alpha_{\mathcal{M}}\kappa\theta/\|\mathcal{M}\|) \log(\theta\kappa/(\|\mathcal{M}\|\delta)))$ queries to $U_{\mathcal{M}}$ and $\mathcal{O}((\alpha_{\mathcal{M}}\kappa n_{\mathcal{M}}\theta/\|\mathcal{M}\|)\log (\kappa \theta/(\|\mathcal{M}\|\delta)))$ additional $2$-qubit gates, with $\delta \geq (2\theta \kappa\zeta+4\theta \kappa\sigma+16\theta^2\kappa^2v \log(1/\zeta)\sqrt{\delta_{\mathcal{M}}\alpha_{\mathcal{M}}})/\|\mathcal{M}\|\
$, where appropriate $\sigma>0$, $\zeta \in[0,1/2)$ can be chosen, $v>0$ and $\delta \in (0,2]$.
\end{lemma}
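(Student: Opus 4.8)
Looking at Lemma~\ref{lem:finiteerrorm}, this is a standard "negative power implementation via polynomial approximation" result in the quantum singular value transformation framework. Let me sketch the proof.

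The plan is to build $U_{\mathcal{M}^{-1}}$ by composing the polynomial approximation of $x\mapsto 1/x$ on the relevant spectral interval with the block-encoding machinery from Lemmas~\ref{lem:newinvm} and \ref{lem:poly}. First I would note that since $\mathcal{M}$ has condition number $\kappa$ and $\|\mathcal{M}\|\le\alpha_{\mathcal{M}}$, the singular values of $\mathcal{M}/\alpha_{\mathcal{M}}$ lie in $[\|\mathcal{M}\|/(\kappa\alpha_{\mathcal{M}}),\,\|\mathcal{M}\|/\alpha_{\mathcal{M}}]\subseteq[-1,1]\backslash[-\chi,\chi]$ with the choice $\chi=\|\mathcal{M}\|/(\kappa\alpha_{\mathcal{M}})$. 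Then I would invoke Lemma~\ref{lem:poly} with this $\chi$ and a precision parameter $\zeta$ to get an odd polynomial $\mathcal{P}$ of degree $D(\mathcal{P})=\mathcal{O}(\log(1/\zeta)/\chi)=\mathcal{O}((\alpha_{\mathcal{M}}\kappa/\|\mathcal{M}\|)\log(1/\zeta))$ with $|\mathcal{P}(x)|\le 1$ on the spectral range and $|\mathcal{P}(x)-\chi/(2x)|<\zeta$ there. Rescaling by $1/2$ to meet the hypothesis $|\mathcal{P}(x)|\le 1/2$ of Lemma~\ref{lem:newinvm} (or carrying the factor through; I'd absorb a harmless constant), Lemma~\ref{lem:newinvm} then produces a block encoding of $\mathcal{P}(\mathcal{M}/\alpha_{\mathcal{M}})$ with block-encoding error $4D(\mathcal{P})\sqrt{\delta_{\mathcal{M}}/\alpha_{\mathcal{M}}}+\sigma$, using $2D(\mathcal{P})+1$ queries to $U_{\mathcal{M}}$ and $\mathcal{O}(n_{\mathcal{M}}D(\mathcal{P}))$ extra gates, and $n_{\mathcal{M}^{-1}}=n_{\mathcal{M}}+2$ ancillas.

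Next I would relate $\mathcal{P}(\mathcal{M}/\alpha_{\mathcal{M}})$ to $\mathcal{M}^{-1}$. Since $\mathcal{P}(x)\approx \chi/(2x)$, we have $\mathcal{P}(\mathcal{M}/\alpha_{\mathcal{M}})\approx \frac{\chi}{2}(\mathcal{M}/\alpha_{\mathcal{M}})^{-1}=\frac{\chi\alpha_{\mathcal{M}}}{2}\mathcal{M}^{-1}=\frac{\|\mathcal{M}\|}{2\kappa}\mathcal{M}^{-1}$. Therefore the unitary $U_{\mathcal{P}(\mathcal{M}/\alpha_{\mathcal{M}})}$ is (approximately) a $(1,n_{\mathcal{M}}+2,\cdot)$-block encoding of $\frac{\|\mathcal{M}\|}{2\kappa}\mathcal{M}^{-1}$, i.e. a $(\alpha_{\mathcal{M}^{-1}},n_{\mathcal{M}^{-1}},\delta,U_{\mathcal{M}^{-1}})$-block encoding of $\mathcal{M}^{-1}$ with $\alpha_{\mathcal{M}^{-1}}=2\kappa/\|\mathcal{M}\|$; inserting the slack factor $\theta>1$ (which buys room so that $|\frac{\|\mathcal{M}\|}{2\kappa}\mathcal{M}^{-1}|$ and the approximating operator both remain sub-normalized, e.g. by using $\alpha_{\mathcal{M}^{-1}}=4\kappa\theta/\|\mathcal{M}\|$ and scaling $\mathcal{P}$ down by $1/(2\theta)$) gives the stated $\alpha_{\mathcal{M}^{-1}}=4\kappa\theta/\|\mathcal{M}\|$. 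I would then assemble the total error budget by triangle inequality: the polynomial-approximation error contributes $\sim 2\theta\kappa\zeta/\|\mathcal{M}\|$ after rescaling, the $\sigma$-term in Lemma~\ref{lem:newinvm} contributes $\sim 4\theta\kappa\sigma/\|\mathcal{M}\|$, and the $\delta_{\mathcal{M}}$-propagation term $4D(\mathcal{P})\sqrt{\delta_{\mathcal{M}}/\alpha_{\mathcal{M}}}$, with $D(\mathcal{P})=\mathcal{O}((\alpha_{\mathcal{M}}\kappa\theta/\|\mathcal{M}\|)\log(1/\zeta))$, contributes $\sim 16\theta^2\kappa^2 v\log(1/\zeta)\sqrt{\delta_{\mathcal{M}}\alpha_{\mathcal{M}}}/\|\mathcal{M}\|$ for a constant $v>0$; summing these reproduces the claimed lower bound on $\delta$. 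Finally the query and gate counts follow by substituting $D(\mathcal{P})=\mathcal{O}((\alpha_{\mathcal{M}}\kappa\theta/\|\mathcal{M}\|)\log(\theta\kappa/(\|\mathcal{M}\|\delta)))$ (using $\zeta\sim \|\mathcal{M}\|\delta/(\theta\kappa)$ up to constants) into the costs from Lemma~\ref{lem:newinvm}.

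The main obstacle I anticipate is bookkeeping the rescaling constants consistently: Lemma~\ref{lem:newinvm} demands $|\mathcal{P}(x)|\le 1/2$ on $[-1,1]$, but Lemma~\ref{lem:poly} only controls $\mathcal{P}$ on $[-1,1]\backslash[-\chi,\chi]$ and only guarantees $|\mathcal{P}|\le 1$ there; one must either truncate/damp $\mathcal{P}$ near the origin or argue that the singular values never enter $[-\chi,\chi]$ and rescale $\mathcal{P}$ by an extra factor $1/2$ (or $1/(2\theta)$) to be safe. Getting the factor $\theta>1$ to appear cleanly in $\alpha_{\mathcal{M}^{-1}}=4\kappa\theta/\|\mathcal{M}\|$ — rather than an ad hoc constant — and then tracking how that same $\theta$ propagates into $D(\mathcal{P})$, the error terms, and hence the final $\delta$-bound and complexities, is the delicate part. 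Everything else is a direct citation chain through Lemmas~\ref{lem:newinvm} and \ref{lem:poly} (both from \cite{qsvdarxiv}), so I would present it as such rather than re-deriving the QSVT polynomial constructions.
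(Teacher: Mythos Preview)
Your proposal is correct and follows essentially the same approach as the paper's proof: choose $\chi$ proportional to $\|\mathcal{M}\|/(\alpha_{\mathcal{M}}\kappa)$ (the paper sets $\chi=\|\mathcal{M}\|/(\theta\alpha_{\mathcal{M}}\kappa)$ with $\theta>1$ from the outset to enforce the strict inequality in \eqref{chi-ineq}, which is exactly the ``slack'' you introduce later), apply Lemma~\ref{lem:poly} to get $\mathcal{P}$, feed $\mathcal{P}/2$ into Lemma~\ref{lem:newinvm}, and combine the three error contributions by the triangle inequality to obtain the stated bound on $\delta$ and the query/gate counts. The obstacle you flag about the $|\mathcal{P}|\le 1/2$ normalization is handled in the paper precisely by defining $U_{\mathcal{M}^{-1}}=U_{\mathcal{P}(\mathcal{M}/\alpha_{\mathcal{M}})/2}$, i.e.\ the extra factor of $2$ you anticipated.
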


\begin{proof}
This proof is an extension of Corollary IV. 15 in \cite{barrynew} for the case of $\delta_{\mathcal{M}}>0$ and similar steps are used. Here in order to use Lemma~\ref{lem:newinvm} where $x \in [-1,1]\backslash [-\chi, \chi]$ and we want $x$ to represent the spectrum of the operator $\mathcal{M}/\alpha_{\mathcal{M}}$, its spectrum should also lie within $[-1,1]\backslash [-\chi, \chi]$. One
can just look at the positive eigenvalues without losing generality. If $\lambda_{min}$ and $\lambda_{max}$ represent the smallest and largest absolute values of the eigenvalues of $\mathcal{M}$, then the above requirement demands
\begin{align} \label{chi-ineq}
    \chi < \frac{\lambda_{min}}{\alpha_{\mathcal{M}}} \leq \text{Spec}\left(\frac{\mathcal{M}}{\alpha_{\mathcal{M}}}\right) \leq \frac{\lambda_{max}}{\alpha_{\mathcal{M}}}\le 1\,.
\end{align}
Since by definition $\|\mathcal{M}\|=\lambda_{max}$ and $\|\mathcal{M}^{-1}\|=1/\lambda_{min}$, the last inequality gives the condition $\|\mathcal{M}\| \leq \alpha_{\mathcal{M}}$, which we note to be sufficient for the inequality $\|\mathcal{M}\| \leq \alpha_{\mathcal{M}}+\delta_{\mathcal{M}}$ demanded by the definition of the block access to $\mathcal{M}$. Then the first equality in combination with the definition $\|\mathcal{M}\|\|\mathcal{M}^{-1}\|=\kappa$ gives
\begin{align}
    \chi < \frac{\|\mathcal{M}\|}{\alpha_{\mathcal{M}}\kappa}.
\end{align}
When $\|\mathcal{M}\| \leq \alpha_{\mathcal{M}}$, $\chi< 1/\kappa$ suffices. (In our application to PDEs and ODEs,  $\|\mathcal{M}\|<2$).  One can choose for instance $\chi=\|\mathcal{M}\|/(\theta \alpha_{\mathcal{M}}\kappa)$ for any constant $\theta>1$.  
Then interpreting $x$ in Lemma~\ref{lem:poly} to be the spectrum of $\mathcal{M}/\alpha_{\mathcal{M}}$ one can replace the inequality
\begin{align}
    |\chi/4x-\mathcal{P}(x)/2|\leq \zeta/2
\end{align}
by 
\begin{align} \label{eq:mzeta}
    \|\mathcal{M}^{-1}-(4\theta \kappa/\|\mathcal{M}\|) \mathcal{P}(\mathcal{M}/\alpha_{\mathcal{M}})/2\| \leq 2\theta \kappa \zeta/\|\mathcal{M}\|
\end{align}
By defining the block encoding $U_{\mathcal{M}^{-1}}=U_{\mathcal{P}(\mathcal{M}/\alpha_{\mathcal{M}})/2}$ then the definition of $\delta$ requires 
\begin{align} \label{eq:deltarequire}
    \|\mathcal{M}^{-1}-(4 \theta\kappa/\|\mathcal{M}\|) \langle 0^{n_{\mathcal{M}}+2}|U_{\mathcal{P}(\mathcal{M}/\alpha_{\mathcal{M}})/2}|0^{n_{\mathcal{M}}+2}\rangle\|\leq \delta.
\end{align}
Lemmas~\ref{lem:newinvm} and \ref{lem:poly} imply
\begin{align} \label{eq:polyineq}
    \|\langle 0^{n_{\mathcal{M}}+2}|U_{\mathcal{P}(\mathcal{M}/\alpha_{\mathcal{M}})/2}|0^{n_{\mathcal{M}}+2}\rangle-\mathcal{P}(\mathcal{M}/\alpha_{\mathcal{M}})/2\| \leq 4D(\mathcal{P})\sqrt{\delta_{\mathcal{M}}/\alpha_{\mathcal{M}}}+\sigma
\end{align}
where it is sufficient to choose $D(\mathcal{P})=v\log(1/\zeta)/\chi=v \theta \alpha_{\mathcal{M}}\kappa\log(1/\zeta)/\|\mathcal{M}\|$ for some constant $v>0$. Putting together Eqs.~\eqref{eq:mzeta}, ~\eqref{eq:deltarequire} and ~\eqref{eq:polyineq} gives
\begin{align}
     \|\mathcal{M}^{-1}-(4\theta \kappa/\|\mathcal{M}\|) \langle 0^{n_{\mathcal{M}}+2}|U_{\mathcal{P}(\mathcal{M}/\alpha_{\mathcal{M}})/2}|0^{n_{\mathcal{M}}+2}\rangle\| \leq \frac{1}{\|\mathcal{M}\|}(2\theta \kappa \zeta+4\theta \kappa \sigma+16\theta^2 \kappa^2 v\log(1/\zeta) \sqrt{\delta_{\mathcal{M}}\alpha_{\mathcal{M}}})\leq \delta.
\end{align}
For example, in the limit $\delta_{\mathcal{M}}=0$, one can choose the parameters $\zeta=\delta\|\mathcal{M}\|/(8\kappa)$ and $\sigma=\delta \|\mathcal{M}\|/(16\kappa)$. However, we are interested in the case $\delta_{\mathcal{M}}>0$. Then it is sufficient to choose $\zeta=\delta\|\mathcal{M}\|(1-\theta')/(4\theta\kappa)$,  $\sigma=\delta\|\mathcal{M}\|(1-\theta')/(8\theta \kappa)$ and $\sqrt{\delta_{\mathcal{M}}\alpha_{\mathcal{M}}} \leq \delta \|\mathcal{M}\|\theta'/(16 \theta^2 v\kappa^2 \log (8\kappa/(\|\mathcal{M}\|\delta)))$ for any constant $0< \theta'<1$. The latter implies  $\delta_{\mathcal{M}}<\delta^2 \|\mathcal{M}\|^2(\theta')^2/(16^2\alpha_{\mathcal{M}}\kappa^4\log^2(\kappa/\delta))$. From Lemma~\ref{lem:sparsetoblock} and ignoring constant factors $\theta, \theta', v,  \|\mathcal{M}\|=O(1)$ this gives rise to an additional $O(\log^{2.5}(\alpha_{\mathcal{M}}^2\kappa^4 \log^2(\kappa/\delta)/\delta^2))$ 2-qubit gates to create block access to $\mathcal{M}$ from sparse access. Later from Eq.~\eqref{eq:deltaepsilon'} we see that $\delta \sim \epsilon'/\kappa^2$ can be chosen where $\epsilon'$ is the final error in $\Upsilon$, so this gives the additional gate cost $O(\log^{2.5}(s^2\|\mathcal{M}\|_{max}^2\kappa^8 \log^2(\kappa^3/\epsilon')/(\epsilon')^2)<O(\log^{2.5}(s\|\mathcal{M}\|_{max}\kappa^4/\epsilon'))$.   \\

Since one can choose $\mathcal{D}(\mathcal{P})=v\theta \kappa\alpha_{\mathcal{M}}\log(4\theta \kappa/(\|\mathcal{M}\|\delta(1-\theta')))/\|\mathcal{M}\|$ and then ignoring all constants except $\|\mathcal{M}\|$ and $\theta$, the rest of the proof follows using Lemma~\ref{lem:newinvm}. 
\end{proof} 

From the above lemma one sees one can set $\alpha_{\mathcal{M}^{-1}}=4\theta \kappa/\|\mathcal{M}\|$ for any $\theta>1$ and $n_{\mathcal{M}^{-1}}=n_{\mathcal{M}}+2$. Let the total error in estimating $\Upsilon$ be $\epsilon'$. There are two sources of error: one in the amplitude estimation algorithm that outputs $\tilde{u}$, which approximates $u$ with error $\epsilon_u$ and the other error is in the block-encoding of $\mathcal{M}^{-1}$, which has error $\delta$. This means 
\begin{align}
    |u-\tilde{u}|\leq \epsilon_u
\end{align}
where $\tilde{u}$ is the estimate of $u$ from the amplitude estimation algorithm and 
\begin{align}
    \|\alpha_{\mathcal{M}^{-1}}\mathcal{L}-\mathcal{M}^{-1}\|\leq \delta.
\end{align}
Then the total error 
\begin{align}
 &|\alpha^2_{\mathcal{M}^{-1}}u-\Upsilon|=|\alpha^2_{\mathcal{M}^{-1}}\langle  \psi_{0}|\mathcal{L}^{\dagger}\mathcal{G}\mathcal{L}|\psi_{0}\rangle-\langle \psi_{0}|(\mathcal{M}^{-1})^{\dagger}\mathcal{G}\mathcal{M}^{-1}|\psi_{0}\rangle|\nonumber \\
 &=|\alpha^2_{\mathcal{M}^{-1}}\|\mathcal{F}\mathcal{L}|\psi_0\rangle\|^2-\|\mathcal{F}\mathcal{M}^{-1}|\psi_0\rangle\|^2|\nonumber \\
 & \leq \|\alpha^2_{\mathcal{M}^{-1}}\mathcal{F}\mathcal{L}|\psi_0\rangle-\mathcal{F}\mathcal{M}^{-1}|\psi_0\rangle\|^2 \nonumber \\
 & \leq \|\mathcal{F}\|^2 \|\alpha^2_{\mathcal{M}^{-1}}\mathcal{L}|\psi_0\rangle-\mathcal{M}^{-1}|\psi_0\rangle\|^2 \nonumber \\
 & \leq \|\mathcal{F}\|^2 \delta^2= \delta^2
\end{align}
where in the second line we used $\mathcal{G}=\mathcal{F}^{\dagger}\mathcal{F}$ since $\mathcal{G}$ is positive semi-definite. In the last line we used $\|\mathcal{F}\|=\|\mathcal{G}\|=1$. This means that the total error in $\Upsilon$ can be written as  
\begin{align} \label{eq:firstuest}
    |\alpha^2_{\mathcal{M}^{-1}}\tilde{u}-\Upsilon| \leq |\alpha^2_{\mathcal{M}^{-1}}\tilde{u}-\alpha^2_{\mathcal{M}^{-1}}u|+|\alpha^2_{\mathcal{M}^{-1}}u-\Upsilon|\leq \alpha^2_{\mathcal{M}^{-1}}\epsilon_u+\delta^2 \leq \epsilon' 
\end{align}
 Then a choice of $\alpha^2_{\mathcal{M}^{-1}}\epsilon_u \sim \delta^2 \sim \epsilon'/2$ is sufficient. Since $\alpha_{\mathcal{M}^{-1}}=4 \theta \kappa/\|\mathcal{M}\|$ one can choose
\begin{align} \label{eq:deltaepsilon'}
    & \delta=\sqrt{\epsilon'/2} \nonumber \\
    & \epsilon_u=\epsilon'\|\mathcal{M}\|/(32 \theta^2 \kappa^2).
\end{align}
For instance, it is possible to set $\theta^2=33/32$ and let $\epsilon_u=\epsilon'\|\mathcal{M}\|/(33 \kappa^2)$. 
Using Lemma~\ref{lem:finiteerrorm}, inserting Eq.~\eqref{eq:deltaepsilon'} and ignoring all constants except for $\|\mathcal{M}\|=O(1)$ for convenience, it is sufficient for the amplitude estimation algorithm to make $\mathcal{O}(1/\epsilon_u)\sim \mathcal{O}(\kappa^2/(\|\mathcal{M}\|\epsilon'))$ queries to $U_{\mathcal{G}'}$ and  $U_{initial}$. To query $U_{\mathcal{M}}$, one must multiply the amplitude estimation cost with the query cost for $\mathcal{M} \rightarrow \mathcal{M}^{-1}$, hence a total $\mathcal{O}(\alpha_{\mathcal{M}} \kappa \log(\kappa/\delta)/\epsilon_u)\sim \mathcal{O}(\alpha_{\mathcal{M}}\kappa^3 \log(\kappa^2/(\epsilon')/(\|\mathcal{M}\|\epsilon'))$ queries to $U_{\mathcal{M}}$, and $\mathcal{O}(r/\epsilon_u)\sim \mathcal{O}(\kappa^2(m+n_{\mathcal{G}}+1+n_{\mathcal{M}^{-1}})/(\|\mathcal{M}\|\epsilon'))$ additional $2$-qubit gates. \\

Since we are given the block encoding to $\mathcal{G}$ rather than $\mathcal{G}'$, Lemma~\ref{lem:g0} requires only one query to $U_{\mathcal{G}}$ and $\mathcal{O}(n^2_{\mathcal{M}})\sim \mathcal{O}(1)$ additional $2$-qubit gates, since $n_{\mathcal{M}}=2$ from Lemma~\ref{lem:sparsetoblock}. \\

From Lemmas~\ref{lem:newinvm} and ~\ref{lem:finiteerrorm}, to construct block access to $\mathcal{M}^{-1}$ from block access to $\mathcal{M}$ requires an additional $\mathcal{O}(n_{\mathcal{M}}\kappa\alpha_{\mathcal{M}}\log (\kappa^3/\epsilon'))$ $2$-qubit gates. Although from the proof of Lemma~\ref{lem:finiteerrorm} one sees that $O(\log^{2.5}(s\|\mathcal{M}\|_{max}\kappa^4/\epsilon'))$ additional 2-qubit gates are required to construct block access from sparse access to $\mathcal{M}$, these are all logarithmic factors, which we will ignore in the final expression.\\

Putting all these results together with $\alpha_{\mathcal{M}}=s\|\mathcal{M}\|_{max}$, $n_{\mathcal{M}^{-1}}=n_{\mathcal{M}}+2$ and ignoring all constants except $\|\mathcal{M}\|$, one finds that to approximate $\Upsilon$ to precision $\epsilon'$, one needs $\mathcal{O}(\kappa^2/(\|\mathcal{M}\|\epsilon'))$ queries to $U_{\mathcal{G}}$ and $U_{initial}$,  $\mathcal{O}(s\|\mathcal{M}\|_{max}\kappa^3\log(\kappa^2/\epsilon')/(\|\mathcal{M}\|\epsilon'))$ queries to sparse oracles for $\mathcal{M}$,   and $\mathcal{O}((\kappa^2/(\|\mathcal{M}\|\epsilon'))(m+n_{\mathcal{G}}+1+n_{\mathcal{M}}+2+s\|\mathcal{M}\|_{max}n_{\mathcal{M}}\kappa\log(\kappa^2/(\|\mathcal{M}\|\epsilon'))))$ additional $2$-qubit gates. Since $n_{\mathcal{M}}=2$, $n_{\mathcal{G}}=2m+1$ for our scenario, this leads to $\mathcal{O}((\kappa^2/(\|\mathcal{M}\|\epsilon'))(3m+6+2s\|\mathcal{M}\|_{max}\kappa\log(\kappa^2/(\|\mathcal{M}\|\epsilon'))))\sim \mathcal{O}((\kappa^2/(\|\mathcal{M}\|\epsilon'))(m+s\|\mathcal{M}\|_{max}\kappa\log(\kappa^2/(\|\mathcal{M}\|\epsilon'))))$ where we ignore the constant factor terms.

\section{Proof of Theorem~
\ref{thm:qquery}} \label{app:theoremhj}

To estimate $\langle G(t_n,x)\rangle$ to precision $\epsilon$, Lemma~\ref{lem:3errors} requires the estimation of $\sqrt{\Upsilon}$ to precision $\epsilon_G \sim \epsilon/(n_Gn_{\psi_0})$. Using the relation $\Delta (x^2)=2x \Delta x$, where $\Delta x$ is the error in $x$ and $\Delta (x^2)$ is the error in $x^2$, taking $x=\sqrt{\Upsilon}$, one finds that the additive
error $\epsilon'$ in $\Upsilon$ has size $\epsilon'\sim \sqrt{\Upsilon}\epsilon_G$. To identify how $\sqrt{\Upsilon}$ would scale with $N$, from Eq.~\eqref{eq:gequation} one observes that $O(1)=G^n_{\omega}(\vect{j})$ implies $\sqrt{\Upsilon}\sim 1/(n_Gn_{\psi_0})$ which gives $\epsilon'\sim \epsilon_G/(n_G n_{\psi_0}) \sim \epsilon/(n_G n_{\psi_0})^2\sim \epsilon/ n_{\psi_0}^2$ when we suppress the $n_G=O(1)$ factor.  \\

Then following the quantum algorithm outlined in Lemma~\ref{lem:bslep2} with $2^m=N_t N^{2d}$, we include $s=O(d)$ and $\kappa\leq O(N_t)$ from Lemma~\ref{lem:skappa}. Since $N_t=T/\Delta t$, where from the stability condition we have $\Delta t \sim 1/(Nd)$, this gives $\kappa \leq O(T Nd)$. Inserting this and $\epsilon'\sim \epsilon/n_{\psi_0}^2$ into Lemma~\ref{lem:bslep2}, one directly finds that one needs to make $\mathcal{O}(T^2 d^2N^2n^2_{\psi_0}/\epsilon)$ queries to $L(n, \vect{j})$ and $U_{initial}$ and $\mathcal{O}((n_{\psi_0}^2/\epsilon)d^4N^3T^3 \log (d^2N^2T^2 n^2_{\psi_0}/\epsilon))$ queries to sparse oracles for $\mathcal{M}$. One also needs an additional $\mathcal{O}\left((n_{\psi_0}^2d^2N^2T^2/\epsilon)(\log(TNd)+d\log(N)+d^2N T \log(T^2d^2N^2n_{\psi_0}^2/\epsilon)\right)$ two-qubit gates. The largest of these terms is $\mathcal{O}\left((n^2_{\psi_0}d^4N^3T^3/\epsilon)\log(d^2N^2T^2n^2_{\psi_0}/\epsilon)\right)$. \\

To express this entirely in terms of the natural parameters $\epsilon$, $T$, $M$ and $d$ only, we note that in the quantum algorithm we require $\epsilon\sim \epsilon_{\text{CL}}$ from Lemma~\ref{lem:3errors} and here $N=N_{\text{CL}}$, so from Lemma~\ref{lem:liouville} we have $N\sim d/\epsilon^3$. This easily gives us $\mathcal{Q}= \mathcal{O}(n^2_{\psi_0}T^3d^7(1/\epsilon)^{10}\log(n^2_{\psi_0}T^2d^4(1/\epsilon)^{7})$.\\

The above analysis is only for computing $\Upsilon$ to the required accuracy to estimate $\langle G(t_n, x)\rangle$. However, to obtain our observable it is insufficient to compute $\Upsilon$ alone since the normalisation constants need to be considered to derive $\langle G^n_{\omega}(\vect{j})\rangle$ from Eq.~\eqref{eq:gequation}. The normalisation constant $n_{\mathcal{G}}$ can be computed easily on a classical device since the function $G(p)$ is known and is generally a simple polynomial. Since the initial data is also known, it is also sufficient to use a classical device to first compute $n_{\psi_0}$. Alternatively, if one doesn't wish to compute $n_{\psi_0}$, one can instead estimate the observable in Definition~\ref{def:levelsetobservable}, where $G_O(t_n, x) \equiv \langle G(t_n, x)\rangle /\langle G'(t_n, x) \rangle$ with $G'(p)\equiv 1$, where $n_{G'}=1$ exactly. This means we estimate
\begin{align}
    G_O(t_n,x)\equiv \frac{\langle G^n_{\omega}(\vect{j})\rangle}{\mathbf{1}^n_{\omega}(\vect{j})}=n_G\frac{|\sqrt{\Upsilon}|}{|\sqrt{\Upsilon_+}|}
\end{align}
where $\Upsilon_+ \equiv \sqrt{\langle \psi_{0}|(\mathcal{M}^{-1})^{\dagger}\mathcal{H}\mathcal{M}^{-1}|\psi_{0}\rangle}$ and $\mathcal{H}=|H\rangle\langle H|$ and $|H\rangle$ is the state with equal superposition across the $\vect{l}$ basis and can be easily created by applying Hadamard gates on $|0\rangle$. Since $G_O(t_n,x)=O(1)$ just like $\langle G(t_n,x)\rangle=O(1)$, the error  $|G_O(t_n,x)-\tilde{G}_O(t_n,x)|$ is also $O(\epsilon)$.

\section{Basic summary of quantum algorithm to estimate $\Upsilon$} \label{app:algorithmsteps}

Here we present a condensed version of the main steps of the quantum algorithm in Theorems~\ref{thm:qquery} and \ref{thm:qqueryode} (where we use $\mathcal{M}=\mathcal{M}_{\text{ODE}}$). For query and gate complexity at each step refer to the lemmas referenced:
\begin{enumerate}
    \item (Step 0) Inputs: sparse access $(s, \|\mathcal{M}\|_{max}, O_{\mathcal{M}}, O_F)$ to $m$-qubit operator $\mathcal{M}$, access to $U_{initial}$ and access to unitary $L(n, \vect{j})$;
    
    \item (Step 1) Using sparse access to $\mathcal{M}$ to construct block access $(s\|\mathcal{M}\|_{max}, 2, \delta_{\mathcal{M}}, U_{\mathcal{M}})$ to $\mathcal{M}$, using Lemma~\ref{lem:sparsetoblock}, so $\alpha_{\mathcal{M}}=s\|\mathcal{M}\|_{max}$ and $n_{\mathcal{M}}=2$;
    
    \item (Step 2) Using block access to $(s\|\mathcal{M}\|_{max}, 2, \delta_{\mathcal{M}}, U_{\mathcal{M}})$ to $\mathcal{M}$ to construct block access $(4\theta\kappa/\|\mathcal{M}\|, 4, \delta, U_{\mathcal{M}^{-1}})$ to $\mathcal{M}^{-1}$ where $\alpha_{\mathcal{M}^{-1}}=4\theta\kappa/\|\mathcal{M}\|$, $n_{\mathcal{M}^{-1}}=n_{\mathcal{M}}+2=4$, from Lemma~\ref{lem:finiteerrorm};
    
    \item (Step 3) Starting from access to unitary $L(n,\vect{j})$ can create block access $(1,n_{\mathcal{G}}, 0, U_{\mathcal{G}})$ to $\mathcal{G}$ with construction given in Lemma~\ref{lem:gconstruction} with choice $n_{\mathcal{G}}=2m+1$;
    
    \item (Step 4) Using block access $(1,n_{\mathcal{G}}, 0, U_{\mathcal{G}})$ to $\mathcal{G}$ to construct block access $(1, 2m+2, 0, U_{\mathcal{G}'})$ to $\mathcal{G}'=|0^4\rangle \langle 0^4|\otimes \mathcal{G}$, where $\alpha_{\mathcal{G}'}=1$, $n_{\mathcal{G}'}=n_{\mathcal{G}}+1=2m+2$, from Lemma~\ref{lem:g0};
    
    \item (Step 5) Given $U_{\mathcal{M}^{-1}}$ and $U_{\mathcal{G}'}$, and also assume access to their adjoints and their respective controlled- unitaries. From this to construct controlled-$S$ and the corresponding quantum phase estimation circuit, where $S$ is in Eqs.~\eqref{eq:uvdef} and~\eqref{eq:sdef}. Extract $u$;
    
    \item Output: Multiply $u$ by $\alpha^2_{\mathcal{M}^{-1}}$ where $\alpha_{\mathcal{M}^{-1}}=4\kappa \theta/\|\mathcal{M}\|$. This estimates $\Upsilon$ to precision $\epsilon'$.
\end{enumerate}
\begin{figure}[t] \label{fig:Algorithmflow}
\includegraphics[width=15cm]{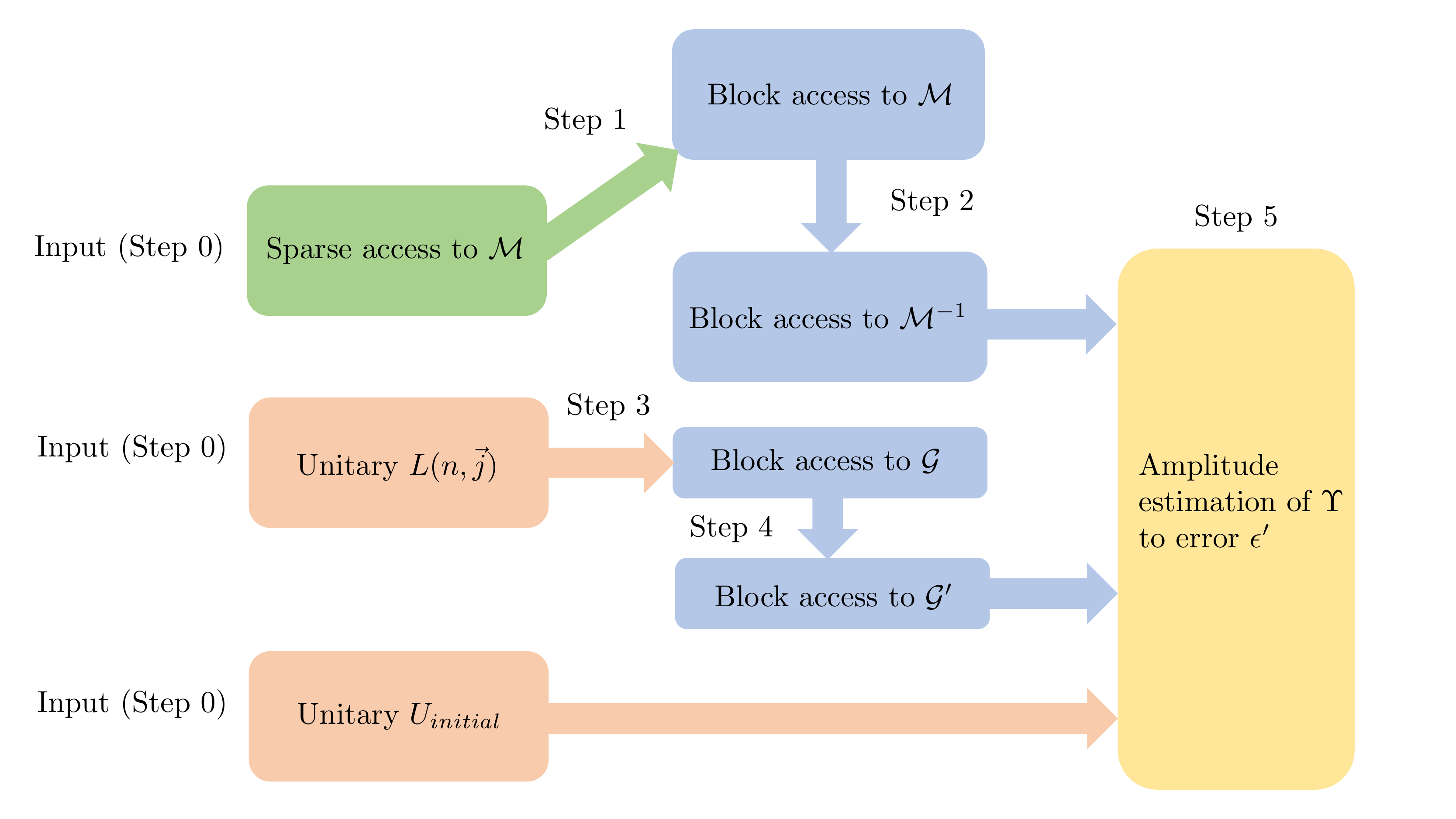}
\centering
\caption{Summary of algorithm to estimate $\Upsilon$ to error $\epsilon'$}
\end{figure}

\section{Proof of Lemma~\ref{weaksoln-proof}} \label{app:odejustification}

  Let $\chi(q) $ be a smooth test function, and consider
 \[
 \int_{\mathbb{R}^d} \chi(q) \Phi(t, q) dq
 = \frac{1}{M}\sum_{k=1}^M \chi(X^{[k]}(t)).
 \]
 Taking the time derivative of this equation and using (\ref{eq:upde}), one deduces
 \begin{eqnarray}\nonumber
&&\quad  \int_{\mathbb{R}^d} \chi(q) \partial_t \Phi(t, q) dq
 = \frac{1}{M}\sum_{k=1}^M \partial_t \chi(X^{[k]})
 = \frac{1}{M}\sum_{k=1}^M  \nabla_q  \chi(X^{[k]}) \cdot \partial_t
 X^{[k]}(t) 
 = \frac{1}{M}\sum_{k=1}^M  \nabla_q  \chi(X^{[k]}) \cdot F(X^{[k]})  \\ 
&&=  \frac{1}{M} \int_{\mathbb{R}^d} \nabla_q  \chi(q) \cdot F(q)
\, \sum_{k=1}^M  \delta(q-X^{[k]}(t))\, dq
=  \int_{\mathbb{R}^d}\nabla_q  \chi(q) \cdot F(q) \Phi(t,q)\, dq = -\int_{\mathbb{R}^d} \chi(q) \nabla_q \cdot [F(q) \Phi(t,q)]\, dq. 
 \end{eqnarray}
 The uniqueness of the solution is also classical, see \cite{Rav85}. We omit the details.

\section{Proof of Lemma~\ref{lem:AA}}\label{app:proofobservablea}

Consider the problem 
 \begin{align}  \label{eq:ode-1}
 &   \frac{\partial X(t)}{\partial t}=F( X), \qquad X\in \mathbb{R}^D\,,\\
 & X(s)=x, \quad s\in [0, T].
\end{align}
Let the solution to (\ref{eq:ode-1}) be $X(t; x,s)$. Define the
Jacobian determination of the map from $x$ to $X$ be
\[ 
J(t; x,s)= {\text{det}} \left( \frac{\partial X_i}{\partial x_j}(t; x,s)\right)
\]
then classical result (see for example \cite{Rav85})  shows that
\[
  J(t; x, s)>0,\qquad
  J(t; x,s) = \exp \int_s^t \nabla \cdot F(X(\sigma; x,s)) d\sigma
  \]
 and, by the method of characteristics,
 \[
 \Phi(t, p)=\Phi(0, X(0; p,t))J(0; p,t)
 \]
 Now,
 \begin{eqnarray}\nonumber
 A_p(t) &&=\int_{\mathbb{R}^D} A(p)\Phi(t,p)\, dp
 =\int_{\mathbb{R}^D} A(p)\Phi(0, X(0; p,t))J(0; p,t)\, dp\\
\nonumber && = \int_{\mathbb{R}^D} A(p)\Phi(0, X(0; p,t))J(0; p,t)\, dp
  = \frac{1}{M} \sum_{k=1}^M  \int_{\mathbb{R}^D} A(p)\delta(X(0; p,t)-X_0^{[k]})J(0; p,t)\, dp\\
\nonumber &&= \frac{1}{M} \sum_{k=1}^M  \int_{\mathbb{R}^D} A(p)J(0; p,t)^{-1}\delta(p-X(t;X_0^{[k]},0))J(0; p,t)\, dp \\
\nonumber &&= \frac{1}{M} \sum_{k=1}^M  \int_{\mathbb{R}^D} A(p)\delta(p-X(t;X_0^{[k]},0))\, dp 
  = \frac{1}{M} \sum_{k=1}^M   A(X(t;X_0^{[k]},0))=A_0(t)\,.
\end{eqnarray}
Note here, by the definition in (\ref{eq:ode-1}), $X(t;X_0^{[k]},0)$ is the solution to (\ref{eq:upde}), hence the last equality holds.

\section{Discretised System of ODEs} \label{app:discretisationode}

We now discretise the linear PDE in Eq.~\eqref{eq:pdephi} by  finite difference schemes. As an example we use the upwind scheme, which takes the following form:
\begin{align}
    & \frac{\partial \Phi(t_n, q)}{\partial t} \rightarrow \frac{\Phi_{n+1, \vect{j}}-\Phi_{n, \vect{j}}}{\Delta t} \nonumber \\
   & \frac{\partial}{\partial q_i} (F_i(q) \Phi(t_n, q)) \rightarrow \frac{1}{h}\left[(F^-_i(q_{i+1/2})T^+_i\Phi_n)_{\vect{j}}-(F^+_i(q_{i-1/2})T^-_i\Phi_{n, \vect{j}}
    +[(F^+_i(q_{j+1/2})-F^-_i(q_{j-1/2}))\Phi_n]_{\vect{j}}\right]\nonumber
\end{align}
Here $\vect{j} \equiv (j_1,...,j_D)$, $j_i=1,...,N$ for $i=1,...,D$, $n=1,...,N_t$, $t_n=n \Delta t$ and $q_i= j_i h$, where $q_i$ are the $i^{th}$ components of the vector $q$. In addition, $T^{\pm}_i\Phi_{n, \vect{j}}=\Phi_{n, j_1,...,j_{i \pm 1},...,j_d}$, $(F_i(q_{i+1/2}))_{\vect{j}}=\frac{1}{2}[(F_i(q_{j_1,...,j_{i+1},...,j_d})+F_i(q_{j_1,...,j_{i},...,j_d})]$, and $F^+(q)=\max(F(q), 0)$ and $F^-(q)=\min(F(q), 0)$. \\

Define $\lambda = \Delta t/h$. We require 
\begin{equation}\label{CFL}
     D\lambda = D\frac{\Delta t}{h} \le 1
\end{equation}
for numerical stability. 

Then the discretised version of Eq.~\eqref{eq:pdephi} can be rewritten as 
\begin{align}
    \Phi_{n+1, \vect{j}}+\lambda \sum_{i=1}^D \left(F^-_i(q_{i+1/2})_{\vect{j}}T^+_i\Phi_{n, \vect{j}}- F^+_i(q_{i-1/2})_{\vect{j}}T^-_i\Phi_{n, \vect{j}}\right)
    +[1+\lambda \sum_{i=1}^D(F^+_i(q_{j+1/2})-F^-_i(q_{j-1/2}))_{\vect{j}}]\Phi_{n, \vect{j}}=0
\end{align}
with the initial condition (for $n=0$)
\begin{align} \label{eq:phiinitial}
    \Phi_{0, \vect{j}}= \frac{1}{M}\sum_{k=1}^M  \Pi_{i=1}^D\delta_\omega(j_ih-(X_0^{[k]})_i).
\end{align}
Then the discretised PDEs can be written as a matrix equation
\begin{align} \label{eq:Phimatrix}
    \mathcal{K}\begin{pmatrix}
    \Phi_{1, \vect{j}} \\
    \Phi_{2, \vect{j}} \\
    \vdots \\
    \Phi_{N_t-1, \vect{j}} \\
    \Phi_{N_t, \vect{j}}
    \end{pmatrix}=\begin{pmatrix}
    \Phi_{0, \vect{j}} \\
    0 \\
    \vdots \\
    0 \\
    0
    \end{pmatrix}
\end{align}
and $\mathcal{K}_{\text{ODE}}$ is a $(N_t N^D) \times (N_t N^D)$ Toeplitz matrix of the form
\begin{align} \label{eq:kmatrix}
    \mathcal{K}_{\text{ODE}}=\begin{pmatrix}
    I & 0 & 0 & \hdots & 0 & 0  \\
     K & I & 0 & \hdots & 0 & 0  \\
    \vdots & \vdots & \vdots & \vdots & \vdots & \vdots  \\
    0 & 0 & 0 & \hdots &  K & I  
    \end{pmatrix}
\end{align}
where each $I$ is the $(N^D)\times (N^D)$ identity matrix and $K$ is the $(N^D)\times (N^D)$ matrix 
\begin{align}
    K=-I+ \lambda \sum_{i=1}^D \left( F^-_i(q_{i+1/2})_{\vect{j}}T^+_i- F^+_i(q_{i-1/2})_{\vect{j}}T^-_i
    + (F^+_i(q_{j+1/2})-F^-_i(q_{j-1/2}))_{\vect{j}}I\right).
\end{align}
 We can then solve for $\Phi_{n, \vect{j}}$ by matrix inversion
\begin{align} \label{eq:matrixinv}
    \begin{pmatrix}
    \Phi_{1, \vect{j}} \\
    \Phi_{2,\vect{j}} \\
    \vdots \\
    \Phi_{N_t-1, \vect{j}} \\
    \Phi_{N_t, \vect{j}}
    \end{pmatrix}=\mathcal{K}^{-1}_{\text{ODE}} \begin{pmatrix}
    \Phi_{0,\vect{j}} \\
    0 \\
    \vdots \\
    0 \\
    0
    \end{pmatrix}
\end{align}
Then given the discretised solutions $\Phi_{n, \vect{j}}$, we can estimate the ensemble average by  
\begin{align} \label{eq:discreteensemble}
    \langle A_\omega^n \rangle = h^D \sum_{\vect{|j|=1}}^{N} A_{\vect{j}} \Phi_{n, \vect{j}} 
\end{align}
where $A_{\vect{j}}$ is the discretisation of $A(q)$, i.e. $G(q) \rightarrow A(\vect{j} h) \equiv A_{\vect{j}}$. From Section~\ref{sec:classicalerrorbounds}, we see that we also use $\langle A^{\omega}_n \rangle$ to estimate $A_O(t_n)$ with error 
\begin{align} \label{eq:acerror}
    \epsilon_c \equiv |A_o(t_n)-\langle A^{\omega}_n \rangle| \leq C(Dh)^{1/3})
\end{align}
for $\omega=(Dh)^{1/3}$.

{\bf Remark:}  The stability condition in (\ref{CFL}) requires $\Delta t$ to be proportional to $1/d$. This is the issue for
an explicit scheme. This means $N_t$, and consequently the size of matrix $\mathcal{K}$, will be $O(d)$ times larger than a scheme in which $\Delta t$ is independent of $d$.
An implicit scheme for \eqref{eq:pdephi}--which is usually not needed in classical algorithms for a transport equation-- will remove such an dependence. Since the HHL algorithm is based on inverting the matrix $\mathcal{K}$, which is basically implicit anyway,
one can starts with an implicit scheme for equation \eqref{eq:pdephi} \cite{JLY22}. Since the HHL algorithm just  depends on logarithmic of
the matrix size, an $d$ times larger matrix size will not increase significantly the computational cost, hence we do not
explore this issue and leave it for a future work.
\\

Just like in previous analyses, since our aim is to approximate the observable $A_O(t_n)$ by using quantum algorithms that require a matrix inversion subroutine $\mathcal{M}^{-1}\vec y$ where $\mathcal{M}$ is Hermitian, we can define a new Hermitian matrix 
\begin{align}
    \mathcal{M}_{\text{ODE}}=\begin{pmatrix} 0 & \mathcal{K}_{\text{ODE}} \\
    \mathcal{K}_{\text{ODE}}^{\dagger} & 0 
    \end{pmatrix}
\end{align}
which has the same sparsity and condition number as $\mathcal{K}_{\text{ODE}}$. \\

The condition number of $\mathcal{M}_{\text{ODE}}$ is $\kappa \lesssim D NT$ and sparsity is $s \sim D$, from
Appendix~\ref{app:skappaproof} with the replacement $d \rightarrow D/2$.

\bibliography{Nonlinear}

\end{document}